\documentclass[11pt]{article}
\usepackage[margin=1in]{geometry}
\usepackage{amsmath,amssymb,amsthm,mathtools}
\usepackage{xcolor}
\usepackage{hyperref}
\usepackage{authblk}
\hypersetup{colorlinks=true,linkcolor=blue!45!black,citecolor=blue!45!black,urlcolor=blue!45!black}
\usepackage[backend=biber,style=numeric]{biblatex}
\newtheorem{theorem}{Theorem}[section]
\newtheorem{lemma}[theorem]{Lemma}
\newtheorem{proposition}[theorem]{Proposition}
\newtheorem{corollary}[theorem]{Corollary}
\newtheorem{definition}[theorem]{Definition}
\newtheorem{remark}[theorem]{Remark}

\providecommand{\ket}[1]{| #1 \rangle}
\providecommand{\bra}[1]{\langle #1 |}
\providecommand{\braket}[2]{\langle #1 | #2 \rangle}
\providecommand{\N}{\mathcal{N}}
\providecommand{\QMA}{\mathrm{QMA}}
\providecommand{\QMAone}{\mathrm{QMA}_1}
\providecommand{\Fcal}{\mathcal{F}}
\providecommand{\Id}{\mathbb{I}}

\title{\bf Zero-Energy Problems for Supersymmetric Hamiltonians on a Chain Are $\QMAone$-Complete}

\author{Yibin Wang}
\affil[1]{Graduate School of Mathematics, Nagoya University, Nagoya, 464-8601, Aichi, Japan.\\
\texttt{yibinw0210@gmail.com}}

\date{}

\begin{document}
\maketitle

\begin{abstract}\noindent
We study exact zero modes of supersymmetric quantum systems whose interactions are arranged on a
one-dimensional chain.  For Hermitian supercharges, deciding whether a zero mode exists is
$\QMAone$-complete, and the hard instances are geometrically local
on a chain.  We also classify an explicitly encoded nilpotent $\N=2$ formulation: its exact-zero
problem is $\QMAone$-complete, including the restriction in which the associated Hamiltonian is
local on a chain.  Both classifications use a verification theorem for simultaneous sparse
linear constraints and the same one-dimensional frustration-free Hamiltonian construction.  The
construction gives an explicit inverse-polynomial lower bound on the ground energy of every
reduced NO instance, separating it from zero.
\end{abstract}

\clearpage
\tableofcontents
\clearpage

\section{Introduction}
\label{sec:intro}

Supersymmetric quantum mechanics gives zero energy a special algebraic meaning.  In the minimal
formulation, a Hermitian supercharge $Q$ defines $H=Q^2$, so a supersymmetric ground state is
exactly a nonzero vector annihilated by $Q$.  In the extended formulation, a nilpotent
supercharge $\mathcal Q$ defines $H=\{\mathcal Q,\mathcal Q^\dagger\}$, and a zero mode is
annihilated by both $\mathcal Q$ and $\mathcal Q^\dagger$.  The connection between
supersymmetric ground states and cohomology goes back to Witten's use of supersymmetric quantum
mechanics in Morse theory~\cite{Witten1982}.

From a computational viewpoint, exact zero energy differs from ordinary ground-energy
estimation.  The Local Hamiltonian problem distinguishes two separated energy
ranges~\cite{Kitaev2002}.  Quantum satisfiability asks whether all local projector constraints
can be met exactly~\cite{Bravyi2011}; already its three-local version is
$\QMAone$-complete~\cite{Gosset2013}.  Frustration-free structure can also permit spectral-gap
amplification that is impossible for general Hamiltonians~\cite{SommaBoixo2013}.  The
zero-energy regime therefore has its own computational structure.

Geometric locality imposes a second restriction.  The usual $k$-locality condition limits how
many degrees of freedom a term may touch, whereas geometric locality also requires those degrees
of freedom to be nearby.  The Local Hamiltonian problem remains $\QMA$-complete on a
two-dimensional square lattice~\cite{Oliveira2008} and on a one-dimensional nearest-neighbour
chain when no constant gap is promised~\cite{Aharonov2011}.  For constant-gap chains, the area
law gives matrix-product-state approximations~\cite{Hastings2007}, and a ground state can be
approximated in randomized polynomial time~\cite{LandauVaziraniVidick2015}.  One dimension therefore
contains both tractable and hard regimes.

Exact quantum satisfiability is the closest one-dimensional precedent.  Nagaj proved
$\QMAone$-completeness on a nearest-neighbour chain~\cite{Nagaj2008}.  Later work found
additional $\QMAone$-complete quantum-satisfiability families, including systems with small local
state spaces and systems arranged on a line~\cite{Rudolph2025}.  These results place exact
frustration-free hardness on a chain, but their Hamiltonians do not have the supersymmetric form
considered here.

Cade and Crichigno proved that supersymmetric local-Hamiltonian problems are hard when locality
limits term size without imposing a lattice geometry~\cite{Cade2024}.  They asked whether this
hardness survives geometric locality.  We treat the exact-zero form of that question on a chain,
for both Hermitian and explicitly encoded nilpotent supercharges.

\paragraph{Contributions.}

The following statements compare Cade and Crichigno's $\N=2$ classification without a geometric
restriction with the exact-zero, geometrically local classification proved here.

\begin{theorem}[Cade and Crichigno]
\label{thm:cc-hamiltonian}
For some constant $k$, the general-threshold supersymmetric $k$-local Hamiltonian problem is
$\QMA$-complete~\cite[Thm.~1]{Cade2024}.
\end{theorem}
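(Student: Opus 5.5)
This is a cited result, so the plan is to reconstruct the argument of \cite[Thm.~1]{Cade2024} in two halves: membership in $\QMA$ and $\QMA$-hardness. The problem is posed with an explicitly given nilpotent supercharge $\mathcal Q$, a sum of $k$-local fermionic or qubit terms, with $H=\{\mathcal Q,\mathcal Q^\dagger\}$ and thresholds $a<b$ separated by $b-a\ge 1/\mathrm{poly}(n)$.

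\emph{Membership.} Expanding the anticommutator of two $k$-local sums gives $H$ as a sum of polynomially many $O(k)$-local terms. If fermions are used, a Jordan--Wigner encoding keeps these terms $O(k)$-local up to parity strings. Since $k$ is a constant, Kitaev's phase-estimation verifier for the ordinary Local Hamiltonian problem~\cite{Kitaev2002} then decides the promise with the same thresholds. This half is routine.

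\emph{Hardness.} I would reduce from the standard $\QMA$-complete $k'$-local Hamiltonian problem~\cite{Kitaev2002}. The plan has three steps.
\begin{enumerate}
\item Take a Feynman--Kitaev Hamiltonian $H_0=\sum_i h_i$ with $h_i\succeq 0$. Write each $h_i=A_i^\dagger A_i$, for example with $A_i=\sqrt{h_i}$, which is still $k'$-local.
\item Build a nilpotent supercharge by tensoring with auxiliary fermionic modes $c_i$ and setting $\mathcal Q=\sum_i A_i\otimes c_i^\dagger$. This gives $\mathcal Q^2=\sum_{i,j}A_iA_j\, c_i^\dagger c_j^\dagger$, which is not automatically zero. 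To fix this, use the cohomological trick of choosing $\mathcal Q$ as a Clifford-style differential on a graded space, so that nilpotency holds. Concretely, I would pass to a free-fermion or Koszul-type construction in which each local constraint contributes one fermion and the $A_i$ commute or are made to commute, for instance by a clock/history embedding that makes the relevant terms act on disjoint registers.
\item Compute $H=\{\mathcal Q,\mathcal Q^\dagger\}$. It equals $\sum_i A_i^\dagger A_i\, c_ic_i^\dagger+\sum_i A_iA_i^\dagger\, c_i^\dagger c_i$ plus cross terms. Restricted to the fermion vacuum sector, the first sum reproduces $H_0$.
\end{enumerate}

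The hardness argument then needs two facts. In YES instances the low-energy witness $\ket{\psi}\otimes\ket{0}$ keeps energy close to that of $H_0$. In NO instances the ground energy of the full $H$ in every fermion-number sector stays above $b$.

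The main obstacle is controlling the sectors with nonzero fermion number together with the cross terms $A_i^\dagger A_j c_ic_j^\dagger$. This is where Cade and Crichigno's actual construction does real work. I would first try to choose the $A_i$ as pairwise commuting projectors on a suitably encoded clock, so that the cross terms vanish. If that fails, I would use a penalty that adds a large number-operator term through a second supercharge piece, lifting every nonvacuum sector by at least $b$. The remaining work is bookkeeping: check that all terms stay $k$-local for a constant $k$, and that the promise gap shrinks only polynomially.
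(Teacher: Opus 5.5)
The paper gives no proof of this statement. It quotes \cite[Thm.~1]{Cade2024} only for comparison, so your reconstruction has to stand on its own. Your membership half does stand: the Jordan--Wigner strings are harmless, because each term remains efficiently exponentiable for phase estimation. The hardness half, however, has a genuine gap, exactly at the ``main obstacle'' you flag. For system operators $A_i$ that commute with the auxiliary modes,
\[
\mathcal Q^2=\sum_{i<j}[A_i,A_j]\,c_i^\dagger c_j^\dagger .
\]
So $\mathcal Q$ is nilpotent if and only if the $A_i$ pairwise commute, and none of your concrete repairs supplies this without destroying hardness. With $A_i=\sqrt{h_i}$, commuting $A_i$ means commuting $h_i$, and each option fails:
\begin{itemize}
\item the propagation terms of a Feynman--Kitaev Hamiltonian at consecutive times overlap on the clock and do not commute;
\item commuting local Hamiltonians are not known to be $\QMA$-hard, and the two-local case lies in $\mathrm{NP}$;
\item terms on disjoint registers give a ground energy that is just a sum of local minima.
\end{itemize}
The Koszul alternative (commuting but non-normal $A_i$ with non-commuting $A_i^\dagger A_i$) is stated without any candidate. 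Even with one, the cross terms $\sum_{i\ne j}[A_i,A_j^\dagger]\,c_i^\dagger c_j$ would still have to be bounded in every auxiliary-fermion sector.

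Your fallback of lifting every non-vacuum sector above $b$ conflicts with supersymmetry itself. Suppose $\mathcal Q$ raises the auxiliary fermion number by one, and let $\psi$ be a normalized vacuum-sector eigenvector with energy $0<E<b$. Then $\mathcal Q^\dagger\psi=0$, hence $\lVert\mathcal Q\psi\rVert^2=E\neq0$. Since $\mathcal Q$ commutes with $\{\mathcal Q,\mathcal Q^\dagger\}$, the vector $\mathcal Q\psi$ is an eigenvector with the same energy in the one-fermion sector. So the proposal exhibits no nilpotent local supercharge encoding a $\QMA$-hard Hamiltonian, and proves no NO bound over all sectors.

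The machinery of this paper suggests a way to avoid the commutation obstruction rather than cancel it:
\begin{itemize}
\item With Majorana labels and even $\sqrt{h_j}$ (Remark~\ref{rem:graded}), $Q=\sum_j\sqrt{h_j}\,\Gamma_j$ is Hermitian. Here $Q^2=H_0+\sum_{j<k}[\sqrt{h_j},\sqrt{h_k}]\,\Gamma_j\Gamma_k$ is allowed to differ from $H_0$.
\item Lemma~\ref{lem:spectral} gives $E_0(Q^2)\ge E_0(H_0)^2/W$. Averaging over label states, on which each $\Gamma_j\Gamma_k$ with $j\ne k$ is traceless, gives $E_0(Q^2)\le E_0(H_0)$.
\item Theorem~\ref{thm:two-mode-nilpotentization} then makes $\mathcal Q=Qs$ odd, nilpotent, and still of bounded arity, with $\{\mathcal Q,\mathcal Q^\dagger\}=Q^2\otimes\Id_{ab}$ on the full Fock space. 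No sector-by-sector analysis is needed.
\item Amplifying the verifier before the history construction makes the YES energy exponentially small, far below the NO bound $b^2/W$. This absorbs the quadratic loss.
\end{itemize}
You would still need to check this route against the precise input conventions of \cite{Cade2024}, such as any fermion-number grading they impose.
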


This theorem does not require the YES energy to vanish and places no geometric restriction on the
interactions.  Cade and Crichigno asked whether hardness survives geometric locality.  We answer
that question at zero energy.

\begin{theorem}[this paper]
\label{thm:cmp-zero-energy}
For some constant $k$, the zero-energy problem for $k$-local supersymmetric Hamiltonians is
$\QMAone$-complete in both the Hermitian and nilpotent formulations, even when the Hamiltonian
is geometrically local on a one-dimensional chain.
\end{theorem}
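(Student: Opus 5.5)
The plan has two directions for each formulation: membership in $\QMAone$, and $\QMAone$-hardness that survives on a chain.

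\textbf{Membership.} In the Hermitian case, a zero mode of $H=Q^2$ is exactly a nonzero $\ket\psi$ with $Q\ket\psi=0$. When $Q$ is a sum of $k$-local terms, this is a sparse linear constraint. In the nilpotent case the witness must satisfy both $\mathcal Q\ket\psi=0$ and $\mathcal Q^\dagger\ket\psi=0$, which is a pair of simultaneous sparse linear constraints. I would prove one verification theorem covering both cases. Given a family of efficiently row-computable sparse operators $A_1,\dots,A_m$ with entries in a fixed gate-compatible field, the verifier decides whether some unit $\ket\psi$ satisfies $A_j\ket\psi=0$ for all $j$, with perfect completeness. The promise gap comes from the NO-case lower bound $\sum_j\|A_j\psi\|^2\ge 1/\mathrm{poly}$. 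The verifier picks $j$ at random and performs an exact test of the form ``apply a unitary block-encoding of $A_j/\alpha$ and measure that the flag is $0$''. Perfect completeness needs care: the block-encoding must be implemented exactly with the $\QMAone$ gate set, for example Hadamard and Toffoli with $\mathbb Z[1/\sqrt2]$-valued coefficients. So I would restrict instances to such encodings and prove exact synthesis of the local terms. On a YES witness the flag outcome is then deterministic. On a NO instance it fails with probability at least $\lambda_{\min}(\sum_j A_j^\dagger A_j)/(m\alpha^2)$, and standard $\QMAone$ amplification handles the rest.

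\textbf{Hardness.} I would reduce from Nagaj's nearest-neighbour chain quantum-SAT~\cite{Nagaj2008}, or directly from a $\QMAone$ verifier via a one-dimensional frustration-free clock construction. The target is a chain Hamiltonian $H=\sum_i P_i$ whose terms $P_i$ are projectors with exactly representable entries. For the Hermitian formulation, I would attach one ancilla qubit, or fermion mode, per term. I would then set $Q=\sum_i (\sigma^x_{a_i}\otimes P_i)$ with the ancillas interleaved along the chain. The goal is that $Q^2=\sum_i P_i+$cross terms, with the cross terms cancelling. A cleaner alternative is Majorana/Clifford-type ancillas $\gamma_i$ with $\{\gamma_i,\gamma_j\}=2\delta_{ij}$ and $Q=\sum_i\gamma_i P_i$. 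Then $Q^2=\sum_i P_i+\sum_{i<j}\gamma_i\gamma_j[P_i,P_j]$. The key point is that $\ker Q=\bigcap_i\ker P_i$ directly, since the images of the $\gamma_iP_i$ lie in orthogonal ancilla sectors if the ancillas are encoded as qubits $\ket{0}\to\ket{i}$ flags. So the zero-mode question needs no analysis of $Q^2$. Geometric locality of $Q^2$ then holds because $P_i$ and $P_j$ commute unless they overlap, and overlapping terms are adjacent on the chain. For the NO bound I would use $\|Q\psi\|^2=\sum_i\|P_i\psi\|^2$ when the ancilla flags are orthogonal, together with the clock-construction bound $\lambda_{\min}(\sum P_i)\ge 1/\mathrm{poly}$.

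For the nilpotent formulation, I would take $\mathcal Q=\sum_i \ket{1}\!\bra{0}_{a_i}\otimes P_i$ in a fermionic (Jordan--Wigner) encoding, so that $\mathcal Q^2=0$ follows from anticommutation. A zero mode satisfies both $\mathcal Q\psi=0$ and $\mathcal Q^\dagger\psi=0$. The second condition is automatic in the all-ancilla-empty sector, so zero modes again correspond to common kernels.

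\textbf{Main obstacle.} I expect the main difficulty to be keeping everything geometrically local on the chain while preserving exactness and the NO gap. Jordan--Wigner strings make $\{\mathcal Q,\mathcal Q^\dagger\}$ nonlocal unless the ancillas sit adjacent to their terms and the strings cancel in the products. Spurious zero modes in other ancilla sectors must also be excluded. My first attempt would be to interleave ancillas with system sites, so each string spans $O(1)$ sites. Failing that, I would use hard-core bosonic flags with a penalty that enforces at most one excitation. I would then bound the ground energy in each ancilla sector separately by $\lambda_{\min}(\sum_i P_i)$, which gives the explicit inverse-polynomial bound for NO instances.
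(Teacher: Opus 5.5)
There are two genuine gaps, both in the hardness direction. The more serious one is that your nilpotent construction $\mathcal Q=\sum_i f_i^\dagger P_i$ is not nilpotent. The $P_i$ must be parity even anyway; otherwise $\gamma_iP_i$ is not Hermitian and the cross terms are not local, and the paper secures evenness with an even-weight qudit encoding before Jordan--Wigner. Evenness makes each $f_i^\dagger$ commute with every $P_j$, and therefore
\[
\mathcal Q^2=\sum_{i,j}f_i^\dagger f_j^\dagger P_iP_j=\sum_{i<j}f_i^\dagger f_j^\dagger\,[P_i,P_j]\neq0
\]
as soon as two overlapping history projectors fail to commute, which is unavoidable in a clock construction. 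Hard-core bosonic flags replace the commutator by $\{P_i,P_j\}$ and fail as well. Your output therefore violates the promise $\mathcal Q^2=0$, and the identification of zero modes with common kernels has nothing to rest on.

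The paper never builds a local nilpotent supercharge. It takes the Hermitian chain supercharge $Q_{\min}=\sum_c\Pi_c\Gamma_c$ and sets $\mathcal Q=Q_{\min}s$, where $s=ba+a^\dagger b$ acts on two fresh modes (Theorem~\ref{thm:two-mode-nilpotentization}). Since $s$ is even, $s^2=0$, and $ss^\dagger+s^\dagger s=\Id$, one gets $\mathcal Q^2=Q_{\min}^2s^2=0$ and $\{\mathcal Q,\mathcal Q^\dagger\}=Q_{\min}^2\otimes\Id_{ab}$. The zero modes, the NO gap, and the chain locality of the Hamiltonian then all transfer from the Hermitian case. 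The price is that every term shares $a,b$, so the supercharge itself is not geometrically local. That is why the theorem asserts locality only of the Hamiltonian; a chain-local nilpotent supercharge is left open.

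The second gap is the Hermitian NO bound. Your Majorana choice $Q=\sum_i\gamma_iP_i$ is exactly the paper's construction, but your analysis of it is wrong. The identity $\lVert Q\psi\rVert^2=\sum_i\lVert P_i\psi\rVert^2$ holds only when the ancillas sit in one fixed reference configuration, whereas zero modes and the NO gap concern the whole space. For superpositions of ancilla configurations, $\langle\gamma_iP_i\psi,\gamma_jP_j\psi\rangle\ne0$. The two-projector family of Proposition~\ref{prop:quad-tight} has $E_0(Q^2)=2E_0(H_0)^2+O(E_0(H_0)^3)$, so your linear bound is false.

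The flag variants are worse. With a one-hot register, $Q(\sum_i\ket i\otimes\chi_i)=\ket0\otimes\sum_iP_i\chi_i$ vanishes on every tuple with $\sum_iP_i\chi_i=0$, so every instance has zero modes. With commuting $\sigma^x$ flags, each $\sigma^x$ eigensector carries the indefinite signed sum $\sum_is_iP_i$, which has no reason to be gapped away from zero. No penalty can repair this, because $H$ must equal $Q^2$ exactly.

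The missing step is the anticommutator argument of Lemma~\ref{lem:spectral}:
\begin{itemize}
\item For unit $c\in\mathbb R^M$, $\Gamma_c=\sum_jc_j\gamma_j$ is a Hermitian unitary with $\{Q,\Gamma_c\}=2\sum_jc_jP_j$.
\item Hence $\sum_jc_j\langle P_j\rangle_\psi\le\lVert Q\psi\rVert$.
\item Optimizing over $c$ gives $\langle H_0\rangle_\psi\le\sqrt M\,\lVert Q\psi\rVert$.
\end{itemize}
This yields both $\ker Q=\ker H_0\otimes\Fcal_{anc}$ and $E_0(Q^2)\ge E_0(H_0)^2/M$, which is still inverse polynomial.

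Your membership sketch is a different route from the paper's, not a gap as such. The paper compiles $Q$ into an exact sparse relation, completes it Hermitianly, and invokes Rudolph's exact nullspace verifier. Your block-encoding verifier works only for an input model whose terms carry exact unitary dilations. It must also block-encode the global sum rather than test terms separately, since $Q\psi=0$ does not force $Q_a\psi=0$ (Remark~\ref{rem:intrinsic}).
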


The locality promise differs between the two formulations.  In the Hermitian formulation, the
listed supercharge terms have bounded chain support.  In the nilpotent formulation, chain locality
is imposed only on $\Delta=\{\mathcal Q,\mathcal Q^\dagger\}$; $\mathcal Q$ and its individual
terms retain the general geometry allowed by the input model.

The comparison isolates two advances: the decision threshold is exactly zero, giving a
perfect-completeness classification, and the hard Hamiltonians obey one-dimensional geometry.  A
common verification theorem supplies the upper bound for both formulations.  The Hermitian
construction also shows that its quadratic spectral loss is unavoidable.

The general positive-threshold problem under geometric locality remains open.  The formal input
models and quantitative bounds appear in
Theorems~\ref{thm:n1-tiers} and~\ref{thm:explicit-n2-completeness}.

Section~\ref{sec:prelim} introduces the complexity and locality conventions.
Sections~\ref{sec:ferm}, \ref{sec:containment}, and~\ref{sec:n1} establish the Hermitian result,
and Section~\ref{sec:explicit-n2} establishes the nilpotent result.  Supporting constructions and
quantitative bounds are collected in the appendices.
\section{Preliminaries}
\label{sec:prelim}

\subsection{The class \texorpdfstring{$\QMAone$}{QMA1} and frustration-free Hamiltonians}
\label{sec:qma1}

A promise problem lies in $\QMA$ if a polynomial-time quantum verifier accepts a
correct witness with probability at least $c$ and any witness for a negative
instance with probability at most $s$, with $c-s\ge 1/\mathrm{poly}$. The
restriction $\QMAone$ demands perfect completeness, meaning $c=1$, and thus couples the
class to exact ground-state energies.

\paragraph{Verification convention.}
Throughout, $\QMAone$ denotes exact verification over the fixed universal gate set
\[
\mathcal G=\{\widehat H,T,\mathrm{CNOT}\},\qquad
\widehat H=2^{-1/2}\begin{pmatrix}1&1\\1&-1\end{pmatrix},
\qquad
T=\operatorname{diag}(1,e^{i\pi/4})
=\operatorname{diag}\!\left(1,\frac{1+i}{\sqrt2}\right).
\]
In the ordered basis $00,01,10,11$, set
\[
H_L:=\widehat H\otimes I_2,\quad H_R:=I_2\otimes\widehat H,\quad
T_L:=T\otimes I_2,\quad T_R:=I_2\otimes T,
\]
and
\[
\mathrm{CX}_{L\to R}
:=\sum_{a,b\in\{0,1\}}\ket{a,a\oplus b}\!\bra{a,b},\qquad
\mathrm{CX}_{R\to L}
:=\sum_{a,b\in\{0,1\}}\ket{a\oplus b,b}\!\bra{a,b}.
\]
Cade and Crichigno use this convention~\cite{Cade2024}.
Gosset and Nagaj use the same gate set for exact projector
measurements~\cite{Gosset2013}.
Under this convention, a problem lies in $\QMAone$ when it has polynomial-time uniform
families of polynomial-size verifiers over $\mathcal G$, with acceptance
probability exactly one on a suitable YES witness and acceptance at most
$1-1/\mathrm{poly}(S)$ for every NO witness. Every circuit equality used for this class is
an exact matrix identity, including global phase.

A Hamiltonian $H=\sum_jH_j$ with positive semidefinite terms is frustration-free when one
ground state annihilates every $H_j$.  For local constraints $h_a$ with input-supplied exact
rejection circuits over $\mathcal G$, their rejection probabilities are exact quadratic forms.
Sequential repetition, or Marriott--Watrous amplification~\cite{MarriottWatrous2005}, improves
an inverse-polynomial soundness gap while preserving perfect completeness.  A work qubit is
\emph{clean} when it is initialized in $\ket0$ and returned exactly to $\ket0$.  Since YES
instances require exactly zero energy, the reductions below preserve zero energy exactly.

\begin{lemma}[Exact constant-soundness normalization]
\label{lem:ambient-normalization}
For each fixed language in $\QMAone$, there is a polynomial-time reduction to verifier circuits
over $\mathcal G$ with perfect completeness and NO acceptance at most $1/2$.
\end{lemma}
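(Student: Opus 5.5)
Starting from a fixed language $L\in\QMAone$, I take its verifier family $\{V_x\}$ over $\mathcal G$ with perfect completeness and soundness $1-1/p(|x|)$ for some polynomial $p$. The plan is plain sequential repetition with AND acceptance, using fresh copies of the witness register, and then checking that completeness stays exact while soundness drops below $1/2$. Marriott--Watrous would avoid the witness blow-up, but its exactness over $\mathcal G$ is harder to control, so I prefer independent copies. Set $r:=\lceil p(|x|)\ln 2\rceil+1$. The new verifier $V'_x$ receives a witness on $r$ copies of the original witness register. It runs $V_x$ on copy $i$ together with a fresh block of clean ancillas, and it accepts iff every run accepts. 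To stay exactly within the gate-set convention, I implement the AND coherently. Each $V_x^{(i)}$ writes its accept bit on its own output qubit. A reversible Toffoli cascade then computes the AND into one final output qubit. Toffoli has an exact Clifford+$T$ decomposition over $\{\widehat H,T,\mathrm{CNOT}\}$, which requires $T^\dagger=T^7$. This decomposition is an exact matrix identity, so the acceptance probability is exactly the probability that all runs accept.

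For completeness, given the YES witness $\ket{w}$ I feed $\ket{w}^{\otimes r}$. Each $V_x^{(i)}$ acts on disjoint registers, so the joint acceptance projector applied to the product state is a product of exact identities. Each factor accepts with probability exactly one, so the AND bit equals $1$ with probability exactly one.

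The main obstacle is soundness, because the witness may be entangled across the copies. I handle it by the standard sequential argument. Let $\Pi_i$ be the acceptance projector of run $i$, pulled back to the input space; these act on disjoint tensor factors and commute. For any state $\rho$, the acceptance probability is $\mathrm{tr}(\Pi_1\cdots\Pi_r\rho)$. I bound it by conditioning. After the first $i-1$ runs accept, the post-measurement state restricted to copy $i$ and its fresh ancillas is still some state on the witness register tensored with clean ancillas. By soundness, run $i$ then accepts with conditional probability at most $1-1/p$. Multiplying the conditional probabilities gives at most $(1-1/p)^r\le e^{-r/p}<1/2$. The step that needs care is that the ancillas of copy $i$ are untouched by earlier runs, so the conditional state of copy $i$ really is a legitimate witness input to $V_x$; disjoint registers give this. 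The remaining size check is that $r\cdot|V_x|$ plus the cascade is polynomial and that $V'_x$ is uniformly generated in polynomial time, which makes $x\mapsto V'_x$ the required reduction.
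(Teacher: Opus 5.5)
Your proposal is correct and follows essentially the same route as the paper: repetition on disjoint witness and clean-ancilla blocks, a coherent AND realized by the exact phase-sensitive Toffoli word over $\mathcal G$, and a product witness for perfect completeness. The only difference is in the soundness step, and it is cosmetic. The paper observes that the joint acceptance effect is $A_x^{\otimes t}$ and uses $\lVert A_x^{\otimes t}\rVert=\lVert A_x\rVert^t$; you reach the same bound by chaining conditional acceptance probabilities of the commuting per-block projectors, which is an equivalent argument.
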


One frustration-free Hamiltonian recurs throughout: the Feynman-Kitaev history Hamiltonian of a
circuit~\cite{Kitaev2002}.  Its propagation and boundary terms form
$H_{in}=H_{\mathrm{prop}}+H_{\mathrm{pen}}$, which has zero energy exactly when some witness is
accepted with certainty.  Kitaev's unary clock records time $t$ by the wall in
$\mathtt1^t\mathtt0^{L-t}$ and penalizes illegal $\mathtt0\mathtt1$ pairs locally.  Gosset and
Nagaj give an exact three-local quantum-SAT construction~\cite{Gosset2013}.  The reductions below
use Nagaj's uniform $d=11$ nearest-neighbor line source stated in Theorem~\ref{thm:rgn}.

\subsection{Fermions and Majorana operators}

A system of $N$ fermionic modes is characterized by creation operators $c_{i}^{\dagger}$ and annihilation operators $c_{i}$, where $i\in\{1,\dots,N\}$, obeying
the canonical anticommutation relations (CAR), and its Fock space carries the integer
grading by total occupation number. A
complex mode $c$ splits into two Hermitian Majoranas $\Gamma_A=c+c^\dagger$ and
$\Gamma_B=i(c^\dagger-c)$, and a set $\{\Gamma_a\}$ obeys
\begin{equation}
\{\Gamma_a,\Gamma_b\}=2\delta_{ab}\Id,\qquad \Gamma_a^\dagger=\Gamma_a,
\qquad \Gamma_a^2=\Id .
\end{equation}
Distinct Majoranas anticommute, and the algebra they generate is a Clifford
algebra. The total fermion number operator is $\hat{N}=\sum_{i=1}^{N}c_{i}^{\dagger}c_{i}$. The parity operator $P=(-1)^{\hat N}$ separates operators into even ones,
which commute with $P$, and odd ones, which anticommute. Physical observables are
even. Two even operators on disjoint
supports commute, two odd operators on disjoint supports anticommute, and an odd
operator built as the product of an even physical operator with a single Majorana
inherits the second behavior.

\subsection{Minimal and extended supersymmetry}
\label{sec:susy-types}

\paragraph{Minimal supersymmetry.}
A fermionic system has $\N=1$ supersymmetry when an odd Hermitian supercharge $Q$ gives
$H=Q^2$.  The identity $PQP=-Q$ makes the Hamiltonian parity even.  It is positive
semidefinite, and its zero modes are exactly $\ker Q$.

\paragraph{The Hermitian-square problem.}
The exact-zero question also makes sense for a general Hermitian $Q$ that is not parity
homogeneous.  We call this the ungraded Hermitian-square problem; intrinsic minimal
supersymmetry is its parity-odd restriction.  The reduction in
Theorem~\ref{thm:minimal-hardness} produces parity-odd inputs.  For $Q=\sum_jQ_j$, the Hamiltonian
$H=\sum_jQ_j^2+\sum_{j<k}\{Q_j,Q_k\}$ contains cross terms that may be indefinite and
noncommuting, so it need not have a termwise frustration-free decomposition.  The reduction
therefore uses the global spectral map of Lemma~\ref{lem:spectral}.

\paragraph{Extended supersymmetry.}
A system has $\N=2$ supersymmetry when a nilpotent supercharge $Q$ gives
$H=\{Q,Q^\dagger\}$.  Write $H(Q)=\ker Q/\operatorname{im}Q$; when the Hilbert space is graded
and $Q$ raises degree, this quotient is taken degree by degree.  The Laplacian identity
\[
 \langle\psi|H|\psi\rangle=\lVert Q\psi\rVert^2+\lVert Q^\dagger\psi\rVert^2
\]
shows that its zero modes are closed and co-closed.  In finite dimensions these harmonic vectors
give the cohomology classes.  Only this zero-mode identity is used below.

\paragraph{One-mode doubling.}
There is an algebraic one-mode doubling of a minimal system. Introduce one ancillary
fermion $f$ and set $\mathcal Q=Q\otimes f^\dagger$. The operator is nilpotent and satisfies
$\{\mathcal Q,\mathcal Q^\dagger\}=Q^2\otimes\Id$, but it is even under total fermion parity
when $Q$ is odd. It is therefore not the fermionic supercharge of the supersymmetric
Hamiltonian problem in \cite{Cade2024}. The construction also destroys geometric locality of
the supercharge, since every local piece of $Q$ couples to the same mode $f$.

\subsection{Geometric locality}

We place modes on a lattice $\Lambda\subset\mathbb{Z}^D$ with the graph metric. A
Hamiltonian or a supercharge is geometrically local with range $R$ when each term
acts only within a set of sites of diameter less than $R$, for a constant $R$ independent
of system size; on a chain such a set is a window of $R$ consecutive sites, so $R$ is the
span of a term and not a radius. A geometrically local operator is $k$-local for the
constant $k$ set by the number of modes such a window carries, but not conversely. We use the
following locality and parity properties of square roots.

\begin{definition}[Mode allocation and geometric locality]
\label{def:mode-allocation-locality}
Let $\mathfrak M$ be the finite enumerated mode set and let $\Lambda$ be the explicitly listed
lattice sites.  A mode allocation is a disjoint union
\[
\mathfrak M=\bigsqcup_{s\in\Lambda}\mathfrak M_s,
\]
where every fibre $\mathfrak M_s$ is ordered and may be empty.  It determines the total map
$\pi:\mathfrak M\to\Lambda$ by $\pi(\mu)=s$ exactly when $\mu\in\mathfrak M_s$; concatenating
the fibres in site order gives the site-major mode order.  The \emph{mode arity} of a term $T$ is
the number of fermionic modes in its mode support, and its \emph{site support} is
$\pi(\operatorname{supp}_{\rm mode}T)$.  The \emph{metric
diameter} is the largest lattice distance between two support sites.  On a chain, the
\emph{span}, or window width, is the number of sites in the smallest containing interval, hence
one plus the diameter.  Thus a range bound $R$, meaning diameter less than $R$, gives chain span
at most $R$.

For a fermionic operator, \emph{CAR-locality} measures the site support of the operator in the
fermionic algebra before a Jordan--Wigner transformation. Spin/Pauli locality measures support in
the chosen spin representation. On a chain, the \emph{realized span} of a constructed family is
the maximum span among its nonzero terms.
\end{definition}

\begin{definition}[Locality data for anchored families]
\label{def:anchored-locality-data}
For a listed family with containing intervals $I_c$ and designated anchors $a(c)$, write
\[
w=\max_c|I_c|,\qquad
D_{\mathrm{site}}=\max_s\#\{c:s\in I_c\},\qquad
A=\max_s\#\{c:a(c)=s\},
\]
and define the structural overlap degree by
\[
\Delta_{\mathrm{ov}}=\max_c\#\{d\ne c:I_c\cap I_d\ne\varnothing\}.
\]
The range, site load, anchor multiplicity, and overlap degree are recorded separately for each
constructed family.
\end{definition}

\begin{lemma}[Overlap-counting bound]
\label{lem:overlap-counting}
For every listed anchored family,
\[
\Delta_{\mathrm{ov}}\le wD_{\mathrm{site}}-1.
\]
\end{lemma}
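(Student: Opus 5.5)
Fix a term $c$ of the listed family.  The plan is a double count: bound the number of pairs $(s,d)$ with $s\in I_c$ and $s\in I_d$, then compare that count with the number of terms $d\ne c$ whose interval meets $I_c$.

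First, note that every $d$ with $I_c\cap I_d\neq\varnothing$ contains at least one site $s\in I_c$.  The map sending $d$ to the set $I_c\cap I_d$ therefore sends each such $d$ into a nonempty subset of $I_c$.  Choosing one witness site $s(d)\in I_c\cap I_d$ for each $d$ gives an injection
\[
\{d:\ I_c\cap I_d\ne\varnothing\}\hookrightarrow\{(s,d):\ s\in I_c,\ s\in I_d\},\qquad d\mapsto(s(d),d).
\]
This set of incidences includes $d=c$ itself.

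Second, count the incidences site by site.  For each $s\in I_c$, the number of $d$ with $s\in I_d$ is at most $D_{\mathrm{site}}$ by definition.  Since $|I_c|\le w$, the total number of incidences is at most $wD_{\mathrm{site}}$.

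Third, remove $c$.  The term $c$ itself meets $I_c$, provided $I_c$ is nonempty, which holds because containing intervals of listed terms are nonempty.  Hence
\[
\#\{d\ne c:\ I_c\cap I_d\neq\varnothing\}\le wD_{\mathrm{site}}-1 .
\]
Taking the maximum over $c$ gives the claimed bound on $\Delta_{\mathrm{ov}}$.

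The only delicate point is the subtraction of one.  It is legitimate because $c$ is counted among the $d$ meeting $I_c$, so $c$ contributes at least one incidence.  The injection above charges $c$ a single incidence at its witness site, and excluding it lowers the bound by exactly one.  No lattice geometry beyond interval cardinality is needed, and the anchor data $a(c)$ and the multiplicity $A$ play no role.

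If the family could contain empty intervals, I would restrict to terms with nonempty support; this is consistent with the realized-span convention, which counts only nonzero terms.
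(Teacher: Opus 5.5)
Your proposal is correct and is the same double-counting argument as the paper's, which bounds $\#\{d\ne c: I_c\cap I_d\neq\varnothing\}$ by $\sum_{s\in I_c}\#\{d: s\in I_d\}-1\le |I_c|D_{\mathrm{site}}-1\le wD_{\mathrm{site}}-1$ and then takes the maximum over $c$. Your explicit remark that $I_c$ must be nonempty for the subtraction of one is a hypothesis the paper uses only tacitly.
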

\begin{proof}
For each $c$,
\[
\#\{d\ne c:I_c\cap I_d\ne\varnothing\}
\le \sum_{s\in I_c}\#\{d:s\in I_d\}-1
\le |I_c|D_{\mathrm{site}}-1
\le wD_{\mathrm{site}}-1.
\]
Taking the maximum over $c$ proves the claim.
\end{proof}

\begin{lemma}[Square roots preserve locality and parity]
\label{lem:sqrt}
Let $h$ be a positive semidefinite operator on a finite-dimensional space. Its
unique positive semidefinite square root $\sqrt h$ is supported within the
modes that support $h$, so $\sqrt h$ is geometrically local with the same range.
If in addition $h$ is even under fermion parity, then $\sqrt h$ is even.
\end{lemma}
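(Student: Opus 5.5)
The plan is to realize $\sqrt h$ as a polynomial in $h$ and then read off both support and parity from that polynomial. Since $h$ is positive semidefinite on a finite-dimensional space, it has a spectral decomposition $h=\sum_{\lambda\in\operatorname{spec}h}\lambda P_\lambda$ with finitely many distinct eigenvalues $\lambda\ge0$. Lagrange interpolation gives a real polynomial $p$ with $p(\lambda)=\sqrt\lambda$ on this finite set. Then $p(h)=\sum_\lambda\sqrt\lambda\,P_\lambda$, which is positive semidefinite and squares to $h$. By uniqueness of the positive square root, $\sqrt h=p(h)=\sum_{m}a_mh^m$ with $a_m$ real.

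\emph{Support.} Suppose $h$ is supported on a mode set $S$, meaning $h$ lies in the (unital) subalgebra generated by the creation and annihilation operators of the modes in $S$. Every power $h^m$ lies in the same subalgebra, including $h^0=\Id$. Hence $p(h)$ lies there as well, so $\sqrt h$ is supported within $S$.

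The only subtlety is the constant term $a_0\Id$. The identity has empty support, so it does not enlarge the support of $\sqrt h$. Thus the site support of $\sqrt h$ is contained in $\pi(S)$, and the diameter, chain span and range bound $R$ of Definition~\ref{def:mode-allocation-locality} carry over unchanged.

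\emph{Parity.} If $PhP=h$, then $Ph^mP=h^m$ for every $m$, because $P^2=\Id$. Summing gives $Pp(h)P=p(h)$, so $\sqrt h$ commutes with $P$ and is even.

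The main obstacle is making the notion of \emph{supported within} precise for fermionic operators. I would use the CAR-subalgebra definition, under which the argument is immediate. A tensor-factor (spin) notion of support is subtler, because under Jordan--Wigner the subalgebra on a non-contiguous mode set is not a tensor factor. I would therefore phrase the lemma in terms of the CAR-subalgebra, in line with the CAR-locality of Definition~\ref{def:mode-allocation-locality}. Closure of that subalgebra under products and linear combinations is all that is needed.
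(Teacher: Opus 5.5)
Your proposal is correct and follows the paper's proof essentially verbatim: you interpolate $\sqrt\lambda$ on the finite spectrum, so $\sqrt h=p(h)$ lies in the unital CAR subalgebra generated by the supporting modes, and evenness follows because $P$ commutes with every power of $h$. Your remark that support must be measured in the local CAR subalgebra, rather than through a tensor-factor or diagonalizing-unitary picture, is the same point the paper makes at the end of its proof.
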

\begin{proof}
The spectrum of $h$ is finite, so there is a polynomial $p$ with $p(\lambda)=\sqrt\lambda$
at every eigenvalue $\lambda$ of $h$, and then $\sqrt h=p(h)$. The square root therefore lies
in the unital algebra generated by $h$ itself, hence in the algebra of the modes that support
$h$, and the locality is strict, with no decaying tail. The same expression gives the second
claim: $[h,P]=0$ makes $P$ commute with every power of $h$, so
$[\sqrt h,P]=[p(h),P]=0$.  This algebraic formulation is needed for fermionic support: the
supporting modes generate a local CAR subalgebra, whereas a diagonalizing unitary need not
belong to that subalgebra.
\end{proof}

\subsection{A one-dimensional history-state reduction}
\label{sec:input}

\begin{theorem}[Nagaj's one-dimensional history reduction]
\label{thm:rgn}
Nagaj proves that quantum SAT on a line of sites of local dimension $d=11$ is
$\QMAone$-complete.  The reduction is
uniform and polynomial time, its constraints are nearest-neighbor projectors, YES instances have
an accepting history in their common kernel, and NO instances have inverse-polynomial energy
\cite[Secs.~4.3.1--4.3.2]{Nagaj2008}.
\end{theorem}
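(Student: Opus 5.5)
The statement is Theorem~\ref{thm:rgn}, which restates a known result of Nagaj. The plan is therefore not to rebuild the reduction. Instead, I would verify that each property claimed here follows from the construction in \cite[Secs.~4.3.1--4.3.2]{Nagaj2008}, stated in the conventions of Section~\ref{sec:qma1}.

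I would carry out the check in the following order.

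\textbf{Membership.} Quantum SAT with nearest-neighbour projectors on $d=11$ sites lies in $\QMAone$. Each projector $\Pi_j$ acts on two qudits of dimension $11$. Embedding each qudit in four qubits, the verifier picks a term uniformly at random and measures it with an exact rejection circuit over $\mathcal G$.
\begin{itemize}
\item This exactness is where the gate-set convention matters. I would invoke the same argument Gosset and Nagaj use for exact projector measurements over $\mathcal G$, which requires the projector entries to lie in the ring generated by $\mathcal G$.
\item I would check that Nagaj's projectors (rank-one projectors onto states of the form $\ket{a,b}-\ket{c,d}$, and transition terms built from $\mathcal G$ gates) have such entries.
\item Perfect completeness then follows because a common-kernel vector is never rejected.
\item Soundness $1-\lambda_{\min}/m$ follows because the rejection probability equals $\langle\psi|\sum_j\Pi_j|\psi\rangle/m$, where $m$ is the number of terms.
\end{itemize}

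\textbf{Hardness.} Given a $\QMAone$ verifier, first normalize it with Lemma~\ref{lem:ambient-normalization}. Then lay out Nagaj's line construction: qudits carry gate and "qubit-moving" states, and the history sweeps back and forth across the line in rounds. The steps are:
\begin{itemize}
\item Confirm uniformity and polynomial size of the reduction.
\item For YES instances, show that the uniform superposition over legal configurations, applied to an accepting witness, is annihilated by every term. This needs exact acceptance probability $1$ and exact gate identities, including global phase.
\item For NO instances, lower-bound the ground energy.
\end{itemize}

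\textbf{Main obstacle: the NO-instance bound.} Following Nagaj, I would proceed as follows.
\begin{enumerate}
\item Show that the illegal-configuration penalties leave invariant the subspace of legal, well-formed configurations, and give energy at least $1$ outside it.
\item Inside that subspace, the propagation terms act as a path-graph Laplacian, with spectral gap $\Omega(1/L^2)$ in the number $L$ of steps.
\item Combine the input and output penalties with the propagation Hamiltonian via Kitaev's geometric lemma. This gives energy $\Omega((1-\sqrt{1/2})/L^3)$, which is inverse-polynomial.
\end{enumerate}
The delicate part is the block-diagonal invariance of the legal subspace in the one-dimensional encoding. The line construction allows configurations that are locally legal but globally bad, and these must be excluded or penalized. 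I would check Nagaj's case analysis (his clock-checking and boundary terms) directly rather than re-derive it.
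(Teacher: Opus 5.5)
\textbf{Gap in step 1 of the NO bound.} The paper takes this theorem from Nagaj and re-derives the NO bound explicitly in the proof of Proposition~\ref{prop:ambient-raw-gap}. Your canonical-path step agrees with that derivation: path Laplacian, constant soundness from Lemma~\ref{lem:ambient-normalization}, and Kitaev's geometric lemma, giving order $L^{-3}$. The problem is step~1.

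In this one-dimensional construction no subspace has all three properties you need. It would have to be invariant under the whole Hamiltonian, carry energy at least one on its complement, and be acted on by propagation as a single path Laplacian. The diagonal clock, boundary and invalid-code checks give energy at least one only on the rejected-label sector, $D_{\mathrm{full}}\succeq\Pi_{\mathcal S^\perp}$.

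The set $\mathcal S$ of words passing every local check is neither invariant nor a single path. It contains noncanonical words, for instance with qubit count $q\ne n$ or inconsistent sweep residues, which no local term detects by definition. Propagation carries them into $\mathcal S^\perp$ only at the end of a possibly long path. Such states are neither excluded nor penalized by a constant. A long noncanonical path behaves like a path Laplacian with one absorbing end, whose lowest eigenvalue is of order $1/m^2$ for $m$ vertices.

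So if your legal subspace is the canonical history, ``energy at least $1$ outside it'' is false. If it is $\mathcal S$, the invariance and path-Laplacian claims are false. Inspecting Nagaj's clock and boundary terms cannot repair this, because these configurations pass all of them.

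\textbf{The missing clairvoyance-type argument.} Because the penalties are diagonal in the configuration basis, the Hamiltonian reduces over connected components of the transition graph. The paper's clock-word classification then shows:
\begin{itemize}
\item every safe word has exactly one active label, and there are at most $m_\star=16(N+1)^3$ safe words;
\item transitions are uniquely reversible, so each component is an interval in the phase parameter $\rho$;
\item the canonical history is the only component with no transition into $\mathcal S^\perp$, which is what forces $q=n$ and the scheduled residues;
\item every other component is a path with such a transition at an endpoint.
\end{itemize}
Completing the square on each rejected fibre bounds the chosen endpoint amplitudes by $\sum_P\lVert x_P\rVert^2\le14NE$. Telescoping with Cauchy--Schwarz along each path then gives $\lVert\psi\rVert^2\le G_{\mathrm{ill}}\langle\psi,H_{\mathrm{rec}}\psi\rangle$ on the noncanonical block, with $G_{\mathrm{ill}}=1+28Nm_\star+2m_\star^2$. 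The NO bound is $1/(G_{\mathrm{ill}}+G_{\mathrm{can}})$. It is the noncanonical bound, of order $N^{-6}$, that controls the result, not the canonical $L^{-3}$ alone.

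\textbf{A smaller point in your membership step.} Suppose each projector is embedded into four qubits and extended by zero. Then a witness supported on the five unused codewords is accepted with certainty. You need invalid-code penalties as in Lemma~\ref{lem:encoding}. You should also pad the number of terms to a power of two so that uniform term selection is exact over $\mathcal G$.
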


We use this architecture and NO bound through the normalized source family below, whose schedule,
local projector grouping, and exact circuit realization are specified explicitly.

We first keep every local projector occurrence separately; this is the unmerged, or raw,
presentation.  The bondwise merge groups occurrences assigned to the same boundary or physical
bond and replaces each group by the projector onto the span of their ranges.

\begin{lemma}[Size of the one-dimensional history-state instance]
\label{lem:source-normalization}
After adjoining at most one idle ancilla wire initialized to $\ket0$ and, if necessary, one
complete identity sweep immediately before the final sweep, the nearest-neighbor verifier has $n,K\ge2$,
where $n$ is the total number of witness, ancilla, and output wires and $K$ is the number of
nearest-neighbor gate rounds.  This padding preserves the verifier's acceptance probability
exactly.  The one-dimensional history-state construction has
\[
N=2nK+1,\qquad L=(K-1)n(4n+6)+2n,
\qquad M_{\mathrm{ung}}=(14n-1)K+3.
\]
Here $N$ is the number of $d=11$ source sites, $L$ is the history length, and
$M_{\mathrm{ung}}$ counts the local projector occurrences before the bondwise merge.
\end{lemma}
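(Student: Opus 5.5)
The padding claims come first, and then the three counts follow by bookkeeping over Nagaj's geometry as recalled in Theorem~\ref{thm:rgn}.

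\textbf{Padding.} If the verifier has $n=1$, I adjoin one idle wire initialized to $\ket0$. No gate touches it, so the unitary is $U\otimes\Id$ and the acceptance probability (measured on the output wire) is unchanged exactly. If $K=1$, I insert one complete identity sweep before the final sweep. Identity gates compose to the identity as exact matrices, including global phase, so again the acceptance probability is preserved exactly. After this step, $n,K\ge2$.

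\textbf{Site count.} Nagaj's line lays out $K$ blocks of $2n$ sites, one block per gate round. Each qubit is carried by a pair of sites, a ``qubit'' half and a ``turn'' half. One extra terminal site closes the chain. This gives $N=2nK+1$, and I would verify it directly against the block layout in \cite[Sec.~4.3.1]{Nagaj2008}.

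\textbf{History length.} Within one round, the active region moves across the $n$ qubits of a block. Each qubit move costs a constant number of legal transitions, and each gate application, together with the turnaround and transfer into the next block, costs $O(n)$ steps. I would fix the transition rules of the $d=11$ site alphabet and count the legal configurations traversed per round; the target is exactly $n(4n+6)$ steps per inter-block transfer. There are $K-1$ transfers, and the final block contributes $2n$ terminal steps. The formula $L=(K-1)n(4n+6)+2n$ should then follow by summing. I would check it by enumerating the unary-like configuration sequence for small cases, say $n=K=2$.

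\textbf{Projector occurrences.} I would count the raw occurrences bond by bond. The $2nK$ bonds carry propagation projectors, illegal-pattern projectors and gate projectors, with multiplicity depending on the bond type inside a block. Summing over one round should give $14n-1$ occurrences per round. The $+3$ covers the initialization, output and boundary penalties at the chain ends.

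I expect the main obstacle to be this exact constant bookkeeping, especially the per-round count $14n-1$. It depends on the precise rule list of the normalized source family: which forbidden pairs are listed separately and which coincide across neighbouring blocks. If the count does not match, I would first recheck the treatment of the block-boundary bonds, since the $-1$ suggests one bond per round is shared or omitted.
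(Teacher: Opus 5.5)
\textbf{There is a genuine gap: the counts for $L$ and $M_{\mathrm{ung}}$ are asserted, not derived.} Your padding step is fine. Your site count $N=2nK+1$ is also right in outcome: module $s=rn+j$ starts at site $2s$ and the final sweep ends at site $2nK$, giving $2nK$ bonds. But these two counts are the actual content of the lemma. You read the per-round targets $n(4n+6)$ and $14n-1$ off the formulas and defer the bookkeeping to ``I would check.''

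\textbf{History length.} The mechanism you sketch (constant cost per qubit move, plus $O(n)$ per gate and turnaround) is not the one that produces the formula. In the construction, the qubit block advances only two sites per \emph{module}. A module is a complete back-and-forth pass of the active label:
a forward sweep of $2n$ transitions ($\mathrm{PRE}/\mathrm{POST}$ up to $\mathrm{END}$);
two turn steps ($\mathrm{TURN}_T$, $\mathrm{TURN}_B$);
a return sweep of $2n+1$ steps ($\mathrm{PUSH}_{O,0},\mathrm{PUSH}_{E,1},\dots,\mathrm{PUSH}_{O,n}$);
and a three-step handoff ($\mathrm{BOUNCE}$, $\mathrm{SHIFT}$, then $\mathrm{PRE}(0)$ at $a+1$).
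That totals $4n+6$ transitions per module. Each round has $n$ modules. Only the one with $a\equiv0\pmod n$ is an $A$ sweep applying the round's $n-1$ gates; the other $n-1$ are $M$ messenger sweeps carrying the identity. So there are $(K-1)n$ complete modules, and the final module stops after its forward sweep, contributing $2n$. This is why the per-round cost is quadratic in $n$; your picture does not yield $4n+6$ per module.

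\textbf{Projector occurrences.} The missing ingredient is the group table:
\begin{itemize}
\item one left-boundary group with $m_b=2$;
\item $nK$ even physical bonds with $m_b=6$;
\item $K(n-1)$ odd gate bonds with $m_b=8$;
\item $K$ odd nongate bonds with $m_b=7$;
\item one right-boundary group with $m_b=1$.
\end{itemize}
This gives $M_{\mathrm{ung}}=2+6nK+8K(n-1)+7K+1$.

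Your diagnosis of the $-1$ is off. No bond is shared or omitted: the bond groups number $nK+K(n-1)+K=2nK$, all present. The $-1$ comes from the single odd nongate bond per round carrying $7$ occurrences rather than $8$, so each round contributes $6n+8(n-1)+7=14n-1$.

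Likewise, the $+3$ is exactly the two boundary groups, with multiplicities $2$ and $1$. It is not a fixed package of initialization and output penalties at the chain ends. The number of ancilla wires varies with the instance, so initialization checks cannot reduce to an $n$-independent constant.

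\textbf{Where to check the constants.} These constants belong to the paper's normalized reconstruction: its decoded-label canonicalization, boundary predicates, and residue selectors are not Nagaj's. Checking ``against the block layout in Nagaj'' would therefore not certify them. You need the explicit module schedule and group partition of the reconstruction.
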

\begin{proof}
See Appendix~\ref{app:source-normalization-count}.
\end{proof}

\begin{proposition}[Gap of the unmerged history-state Hamiltonian]
\label{prop:ambient-raw-gap}
Apply the unmerged history-state projector presentation of Theorem~\ref{thm:rgn}, with the reconstruction of
Lemma~\ref{lem:source-normalization}, to a normalized NO circuit from
Lemma~\ref{lem:ambient-normalization}.  For the final parameters $n,K$ of
Lemma~\ref{lem:source-normalization}, define
\[
N=2nK+1,\qquad L=(K-1)n(4n+6)+2n,\qquad m_\star=16(N+1)^3.
\]
Then the raw source Hamiltonian $H_{\mathrm{rec}}$, including every parity-encoded local projector
occurrence before the bondwise merge, satisfies
\[
H_{\mathrm{rec}}\succeq\frac1{G_{\mathrm{gap}}}\Id,\qquad
G_{\mathrm{gap}}=1+28Nm_\star+2m_\star^2+4(L+1)^3.
\]
On a YES circuit its kernel is the accepting history space.
\end{proposition}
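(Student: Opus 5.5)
The proof follows the standard structure of a history-state gap argument. I split $H_{\mathrm{rec}}$ into a clock/legality part and a propagation-plus-penalty part, and control the constants explicitly.

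\emph{Step 1: split off the legal subspace.} Write $H_{\mathrm{rec}}=H_{\mathrm{leg}}+H_{\mathrm{dyn}}$. Here $H_{\mathrm{leg}}$ collects the raw projector occurrences that penalize illegal configurations of the $d=11$ line (bad neighbouring pairs, wrong boundary states). $H_{\mathrm{dyn}}$ collects the propagation occurrences and the input/output penalties. Every term is a projector, so both parts are positive semidefinite.

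The legal configurations form an invariant subspace $\mathcal S_{\mathrm{leg}}$ on which $H_{\mathrm{leg}}=0$. On the orthogonal complement, $H_{\mathrm{leg}}$ is a sum of commuting classical-type projectors, or can be compared with one, so $H_{\mathrm{leg}}\succeq \Id$ there. The one exception is the illegal configurations that are only detected after propagation. Following Nagaj's analysis, these are handled by a clairvoyance or domain-wall argument: each illegal but locally undetected configuration evolves under $H_{\mathrm{prop}}$ within a path of length at most about $4(N+1)$, into a configuration detected by a projector.

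\emph{Step 2: bound each block.}
\begin{itemize}
\item On each such path, a path-Laplacian-with-endpoint-penalty bound gives energy at least $1/m_\star$ with $m_\star=16(N+1)^3$. This is the usual $\Omega(1/\ell^2)$ lower bound for a line of length $\ell\sim 4(N+1)$, weakened to a cube to absorb multiplicities.
\item On $\mathcal S_{\mathrm{leg}}$, $H_{\mathrm{dyn}}$ is unitarily equivalent, via the standard rotation $W=\sum_t U_t\cdots U_1\otimes\ket t\bra t$, to $\Id\otimes\Delta_{L}+H_{\mathrm{pen}}$. Here $\Delta_L$ is the path Laplacian on $L+1$ clock states, with spectral gap at least $1/(L+1)^2$.
\item The output penalty is rotated into the acceptance projector. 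Lemma~\ref{lem:ambient-normalization} gives NO acceptance at most $1/2$. Kitaev's geometric lemma, $\lambda_{\min}(A+B)\ge \min(\lambda_A,\lambda_B)\,(1-\cos\theta)$, then yields a legal-sector bound of order $1/(4(L+1)^3)$.
\end{itemize}

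\emph{Step 3: glue the sectors.} The legal and illegal sectors are coupled by propagation terms with norm at most the number of raw occurrences touching a site, which is $O(N)$ overall. Two tools combine the blocks:
\begin{itemize}
\item a projection lemma (Kempe–Kitaev–Regev style) applied with exact constants, or
\item the inequality $\lambda_{\min}(A+B)\ge(\lambda_A^{-1}+\lambda_B^{-1}+\dots)^{-1}$, obtained from a resolvent/Schur-complement estimate.
\end{itemize}
Either way the blocks combine into a harmonic-sum bound $H_{\mathrm{rec}}\succeq 1/G$. In $G$, the cross-term contributions $28Nm_\star$ and $2m_\star^2$ come from the coupling norm times the inverse illegal gap, and $4(L+1)^3$ comes from the legal sector. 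The leading $1$ accounts for configurations caught directly by a projector. I would verify the constants $28$ and $2$ by counting, from Lemma~\ref{lem:source-normalization}, how many raw occurrences act on each bond.

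\emph{YES kernel.} For YES circuits, the kernel claim is immediate. The history state of an accepting witness is annihilated by every raw occurrence. Conversely, any kernel vector must be legal by Step 1, a history state by the Laplacian gap, and accepting by the output penalty.

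\emph{Main obstacle.} The hardest step is the treatment of illegal configurations that are not locally detectable in the raw, unmerged presentation, together with the exact cross-term counting in Step 3. I would first try an invariant-subspace decomposition by domain-wall pattern: $H_{\mathrm{prop}}$ preserves the pattern, so each illegal pattern gives a separate path problem. This removes the cross terms altogether, and the $m_\star$ terms only enter through path lengths.
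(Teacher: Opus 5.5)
Your canonical-sector estimate (half the path Laplacian on $L+1$ clock states with gap at least $2/(L+1)^2$, NO acceptance at most $1/2$, and Kitaev's geometric lemma giving $1/(4(L+1)^3)$) matches the paper, as does your YES-kernel argument. The gap is elsewhere: you replace the combinatorial classification of the clock words, which is the core of the proof, with an appeal to ``Nagaj's analysis'' and a ``domain-wall argument''. For this reconstruction three facts are needed. First, every word passing the whole-fibre clock and boundary checks has class pattern $\mathsf D^p\mathsf A\mathsf R^r$ with exactly one active label, so there are at most $16N(N+1)^2\le m_\star$ safe words. Second, every component of the safe transition graph is a path: each stage has a unique predecessor, and $\rho=(4q+6)a+\phi$ changes by one along each edge. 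Third, exactly one component has no transition into the rejected-label sector, namely the boundary-to-boundary history with $q=n$ that carries the scheduled gates. Once this is known, the canonical component and its orthogonal complement are \emph{reducing} subspaces of $H_{\mathrm{rec}}$. There is then no coupling between your ``legal'' and ``illegal'' sectors, and the final combination is just $\min\{G_{\mathrm{ill}}^{-1},G_{\mathrm{can}}^{-1}\}\ge(G_{\mathrm{ill}}+G_{\mathrm{can}})^{-1}$. Your Step 3 is therefore unnecessary, and as proposed it would not work. The Kempe--Kitaev--Regev projection lemma needs a penalty gap large compared with the norm of the remaining terms, which is unavailable when every raw projector has weight one. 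The harmonic-sum inequality $\lambda_{\min}(A+B)\ge(\lambda_A^{-1}+\lambda_B^{-1}+\cdots)^{-1}$ is false when $\lambda_A,\lambda_B$ denote gaps above the kernels: take $A$ and $B$ with a common kernel vector.

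The coupling that must be controlled is between the noncanonical safe paths and the rejected-label sector. There your fallback of treating each domain-wall pattern as an invariant path problem, ``removing the cross terms altogether'', fails. A noncanonical path is not terminated by a diagonal endpoint penalty. Its endpoint is joined by a transition block $\frac12\bigl(\begin{smallmatrix}\Id&-V^\dagger\\-V&\Id\end{smallmatrix}\bigr)$ to a rejected fibre, and one rejected fibre can be incident to several chosen boundary transitions; the paper bounds their number by $6N-K-5<7N$. The paper completes the square, $\lVert y\rVert^2+\frac12\sum_i\lVert x_i-U_iy\rVert^2\ge\frac1{r+2}\sum_i\lVert x_i\rVert^2$, to get $\sum_P\lVert x_P\rVert^2\le14NE$ for the chosen endpoint amplitudes. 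It then telescopes along each path, $\sum_j\lVert z_j\rVert^2\le2m\lVert z_0\rVert^2+m^2\sum_j\lVert d_j\rVert^2$, with $m\le m_\star$ and each propagation block contributing $\frac12\lVert d_j\rVert^2$. This is exactly where $28Nm_\star$ and $2m_\star^2$ come from. Your per-path claims do not support this. Nothing bounds the noncanonical path lengths by about $4(N+1)$: components are intervals in $\rho$, and an $M$-sweep component with $1\le q<n$ runs through consecutive sweeps until $a+q\equiv0\pmod n$. Its length can therefore be of order $n^2$, which exceeds $4(N+1)$ when $n\gg K$. The only available length bound is the safe-word count $m_\star$, so the noncanonical gap is of order $m_\star^{-2}$, not the $m_\star^{-1}$ you assert. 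Your attribution of $28Nm_\star$ and $2m_\star^2$ to ``coupling norm times inverse illegal gap'' is fitted to the target constant rather than derived.
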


\begin{lemma}[Kernel and gap under bondwise merging]
\label{lem:source-grouping}
Let $H_{\mathrm{rec}}$ be the raw projector sum of
Lemma~\ref{lem:source-normalization}, and let $\epsilon_{\mathrm{src}}$ denote its NO-instance
lower bound.  For the family first normalized by Lemma~\ref{lem:ambient-normalization},
Proposition~\ref{prop:ambient-raw-gap} permits the explicit choice
$\epsilon_{\mathrm{src}}=G_{\mathrm{gap}}^{-1}$.  Partition the raw projector occurrences by
their assigned boundary or physical bond.  For each group $b$, let
$\widehat\Pi_b$ project onto the span of the ranges of its $m_b$ raw projectors, and define the
bondwise-merged operator
\[
H_{\mathrm{bw}}:=\sum_b\widehat\Pi_b.
\]
The largest raw bond multiplicity is $m_{\mathrm{bond}}=\max_b m_b=8$.  The merge produces
$M=2nK+2$ projectors and
\[
\epsilon_{\mathrm{ff}}\ge\epsilon_{\mathrm{src}}/8.
\]
\end{lemma}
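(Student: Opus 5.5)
The plan has three parts: show the kernel is unchanged by merging, compare the two operators in the Loewner order, and then verify the counts.

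\emph{Kernel.} For each group $b$, write the raw projectors as $P_{b,1},\dots,P_{b,m_b}$, so that the raw Hamiltonian is $H_{\mathrm{rec}}=\sum_b\sum_{i\le m_b}P_{b,i}$. The range of $\widehat\Pi_b$ is $\sum_i\operatorname{ran}P_{b,i}$. Hence
\[
\ker\widehat\Pi_b=\bigcap_i\ker P_{b,i},
\]
and therefore $\ker H_{\mathrm{bw}}=\ker H_{\mathrm{rec}}$, since both operators are sums of positive semidefinite terms. On YES instances, Proposition~\ref{prop:ambient-raw-gap} then says the merged kernel is still the accepting history space. This shows that zero energy is preserved exactly.

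\emph{Gap.} The key step is a termwise operator comparison inside each group:
\[
P_{b,i}\preceq\widehat\Pi_b\ \text{ for every } i,
\qquad\text{hence}\qquad
\sum_{i\le m_b}P_{b,i}\preceq m_b\,\widehat\Pi_b .
\]
The first inequality holds because $\operatorname{ran}P_{b,i}\subseteq\operatorname{ran}\widehat\Pi_b$ and both operators are projectors, so $\widehat\Pi_b P_{b,i}\widehat\Pi_b=P_{b,i}$ and $P_{b,i}\preceq\Id$. Summing over $b$ gives
\[
H_{\mathrm{rec}}\preceq m_{\mathrm{bond}}H_{\mathrm{bw}}.
\]
On NO instances $H_{\mathrm{rec}}\succeq\epsilon_{\mathrm{src}}\Id$, so $H_{\mathrm{bw}}\succeq(\epsilon_{\mathrm{src}}/m_{\mathrm{bond}})\Id$. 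This is exactly $\epsilon_{\mathrm{ff}}\ge\epsilon_{\mathrm{src}}/8$ once $m_{\mathrm{bond}}=8$ is established. The explicit choice $\epsilon_{\mathrm{src}}=G_{\mathrm{gap}}^{-1}$ is simply Proposition~\ref{prop:ambient-raw-gap}.

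\emph{Counting (the main obstacle).} The remaining work is combinatorial bookkeeping on the normalized source family of Lemma~\ref{lem:source-normalization}, and I expect it to be the main obstacle.

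\begin{itemize}
\item The line has $N=2nK+1$ sites, hence $N-1=2nK$ physical bonds.
\item The groups are these $2nK$ bonds together with two boundary groups, one at each end. This gives $M=2nK+2$ projectors.
\item To confirm this, I would walk through the $M_{\mathrm{ung}}=(14n-1)K+3$ raw occurrences from the appendix count. I would assign each one to its bond or boundary and check that no group is empty, so that every group contributes exactly one merged projector.
\end{itemize}

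For $m_{\mathrm{bond}}=8$, I would list the occurrence types meeting a single bond in the appendix's explicit schedule. These are illegal-pair penalties, gate-propagation terms for the nearest-neighbour gate acting there, turn-around and sweep-boundary terms, and initialization and output penalties. I would then check that the worst bond, an interior bond at a sweep turn carrying a gate, collects exactly eight of them.

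I would cross-check the count by summation. With eight or fewer occurrences per group, the total $\sum_b m_b$ must reproduce $M_{\mathrm{ung}}$. Matching this total validates both the assignment and the maximum. The inequality only needs $m_b\le8$, so an upper bound suffices for the stated gap even if the maximum is not attained.
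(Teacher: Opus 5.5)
Your proposal is correct and follows essentially the same route as the paper: kernel invariance from $\ker\widehat\Pi_b=\bigcap_\ell\ker\Pi_{b,\ell}$, the groupwise bound $\sum_\ell\Pi_{b,\ell}\preceq m_b\widehat\Pi_b\preceq 8\widehat\Pi_b$ summed to $H_{\mathrm{rec}}\preceq 8H_{\mathrm{bw}}$, and the group count $1+nK+K(n-1)+K+1=2nK+2$ read off the appendix table. One small point about your counting plan: in the paper's table the multiplicity eight occurs at every odd gate bond, not specifically at a sweep turn. The other groups have multiplicity six (even bonds), seven (odd nongate bonds), and two and one (left and right boundaries), and these values sum to $(14n-1)K+3$, as your cross-check predicts.
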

\begin{proof}
For a group carrying projectors $\Pi_{b,1},\ldots,\Pi_{b,m_b}$, let $\widehat\Pi_b$ project onto
the span of their ranges.  Their common kernel is unchanged and
$\sum_\ell\Pi_{b,\ell}\preceq m_b\widehat\Pi_b\preceq8\widehat\Pi_b$.  Summing over groups
gives $H_{\mathrm{rec}}\preceq8H_{\mathrm{bw}}$.  The count in
Appendix~\ref{app:source-normalization-count} has $N+1=2nK+2$ groups, which proves the merged
count and the factor-eight gap transfer.  In general $H_{\mathrm{rec}}\ne H_{\mathrm{bw}}$, but
their kernels agree, and the displayed comparison transfers the raw gap to $H_{\mathrm{bw}}$.
\end{proof}

\begin{lemma}[Exact nearest-neighbor routing]
\label{lem:source-routing}
Every problem in $\QMAone$ reduces in polynomial time to the fixed source family of
Theorem~\ref{thm:rgn}, with the normalization of Lemma~\ref{lem:source-normalization}.
Each scheduled two-site gate has both an exact gate sequence over
$\mathcal G$ and its entrywise-equal exact $4\times4$ matrix.
\end{lemma}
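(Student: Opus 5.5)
The plan is to compose three exact transformations, each polynomial time and each preserving acceptance probability exactly.

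\emph{Step 1: normalization.} Start from an arbitrary language in $\QMAone$ and apply Lemma~\ref{lem:ambient-normalization}. This gives a uniform family of verifier circuits over $\mathcal G=\{\widehat H,T,\mathrm{CNOT}\}$ with perfect completeness and NO acceptance at most $1/2$.

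\emph{Step 2: exact nearest-neighbor routing.} Convert each circuit into nearest-neighbor form on a line of $n$ wires. Every $\mathrm{CNOT}$ between wires $i<j$ is replaced by a conjugation with SWAP chains moving wire $j$ next to wire $i$, applying $\mathrm{CX}_{L\to R}$ or $\mathrm{CX}_{R\to L}$, and moving it back. Each SWAP is realized exactly as
\[
\mathrm{SWAP}=\mathrm{CX}_{L\to R}\,\mathrm{CX}_{R\to L}\,\mathrm{CX}_{L\to R},
\]
an identity of $4\times4$ permutation matrices with no phase ambiguity. Single-qubit gates become two-site gates $H_L,H_R,T_L,T_R$ acting on the appropriate neighbouring pair. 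The gates are then grouped into the rounds, or sweeps, that Nagaj's line architecture of Theorem~\ref{thm:rgn} expects. Idle positions in a round are filled by the identity, written exactly over $\mathcal G$ as, for instance, $\widehat H\widehat H=I$ or $T^8=I$, so no global phase is introduced.

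\emph{Step 3: padding.} Apply the padding of Lemma~\ref{lem:source-normalization}: add one idle $\ket0$ wire and, if needed, one identity sweep, so that $n,K\ge2$. This preserves the acceptance probability exactly.

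For each scheduled gate, the output records two things: the word over $\mathcal G$ and its $4\times4$ matrix. The matrix is computed by multiplying the constituent gates symbolically over $\mathbb Z[1/\sqrt2,e^{i\pi/4}]$, that is, in the ring $\mathbb Z[\zeta_8,1/2]$, where equality is decidable exactly. Entrywise equality between the word and the matrix then holds by construction. The circuit size grows by at most a factor $O(n)$ per gate, and the number of rounds $K$ stays polynomial, so the resulting instance of Theorem~\ref{thm:rgn} is produced in polynomial time.

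Because every replacement is an exact matrix identity, the overall unitary equals the original one, possibly tensored with identity on the idle ancilla. Hence YES witnesses are accepted with probability exactly one, and NO acceptance remains at most $1/2$.

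The main obstacle is matching Nagaj's specific sweep schedule on $d=11$ sites. His construction fixes which neighbouring pair may act at each step of a sweep. The routed circuit therefore has to be laid out so that each sweep applies exactly one (possibly identity) gate to each bond in the prescribed order. I would handle this by allotting a full sweep to each routed nearest-neighbor gate and filling all other bonds with exact identities; this increases $K$ only polynomially. I would also check that the order of the local gate definitions matches Nagaj's convention for the left and right qubit (the $H_L/H_R$ and $\mathrm{CX}_{L\to R}/\mathrm{CX}_{R\to L}$ distinction), since a transposed convention would silently change the unitary.
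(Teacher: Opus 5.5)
Your proposal is correct and follows essentially the paper's route. Like the paper, you pad to $n,K\ge2$, embed one-qubit gates as $H_L,H_R,T_L,T_R$, replace non-neighbour gates using the exact identity $\mathrm{SWAP}=\mathrm{CX}_{L\to R}\,\mathrm{CX}_{R\to L}\,\mathrm{CX}_{L\to R}$, and let each sweep position carry the required adjacent gate or the identity with its exact word and entrywise-equal matrix. The one step the paper states that you omit is placing the verifier's output qubit on the final wire, where the fixed family's output check acts; either order the wires so it is last or add one more exact SWAP chain, since otherwise the preserved unitary would be checked on the wrong qubit.
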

\begin{proof}
Adjoin an idle wire if necessary and, if necessary, insert a complete identity sweep immediately
before the final sweep, so $n\ge2$ and $K\ge2$ without changing acceptance.  Embed a one-qubit
gate in a two-site slot as $H_L$, $H_R$, $T_L$, or $T_R$.
Replace each non-neighbour two-qubit gate using the exact identity
\[
\mathrm{SWAP}_{j,j+1}=
\mathrm{CNOT}_{j,j+1}\,
\mathrm{CNOT}_{j+1,j}\,
\mathrm{CNOT}_{j,j+1},
\]
and relocate the output qubit to the final wire. In left-to-right sweeps, position $(r,j)$
carries the required adjacent operation or the identity, together with its exact matrix in
the ordered basis $00,01,10,11$ and an entrywise-equal exact gate sequence over $\mathcal G$.
These replacements and the final placement of the output check preserve acceptance exactly with
polynomial overhead.
\end{proof}

After Lemma~\ref{lem:ambient-normalization} has normalized the verifier, the routing and
circuit-padding operations preserve its acceptance probability exactly.
Proposition~\ref{prop:ambient-raw-gap} evaluates $G_{\mathrm{gap}}$ from the post-padding values
of $n$ and $K$, the resulting history length $L$, and the derived $N$ and $m_\star$.

\begin{lemma}[Exact circuits for the history-state projectors]
\label{lem:fieldtransfer}
For the history-state construction matrices, the unmerged projectors, bondwise-merged projectors, and
controlled reflections used by the associated exact verifier have phase-sensitive exact circuits
over $\mathcal G$ with eight clean work qubits and no additional work qubits.
\end{lemma}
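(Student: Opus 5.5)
The plan is to give exact circuits by treating every history-state matrix as a structured unitary or projector over the ring $\mathbb Z[\tfrac1{\sqrt2},i]$. The Giles--Selinger criterion then applies in its ancilla form: every unitary with entries in $\mathbb Z[\tfrac1{\sqrt2},i]$ is exactly synthesizable over $\mathcal G$, including global phase, using at most one clean work qubit. So the task splits into two parts: first show that all relevant matrices have entries in this ring, then control the number of work qubits.

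\textbf{Step 1: entries of the source projectors.} Each unmerged projector in Theorem~\ref{thm:rgn} acts on two $d=11$ sites, which we embed into $4+4$ qubits using $16$ levels per site and leaving unused levels fixed. Such a projector is of one of three kinds: a clock-legality projector, which is diagonal $0/1$; a boundary or penalty projector, which is diagonal $0/1$; or a propagation projector. A propagation projector has the form
\[
\tfrac12\left(\ket{t}\bra{t}+\ket{t{+}1}\bra{t{+}1}-\ket{t{+}1}\bra{t}\otimes U-\ket{t}\bra{t{+}1}\otimes U^\dagger\right),
\]
where $U$ is one of the exact $4\times4$ gate matrices supplied by Lemma~\ref{lem:source-routing}. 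Since $U$ has entries in $\mathbb Z[\tfrac1{\sqrt2},i]$, the factor $\tfrac12$ keeps the entries in the ring.

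\textbf{Step 2: reflections and merged projectors.} For a projector $\Pi$, the reflection $R=\Id-2\Pi$ is a unitary with ring entries, and the controlled reflection $\ket0\bra0\otimes\Id+\ket1\bra1\otimes R$ is one as well. For a bondwise-merged $\widehat\Pi_b$, the same conclusion requires showing that the orthogonal projector onto the span of the ranges still has ring entries. This is the main obstacle, because a Gram--Schmidt or pseudoinverse computation can introduce denominators outside the ring. I would avoid computing the span abstractly. Instead I would use the structure of Lemma~\ref{lem:source-grouping}: within one bond group, the raw projectors are either mutually orthogonal or nested. The clock-legality terms are diagonal, and a propagation term is supported on clock-legal configurations disjoint from the illegal ones. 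Therefore $\widehat\Pi_b$ is a sum of a diagonal $0/1$ projector and propagation projectors on orthogonal clock subspaces. I would check this case by case against the at most $8$ occurrences per bond, and then obtain the ring entries directly. If some occurrences overlap non-orthogonally, the fallback is to show that their ranges are spanned by orthonormal vectors with ring coordinates.

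\textbf{Step 3: work-qubit count.} Each circuit acts on the $8$ system qubits of two sites, and Giles--Selinger synthesis of a unitary on $8$ qubits with ring entries uses at most one ancilla. The controlled reflection needs its control qubit, and the projector measurement for the verifier is implemented by a phase-kickback pattern conjugating the controlled reflection with $\widehat H$. Counting the control, the synthesis ancilla, and the bookkeeping qubits that encode the unused levels $11$--$15$ as a flag, I would arrange the total to be exactly $8$ clean work qubits. Each work qubit is returned exactly to $\ket0$: the unused levels are mapped to themselves, and the synthesis is an exact matrix identity. The remaining bookkeeping check is that global phases agree. This follows because the synthesis theorem gives equality of matrices, not merely equality up to phase.
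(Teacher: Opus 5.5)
\textbf{Gap: the ring entries of the bondwise-merged projectors are not established.} Your route differs from the paper's. The paper gives explicit factorizations: atom phases, basis transpositions, $I-2P_U=DX_eD^\dagger$ with $D=\operatorname{diag}(I_4,U)$, controls only on the middle factor, and eight conjunction bits for the controlled ten-bit equality phase. You instead invoke Giles--Selinger. That is legitimate for a fixed finite family of matrices. It is phase-exact over $\mathcal G$ because $\zeta_8\Id_2=XTXT$ with $X=\widehat HT^4\widehat H$, and it would use only a single clean ancilla. But the whole route depends on the merged projectors having entries in $\mathbb Z[\tfrac1{\sqrt2},i]$, and that is exactly where your argument breaks.

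Your premise that the raw projectors in a bond group are mutually orthogonal or nested is false. The first-round input checks and the output check are diagonal projectors onto a proper channel (say $q=1$) of one endpoint fibre of a transition block. For $\ket x$ in that channel, the graph-projector range vector $(\ket x-V\ket x)/\sqrt2$ makes angle $\pi/4$ with the check's range. Oblique overlaps are precisely where span projectors can leave the ring. For example, the rank-one projectors onto $(1,1,0)^T/\sqrt2$ and $(0,1,1)^T/\sqrt2$ have ring entries, but their span is $u^\perp$ with $u=(1,-1,1)^T$, and its projector $\Id-\frac13uu^T$ does not. Your conclusion that $\widehat\Pi_b$ is a diagonal projector plus propagation projectors on orthogonal subspaces is also wrong in the gate case. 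The fallback you mention is never carried out.

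\textbf{What is needed.} The missing step is the paper's block analysis: partition the bond basis into mutually orthogonal support blocks and compute the span projector on each.
For an identity transition with a partial check, the span is the whole two-endpoint fibre on the checked channel plus the graph block on the other channel.
For the first-round gate block with good-input diagonal $Q$, the span projector is
$\Id_8-\frac12\begin{pmatrix}Q&QU^\dagger\\UQ&UQU^\dagger\end{pmatrix}=\Id-D(\ket+\!\bra+_e\otimes Q)D^\dagger$.
Its ring entries are visible. However, it equals the graph projector of $U$ on $\operatorname{range}Q$ plus the projector onto $\operatorname{range}(\Id-Q)\oplus U\operatorname{range}(\Id-Q)$, whose image part $U(\Id-Q)U^\dagger$ need not be diagonal. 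With these cases written out, Giles--Selinger completes your argument.

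\textbf{Other corrections.} Replace your work-qubit paragraph, which asserts the count rather than deriving it. One clean ancilla suffices for each reflection and for its controlled version (the control is a verifier register), and that meets the eight-qubit bound. Also, the paper encodes each site in five even-weight qubits with invalid codewords penalized, so a bond has ten data bits, not $4+4$ with unused levels left fixed. This does not affect the ring argument.
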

\begin{proof}
Appendix~\ref{app:source-reflections} constructs the partial-support blocks and proves the
eight-clean-qubit bound.
\end{proof}

\begin{remark}[History-state families and gap bounds]
The subsequent reductions use the fixed uniform $d=11$ chain of projectors from
Theorem~\ref{thm:rgn} and Lemma~\ref{lem:source-normalization}, while keeping the raw bond
multiplicities $m_b$ distinct from the site load $D_{\mathrm{site}}$, anchor multiplicity $A$, and
overlap degree $\Delta_{\mathrm{ov}}$ of the bondwise-merged family.  Nagaj's theorem supplies the
source architecture and an inverse-polynomial gap; Proposition~\ref{prop:ambient-raw-gap} gives
the explicit lower bound for the normalized raw Hamiltonian.  Theorem~\ref{thm:minimal-hardness} uses the bondwise-merged
$H_{\mathrm{bw}}$ of Lemma~\ref{lem:source-grouping}, whose gap loses at most a factor of eight
relative to the raw source.  The explicit
$\Theta(1/L^2)$ estimate used later comes from the spectral gap of an auxiliary path-graph
Laplacian~\cite{Kitaev2002}.
\end{remark}

\section{Fermionization on the line}
\label{sec:ferm}

We fermionize the qudit input by a local qubit code followed by open-boundary
Jordan--Wigner. This gives a full-Fock encoding without the local occupation constraints of
Batista and Ortiz~\cite{BatistaOrtiz2001}.

\subsection{Local qudit-to-qubit encoding}

Encode each $d=11$ source site into $m_{11}=\lceil\log_2 11\rceil+1=5$ qubits by a
fixed isometry
\[
\iota:\mathbb{C}^{11}\hookrightarrow(\mathbb{C}^2)^{\otimes5},
\]
sending the eleven source states to distinct even-weight computational words.

For a source projector $h_a$ supported on sites $T_a$, let
$\widetilde h_a=\iota_{T_a}h_a\iota_{T_a}^\dagger$ on the encoded valid subspace and extend it
by zero on the unused local codewords.  Let $\Pi_i^{\mathrm{inv}}$ project onto the unused
subspace of encoded particle $i$.

Assume that the source projector sum $\sum_a h_a$ is frustration-free in a positive instance
and has a declared NO lower bound $0<\epsilon\le1$ in a negative instance.

\begin{lemma}[The local encoding preserves the promise and geometry]
\label{lem:encoding}
The Hamiltonian $\widetilde H=\sum_a\widetilde h_a+\sum_i\Pi^{\mathrm{inv}}_i$ is a sum of
geometrically local
qubit projectors with two properties. A positive instance keeps a common
zero eigenvector, so $\widetilde H$ stays frustration-free with $E_0=0$, and a
negative instance has $E_0\ge\epsilon$.
\end{lemma}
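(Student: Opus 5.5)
The plan is to split the encoded Hilbert space $\mathcal K=(\mathbb C^2)^{\otimes 5N}$ into the valid code space $\mathcal V=\iota^{\otimes N}(\mathbb C^{11})^{\otimes N}$ and its orthogonal complement, and to check that $\widetilde H$ is block diagonal with respect to this split. Let $P_i=\iota_i\iota_i^\dagger=\Id-\Pi_i^{\mathrm{inv}}$ be the projector onto valid codewords of particle $i$. These local projectors act on disjoint qubit blocks, so they commute, and $\Pi_{\mathcal V}=\prod_iP_i$.

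\textbf{Locality and projector structure.} Each $\widetilde h_a=\iota_{T_a}h_a\iota_{T_a}^\dagger$ is a projector, because $\iota_{T_a}$ is an isometry and $h_a$ is a projector. It is supported on the $5|T_a|$ qubits encoding $T_a$. Each $\Pi_i^{\mathrm{inv}}$ is a projector on the five qubits of site $i$. Hence nearest-neighbour source terms become terms of span at most two encoded sites, or ten consecutive qubits. Moreover, $\widetilde h_a$ satisfies $\widetilde h_a=P_{T_a}\widetilde h_aP_{T_a}$ and commutes with every $P_i$, so $\widetilde H$ commutes with $\Pi_{\mathcal V}$.

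\textbf{YES case.} Take a common zero vector $\psi$ of all $h_a$ and set $\widetilde\psi=\iota^{\otimes N}\psi$. Then $\Pi_i^{\mathrm{inv}}\widetilde\psi=0$. Also $\widetilde h_a\widetilde\psi=\iota^{\otimes N}(h_a\psi)=0$, since $\iota_{T_a}^\dagger$ acting on the valid part restores the source operator. So $\widetilde\psi$ is a common zero vector, and $E_0=0$.

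\textbf{NO case.} Here I use the decomposition $\mathcal K=\mathcal V\oplus\mathcal V^\perp$, which $\widetilde H$ preserves, and bound each block separately.
\begin{itemize}
\item On $\mathcal V$, every $\Pi_i^{\mathrm{inv}}$ vanishes. Conjugating by the unitary $\iota^{\otimes N}:(\mathbb C^{11})^{\otimes N}\to\mathcal V$ identifies $\widetilde H|_{\mathcal V}$ with $\sum_ah_a\succeq\epsilon\Id$.
\item On $\mathcal V^\perp$, refine further. Decompose $\mathcal K$ into the joint eigenspaces of the commuting $P_i$, labelled by the set $S\neq\varnothing$ of invalid sites. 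Each $\widetilde h_a$ preserves these sectors, since it commutes with all $P_i$.
\item On a sector $S$, the sum $\sum_i\Pi_i^{\mathrm{inv}}$ equals $|S|\Id\succeq\Id$. Since $\widetilde h_a\succeq0$, this gives $\widetilde H\succeq\Id\succeq\epsilon\Id$ there, using $\epsilon\le1$.
\end{itemize}
Combining the two blocks gives $\widetilde H\succeq\epsilon\Id$.

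\textbf{Main obstacle.} The step needing care is showing that $\widetilde h_a$ does not couple valid and invalid configurations. Concretely, I would check that $\widetilde h_a$, extended by zero off $\iota_{T_a}$'s range and tensored with identity elsewhere, commutes with each $P_i$, including for $i\notin T_a$ where this is trivial. This commutation is what makes the sector decomposition exact, so that no cross terms can lower the energy below $\min(\epsilon,1)$.
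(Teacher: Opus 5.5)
Your proposal is correct and follows the paper's own argument. You block-diagonalize $\widetilde H$ with respect to the valid code space using $\widetilde h_a=P_{T_a}\widetilde h_aP_{T_a}$, identify the valid block isometrically with $\sum_a h_a$, and bound the invalid block below by $1\ge\epsilon$ through the penalties; your refinement by the set $S$ of invalid sites is simply an explicit form of the paper's step ``at least one $\Pi_i^{\mathrm{inv}}$ contributes one.''
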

\begin{proof}
On the all-valid subspace, $\sum_a\widetilde h_a$ is isometric to $\sum_a h_a$; hence the
encoded accepting history is a common zero vector, while a negative instance has energy at
least $\epsilon$. Each $\widetilde h_a=\iota h_a\iota^\dagger$ has range and domain in the
valid subspace of its particles and therefore commutes with every invalid-code projector,
so $\widetilde H$ is block diagonal. On its orthogonal complement at
least one $\Pi_i^{\mathrm{inv}}$ contributes one while $\sum_a\widetilde h_a\succeq0$; every
state in this invalid-code block therefore has energy at least $1\ge\epsilon$. A $\widetilde h_a$ occupies one ten-mode
two-site window and each penalty one encoded particle, preserving geometry.
\end{proof}

\subsection{Open-boundary Jordan--Wigner}

The $N$ source sites yield $N_q:=5N$ encoded qubits.
Order the $N_q$ encoded qubits along the line and apply
$\sigma_i^+=c_i^\dagger\prod_{j<i}(1-2n_j)$ and $\sigma_i^z=1-2n_i$. On an open
chain this is a unitary $*$-isomorphism of operator algebras, hence isospectral,
and the parity-sector subtleties of a ring do not arise.

\begin{lemma}[Jordan--Wigner preserves the spectrum and locality]
\label{lem:jw}
Let $\tilde H=\sum_r\tilde h_r$ act on contiguous blocks of the open qubit chain,
each $\tilde h_r$ even under the Jordan-Wigner parity. Its fermionic image
$H_{in}=\sum_r h_r$ has the same spectrum, each $h_r$ is even and keeps the contiguous
support of $\tilde h_r$, and $H_{in}$ is frustration-free with the promise of
Lemma~\ref{lem:encoding}. The full occupation-basis Fock space is used: no particle-number
sector is fixed or penalized, and only fermion parity is superselected; both parity sectors
remain in the spectrum.
\end{lemma}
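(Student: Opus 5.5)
The plan is to treat the open-boundary Jordan--Wigner map $J$ as a unital $*$-isomorphism from the qubit operator algebra $M_2(\mathbb C)^{\otimes N_q}$ onto the CAR algebra generated by $c_1,\dots,c_{N_q}$. I would realize it concretely as conjugation by a unitary between the qubit computational basis and the occupation basis of the Fock space, identifying $\ket{x_1\cdots x_{N_q}}$ with $(c_1^\dagger)^{x_1}\cdots(c_{N_q}^\dagger)^{x_{N_q}}\ket{\mathrm{vac}}$. On an open chain no boundary string wraps around, so this identification is exact and needs no parity projection. I would first verify that the images $c_i=\prod_{j<i}\sigma_j^z\,\sigma_i^-$ satisfy the CAR relations and generate the full $2^{N_q}$-dimensional matrix algebra; this is the standard computation. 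Since $h_r=J(\tilde h_r)=U\tilde h_rU^\dagger$ and $H_{in}=U\tilde HU^\dagger$, the spectrum of $H_{in}$ equals that of $\tilde H$ with multiplicities. Each $h_r$ is positive semidefinite, and its kernel is the image under $U$ of the kernel of $\tilde h_r$. Hence a common zero vector of the $\tilde h_r$ maps to a common zero vector of the $h_r$, and the NO lower bound $E_0\ge\epsilon$ of Lemma~\ref{lem:encoding} transfers unchanged. This settles the spectral and frustration-free parts.

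Next I would check parity and locality. The Jordan--Wigner parity $\prod_i\sigma_i^z$ maps to $\prod_i(1-2n_i)=(-1)^{\hat N}=P$, so an operator $\tilde h_r$ commuting with it maps to an even $h_r$. For the contiguous-support claim, suppose $\tilde h_r$ acts on the block $[a,b]$. I would expand it in products of $\sigma^\pm$ and $\sigma^z$ on that block, or equivalently in Pauli words on $[a,b]$. Each $\sigma_i^{\pm}$ with $a\le i\le b$ equals $c_i^{(\dagger)}\prod_{j<i}(1-2n_j)$. The string splits into a factor $\prod_{j<a}(1-2n_j)$ and a factor inside the block, and that outside string appears once per odd factor. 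Evenness of $\tilde h_r$ means that each monomial in its expansion contains an even number of $\sigma^\pm$ (equivalently of $X,Y$) factors. The outside strings therefore cancel pairwise, using $(1-2n_j)^2=1$ and the fact that each string commutes with the modes in $[a,b]$. Each monomial thus becomes a polynomial in $c_i,c_i^\dagger$ with $i\in[a,b]$, so $h_r$ lies in the CAR subalgebra of $[a,b]$. This is the step I would write most carefully: evenness is used exactly here, and it must be applied to a parity-homogeneous decomposition. The right move is to first project $\tilde h_r$ onto its even part, which equals $\tilde h_r$ itself, so that the cancellation holds monomial by monomial.

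To apply the lemma to $\widetilde H$ of Lemma~\ref{lem:encoding}, I would still have to confirm that the encoded terms are even. Each $\widetilde h_a$ and each $\Pi_i^{\mathrm{inv}}$ must commute with the local parity of its block. For $\widetilde h_a$ this follows because $\iota$ sends the eleven source states to even-weight words, so $\widetilde h_a$ acts within the product of even-weight valid subspaces. For $\Pi^{\mathrm{inv}}_i$ it follows because it is a diagonal projector in the computational basis. Finally, the full Fock space is used because $U$ is a bijection between the entire qubit space and the entire Fock space. No particle-number or parity constraint is imposed, so both parity sectors are present; since every $h_r$ is even, $H_{in}$ is block diagonal across these two sectors, and its spectrum is the union of the spectra of the two blocks.
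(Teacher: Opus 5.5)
Your proposal is correct and follows the paper's proof: the open-chain Jordan--Wigner map is a unitary $*$-isomorphism, which gives isospectrality, transport of common kernels, and the identification of the two parities, and the evenness of each block term makes the Jordan--Wigner strings cancel in pairs, so each image stays inside its block. Your splitting of each string into the outside factor $\prod_{j<a}(1-2n_j)$ and an in-block factor is only a slightly cleaner bookkeeping of the paper's left-to-right pairing. Your explicit check that the encoded terms are even matches the paper's appeal to the even-weight encoding.
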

\begin{proof}
A unitary $*$-isomorphism preserves the spectrum and carries a common zero
eigenvector to a common zero eigenvector, which gives isospectrality and
frustration-freeness; it also carries the qubit parity to the fermion parity, so an
even $\tilde h_r$ has an even image. For locality, expand $\tilde h_r$ in the Pauli basis of its
block. Evenness means that every surviving monomial carries an even number of $\sigma^x$ and
$\sigma^y$ letters, say at positions $i_1<\dots<i_{2m}$ inside the block, with $\sigma^z$ letters
elsewhere. Each $\sigma^x$ or $\sigma^y$ at position $i$ contributes the string
$\prod_{j<i}(1-2n_j)$, and the strings pair off from the left,
$\bigl(\prod_{j<i_1}\bigr)\bigl(\prod_{j<i_2}\bigr)=\prod_{i_1\le j<i_2}$ and so on, so the
factors reaching back to site $1$ cancel and what remains,
$\prod_{i_1\le j<i_2}(1-2n_j)\cdots\prod_{i_{2m-1}\le j<i_{2m}}(1-2n_j)$, is supported inside
$[i_1,i_{2m}]$. The $\sigma^z$ letters map to the on-site $1-2n_i$. Every factor of the image
therefore sits inside the block. An odd monomial would keep the uncancelled leading string
$\prod_{j<i_1}(1-2n_j)$ and run to the boundary of the chain, which is exactly what the
even-weight encoding of Lemma~\ref{lem:encoding} rules out.
\end{proof}

\begin{proposition}[Fermionic encoding of the one-dimensional history Hamiltonian]
\label{prop:ferm}
The fixed source of Theorem~\ref{thm:rgn}, with the parameters and bondwise merge of
Lemmas~\ref{lem:source-normalization} and~\ref{lem:source-grouping}, has a polynomial-time
fermionic encoding as an open chain of
\[
M=2nK+2
\]
local projectors $h_c=\Pi_c$, each of which is parity even, with
$H_{in}=\sum_ch_c=H_{\mathrm{bw}}$. Each bondwise-merged projector is supported on at most ten
consecutive encoded modes. For this family, the anchored-family locality data of
Definition~\ref{def:anchored-locality-data} are
$w=10$, $D_{\mathrm{site}}\le3$, $A\le2$, and $\Delta_{\mathrm{ov}}\le4$. The family is frustration-free on YES
instances and on NO instances satisfies
\[
E_0(H_{in})\ge\epsilon_{\mathrm{ff}}\ge\epsilon_{\mathrm{src}}/8.
\]
\end{proposition}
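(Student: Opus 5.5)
The proof composes the three earlier layers in order: bondwise merging, local qubit encoding, and open-boundary Jordan--Wigner. Each layer is checked to preserve (i) the term count, (ii) the zero-energy property on YES instances, (iii) the NO lower bound, and (iv) contiguity and parity of every term.

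\emph{Source and merge.} Start from the source family of Theorem~\ref{thm:rgn}, normalized by Lemmas~\ref{lem:ambient-normalization}, \ref{lem:source-routing} and~\ref{lem:source-normalization}. Apply the bondwise merge of Lemma~\ref{lem:source-grouping}. This gives $M=2nK+2$ projectors $\widehat\Pi_b$, one per boundary or physical bond of the $N=2nK+1$ source sites. Each projector is supported on at most two adjacent $d=11$ sites, or on one site for a boundary group. The same lemma supplies the kernel statement and the bound $E_0(H_{\mathrm{bw}})\ge\epsilon_{\mathrm{src}}/8$, with $\epsilon_{\mathrm{src}}=G_{\mathrm{gap}}^{-1}$ from Proposition~\ref{prop:ambient-raw-gap}.

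\emph{Encoding and Jordan--Wigner.} Encode by $\iota$ into $5$ qubits per site, using even-weight codewords, so that each $\widetilde{\widehat\Pi}_b=\iota\widehat\Pi_b\iota^\dagger$ is parity even. Lemma~\ref{lem:encoding} then carries the promise over with the same $\epsilon$. Its support is a ten-qubit contiguous window. Lemma~\ref{lem:jw} then gives even fermionic projectors on the same ten consecutive modes, and the spectrum is preserved.

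\emph{The invalid-code penalties.} These are the one point needing care, since the statement lists only $M$ projectors and identifies $H_{in}$ with $H_{\mathrm{bw}}$. I would absorb each $\Pi^{\mathrm{inv}}_i$ into an adjacent bond group. Replace $\widehat\Pi_b$ by the projector onto the span of its range together with the invalid subspaces of the sites it touches; these ranges are orthogonal because $\widetilde{\widehat\Pi}_b$ lives on the valid subspace. The count $M$ is then unchanged, and the kernel is unchanged. The gap argument of Lemma~\ref{lem:encoding} still applies: on the valid block the operator equals the encoded $H_{\mathrm{bw}}$, and on the invalid block every state gets energy at least $1$. This justifies writing $H_{in}=H_{\mathrm{bw}}$ as an identity of encoded operators on the valid sector, and the bound $\epsilon_{\mathrm{ff}}\ge\epsilon_{\mathrm{src}}/8$ follows.

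\emph{Locality data.} This is the main bookkeeping step. Take the interval $I_c$ of the merged projector at bond $(s,s+1)$ to be the ten modes of sites $s,s+1$, which gives $w=10$. Anchor it at its left site; with the left and right boundary groups both anchored at end sites, each site receives at most two anchors, so $A\le2$. A site $s$ lies in the intervals of bonds $(s-1,s)$ and $(s,s+1)$, plus possibly one boundary group, so $D_{\mathrm{site}}\le3$. For $\Delta_{\mathrm{ov}}$, Lemma~\ref{lem:overlap-counting} only yields $29$, so I would count directly. A bond interval meets the intervals of its two neighbouring bonds and at most two boundary/single-site groups at its endpoints. This gives $\Delta_{\mathrm{ov}}\le4$.

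I expect the main obstacle to be making this overlap count match the exact grouping convention of the appendix. In particular, I would need to confirm where the $N+1$ groups sit, namely $N-1$ physical bonds plus two boundary groups, so that the direct count stays at $4$.
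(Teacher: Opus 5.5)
Your proposal is correct and follows the paper's own route: bondwise merge, even-weight encoding, open-boundary Jordan--Wigner, and a direct reading of the locality data from the $N-1$ bond groups and two boundary groups, which matches the group layout of Appendix~\ref{app:source-normalization-count}. The one difference is that the paper already counts the invalid-code projectors among the raw occurrences assigned to the groups, and includes them in the gap of Proposition~\ref{prop:ambient-raw-gap}. So your absorption step is built into the source, and $H_{in}=H_{\mathrm{bw}}$ holds as an operator identity under Jordan--Wigner, not merely on the valid sector.
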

\begin{proof}
Apply the even-weight encoding and Lemma~\ref{lem:jw} to the projectors obtained by the bondwise
merge. The source construction includes its invalid-code completion within the count
$M=2nK+2$. Each encoded two-site window has ten modes and remains contiguous under
Jordan--Wigner. The bondwise-merged source data directly give the stated site load, anchor
multiplicity, and overlap degree; the constant range and raw multiplicity bound
$m_{\mathrm{bond}}=8$ are separate parameters.
Isospectrality preserves the
frustration-free YES case and the lower bound $\epsilon_{\mathrm{ff}}$.
\end{proof}

\section{Nullspace verification for sparse matrices}
\label{sec:containment}

This section isolates the exact verification mechanism shared by the three input models below.
Exact sparse access means that reversible circuits enumerate the nonzero entries of a requested
row or column and return their coefficients without approximation.  Throughout,
\[
K_8=\mathbb Q(\zeta_8),\qquad \zeta_8=e^{i\pi/4},
\]
with rational basis $(1,\zeta_8,\zeta_8^2,\zeta_8^3)$, and all verifier circuits use
$\mathcal G$ as defined in Section~\ref{sec:qma1}.

\begin{lemma}[Exact multiplication by a normalized input coefficient]
\label{lem:stable-quadratic-dyadic-multiplier}
Let
\[
\alpha=\frac{a+b\sqrt2+i(c+d\sqrt2)}{2^t},
\qquad a,b,c,d\in\mathbb Z,\quad t\ge0,
\]
be a nonzero entry of an explicitly supplied normalized operator.  Suppose that the signed
coordinates and the binary encoding of $t$ use at most $S$ bits.  Then $|\alpha|\le1$, and there
is a deterministic polynomial-time exact sparse relation for $y=\alpha x$ with coefficients in a
fixed finite subset of $K_8$.  For each source value $x$, the zero-residual history is unique and
has endpoint exactly $y=\alpha x$.  Its description, access circuits, and conditioning are bounded
by polynomials in $S$ that are independent of the numerical value of $t$.
\end{lemma}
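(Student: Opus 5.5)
The bound $|\alpha|\le1$ is immediate: a nonzero entry of a normalized operator (operator norm at most one) satisfies $|\alpha|=|\langle e_i|A|e_j\rangle|\le\lVert A\rVert\le1$. The real content is the multiplier relation, and the obstacle is that $t$ may be exponentially large in value while only its binary encoding is short. Writing $\alpha=\beta\cdot2^{-t}$ with $\beta=a+b\sqrt2+i(c+d\sqrt2)$ therefore does not give a polynomial-size chain of halvings, and $\beta$ itself can be exponentially large. I would instead use repeated squaring on the dyadic factor, organized so that every intermediate quantity stays bounded.

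The plan is to build a sparse linear system whose unknowns are a history of intermediate values $x=z_0,z_1,\dots,z_m=y$. Each consecutive pair is tied by a relation $z_{k+1}=\gamma_k z_k$, with $\gamma_k$ drawn from the fixed finite set $\{1,\tfrac12,\sqrt2,\tfrac1{\sqrt2},i,-1,\zeta_8\}\subset K_8$, or by a short linear combination $z_{k+1}=\sum_\ell\gamma_{k,\ell}u_\ell$ of a constant number of earlier registers.

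The multiplier by $\beta$ is implemented by the binary expansions of $a,b,c,d$, in Horner form with doubling. The factor $2^{-t}$ must not be realized as $t$ separate halvings; since $t$ is given in binary, I would write $2^{-t}=\prod_{j:\,t_j=1}2^{-2^j}$. The factors $2^{-2^j}$ cannot be built with bounded coefficients by literal squaring, because the relation is linear in $x$ and cannot square numbers.

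The key observation is that because $|\alpha|\le1$ and $\beta\in\mathbb Z[\sqrt2,i]$, we have $|\beta|\le2^t$. Coordinates of size at most $2^S$ force $|\beta|\lesssim 2^{S+2}$. So whenever $\alpha\neq0$ and $t>S+3$, the value $|\alpha|$ is at most $2^{S+3-t}$. I would reduce such cases by first interleaving halvings with the Horner doublings, processing the bits of $\beta$ from the top while shifting. The result is a chain of length $O(S+\min(t,S+3))$, and if $t$ exceeds $S+3$ by a lot, the chain ends with a single block that multiplies by $2^{-(t-S-3)}$.

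That last block is the main obstacle. I would first try to realize it by a doubling-exponent gadget: registers $w_j$ with constraints $w_{j}=2^{-2^{j}}w_0$ are not linear, so instead I would exploit that an exponentially tiny $y$ may be encoded through scale. Uniqueness and conditioning only require the relation matrix to have condition number polynomial in $S$. I would therefore check whether the paper's later use tolerates rescaling the endpoint register by a declared power of two recorded symbolically. If it does not, I would try to argue that normalized inputs with $t\gg S$ force $\beta$ to have a large power-of-two content that cancels, reducing $t$ to $O(S)$ after canceling common factors of $2$ and $\sqrt2$ in $\mathbb Z[\sqrt2,i]$.

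Given the reduction to $t=O(S)$, the rest is routine. The system is lower bidiagonal with unit diagonal after fixing $z_0=x$, so the zero-residual history is unique and forward-substitution gives endpoint $\alpha x$. Every coefficient lies in a fixed finite subset of $K_8$, and conditioning follows from bounding the inverse of a bidiagonal matrix whose off-diagonal entries have modulus at most $2$ along a chain of length $O(S)$. Interleaving halvings keeps the partial products bounded by a constant, which gives a polynomial bound independent of the numerical value of $t$. Access circuits just enumerate the $O(1)$ nonzeros per row from the bit strings.
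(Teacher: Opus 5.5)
\paragraph{Verdict.} Your bound $|\alpha|\le1$ is fine. There is a genuine gap, however: once at the point you flag yourself, and once at a point you do not.

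\paragraph{First gap: the large-$t$ tail.} The large-$t$ block is never realized, and the fallback you lean on is false. Normalization gives only $|\beta|\le2^t$, which places no upper bound on $t$. For instance, $a=1$, $b=c=d=0$, $t=2^{\lfloor S/2\rfloor}$ gives a legitimate nonzero entry $\alpha=2^{-t}$ with no factor of $2$ or $\sqrt2$ to cancel. Symbolic rescaling of the endpoint is also excluded. The statement demands the endpoint be exactly $\alpha x$, and Lemma~\ref{lem:n1-hermitian-sparse-relation} sums the endpoints of different occurrences, with signs, into one target row, so they must share a single scale.

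The missing observation is that literal halvings are acceptable even when there are exponentially many of them. The lemma bounds the description, access circuits, and conditioning, not the number of history variables. A chain $q_j=q_{j-1}/2$ has a lower-bidiagonal residual matrix whose inverse has norm at most $\sum_{k\ge0}2^{-k}=2$, whatever its length. Its stages carry $O(S)$-bit binary labels, so predecessor, successor, and endpoint access are binary arithmetic and never loop over the value of $t$. This is the paper's half-contraction tail. The exponential dimension is harmless in the sparse-access model; the final matrix lives on $32(S+1)$ qubits anyway.

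\paragraph{Second gap: cancellation inside the coordinates.} Your conditioning claim for the case $t=O(S)$ fails when $a+b\sqrt2$ or $c+d\sqrt2$ nearly cancels. For example, $99-70\sqrt2=(\sqrt2-1)^6$. More generally, powers $(\sqrt2-1)^n$ have coordinates of $\Theta(S)$ bits but modulus $2^{-\Theta(S)}$, so normalization permits $t$ far below the coordinate bit length $m$.

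\begin{itemize}
\item Any chain that evaluates the binary digits of $\beta$ must then apply a net amplification of about $2^{m-t}=2^{\Theta(S)}$, and there are too few halvings to interleave.
\item A residual $\epsilon$ in the first Horner row, with all other residuals zero, moves the endpoint by about $2^{m-t}\epsilon$. Both the endpoint estimate and the history estimate therefore become exponential.
\item Your bidiagonal bound, with off-diagonal entries up to $2$ over a chain of length $O(S)$, only gives $2^{O(S)}$ in any case.
\end{itemize}

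\paragraph{How the paper fixes it.} It first multiplies each coordinate pair by $(1+\sqrt2)^k$ with $k=S+2$, via $T(u,v)=(u+2v,u+v)$. The Galois-conjugate-to-value ratio is below $9\cdot4^S$ and is damped by $(1+\sqrt2)^{-2k}$, so the new pairs $A,B$ and $C,D$ each share a sign. The halving Horner recursion $s_{j+1}=(s_j+d_jx)/2$ with these sign-aligned digits then has partial multipliers below $4$ and endpoint multiplier $|h|\ge1/2$. Hence the leftover real factor $\gamma=2^{m-t}\rho^k$, with $\rho=\sqrt2-1$, lies in $(0,2]$. The paper realizes $\gamma$ in two cases.
\begin{itemize}
\item When $m<t$: $k$ contractions by $\rho$, followed by the halving tail above.
\item When $m\ge t$: the $m-t$ doublings are interleaved with the $\rho$ factors, so every contiguous factor product stays below $2/\rho<5$.
\end{itemize}
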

\begin{proof}
Appendix~\ref{app:stable-quadratic-dyadic-multiplier} gives the addressed construction and its
endpoint and history estimates.
\end{proof}

The lemma applies to entries of normalized supplied terms or projector constraints.  A fixed outer
factor, such as the factor $\sqrt2$ in a constraint residual, is applied after compiling the
normalized entry.  Arbitrary-magnitude succinct scalars lie outside this input domain.

\begin{lemma}[Kernel-preserving Hermitian completion]
\label{lem:cokernel-safe-hermitian-completion}
Let $R_x:X_x\to Y_x$ be a finite-dimensional relation.  Suppose that, for
positive integer polynomials $L$ and $P$,
\[
\lVert R_x\rVert\le L(S),\qquad
\lVert v\rVert\le P(S)\lVert R_xv\rVert
\quad\text{on every promised NO input}.
\]
Set
\[
K_{R_x}=\begin{pmatrix}0&R_x^\dagger\\ R_x&\Id_{Y_x}\end{pmatrix},
\qquad B(S)=2L(S)+2.
\]
Then
\[
\ker K_{R_x}=\ker R_x\oplus\{0\},\qquad
\left\lVert\frac{K_{R_x}}{B(S)}\right\rVert\le\frac12,
\]
and on a NO input
\[
\sigma_{\min}\!\left(\frac{K_{R_x}}{B(S)}\right)
\ge\frac1{P(S)^2B(S)^2}.
\]
If valid relation objects have unique fixed-width labels, extending this normalized block by the
identity on every invalid label preserves its kernel and NO singular gap.
\end{lemma}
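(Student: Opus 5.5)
The plan is to compute the spectrum of the Hermitian block matrix $K_{R_x}$ directly, by reducing its eigenvalue equation to the spectrum of $R_x^\dagger R_x$. Write $R=R_x$ and a vector as $(v,w)\in X_x\oplus Y_x$, so that $K_{R_x}(v,w)=(R^\dagger w,\;Rv+w)$.

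\emph{Kernel.} If $K_{R_x}(v,w)=0$, then $w=-Rv$ and $R^\dagger R v=0$. Hence $\lVert Rv\rVert^2=\langle v,R^\dagger Rv\rangle=0$, so $Rv=0$ and $w=0$. Conversely every $(v,0)$ with $Rv=0$ lies in the kernel. This gives $\ker K_{R_x}=\ker R_x\oplus\{0\}$.

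\emph{Norm.} By the triangle inequality on the off-diagonal and diagonal parts, $\lVert K_{R_x}\rVert\le\lVert R\rVert+1\le L(S)+1=B(S)/2$. Dividing by $B(S)$ gives the bound $\tfrac12$.

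\emph{NO singular gap.} This is the only real computation. Since $K_{R_x}$ is Hermitian, $\sigma_{\min}$ is the smallest $|\lambda|$ over its eigenvalues, and the eigenvalue equation is $R^\dagger w=\lambda v$, $Rv+w=\lambda w$.
\begin{itemize}
\item Case $\lambda=1$. The second equation gives $Rv=0$, and the first gives $v=R^\dagger w$. Then $\lVert v\rVert^2=\langle w,Rv\rangle=0$, so $v=0$, and these eigenvalues have $|\lambda|=1$.
\item Case $\lambda\ne1$. We get $w=Rv/(\lambda-1)$ and $R^\dagger Rv=\lambda(\lambda-1)v$ with $v\ne0$. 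So $\lambda^2-\lambda=s^2$ for some eigenvalue $s^2$ of $R^\dagger R$.
\end{itemize}
On a NO input the hypothesis $\lVert v\rVert\le P\lVert Rv\rVert$ forces $s\ge1/P(S)$, and $s\le\lVert R\rVert\le L(S)$. The root of smaller modulus is
\[
|\lambda|=\frac{2s^2}{1+\sqrt{1+4s^2}}\ge\frac{2s^2}{2+2s}\ge\frac{2}{P^2B},
\]
using $\sqrt{1+4s^2}\le1+2s$ and $2+2s\le 2L+2=B$. Combining with $1\ge 2/(P^2B)$ from the first case and dividing by $B(S)$ gives $\sigma_{\min}(K_{R_x}/B)\ge 2/(P^2B^2)\ge1/(P^2B^2)$.

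\emph{Invalid labels.} Extending by the identity on invalid labels is an orthogonal direct sum with a block whose eigenvalues are all $1$. That block contributes nothing to the kernel and has singular values $1\ge 1/(P^2B^2)$, so it preserves both the kernel and the NO gap.

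I expect the main obstacle to be the small-eigenvalue estimate. The point to get right is that the quadratic $\lambda^2-\lambda-s^2$ has one root of order $s^2$ rather than $s$, which is exactly why the gap bound carries $P^2$. Bounding its denominator by $B$ uses the upper bound $s\le L$; I would double-check that step, and also that the $\lambda=1$ sector is not a hidden source of small eigenvalues.
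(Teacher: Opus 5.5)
Your proof is correct and follows the paper's own argument. It uses the same $w=-Rv$ kernel computation, the same eigenvalue-one sector on $\ker R^\dagger$, the same small root $\lvert\lambda_-\rvert=2s^2/(1+\sqrt{1+4s^2})$ bounded through $1/P\le s\le L$, and the same direct-sum identity block for invalid labels. The only differences are that you solve the eigenvalue equation directly where the paper reads it off the singular-value $2\times2$ blocks, and that you retain an extra factor of two in the constant.
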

\begin{proof}
The block calculation and the exact label/access realization appear in
Appendix~\ref{app:cokernel-safe-completion}.
\end{proof}

Invalid labels are assigned a direct-sum identity block, so they cannot create a kernel.
Uniform padding instead tensors the valid matrix with a spectator identity register and thereby
repeats its singular values and valid nullity.

\begin{definition}[Uniform exact sparse Hermitian access]
\label{def:generic-exact-sparse-access}
A uniform exact sparse Hermitian access scheme is specified by fixed deterministic decoding and
construction algorithms defined on every input string.  On an input $x$ of physical length $S$,
it specifies a Hermitian matrix $A_x$ on
$m$ qubits and clean reversible row, column, exact-value, and adjoint-value circuits.  Each
orientation has at most $d$ valid slots, lists every nonzero entry exactly once and no zero entry,
and returns every work register to zero.

Each logical object has one canonical fixed-width label.  All noncanonical strings form an
identity block, and malformed inputs are mapped to a fixed identity matrix.  Contributions to
the same matrix position are aggregated exactly in $K_8$.  The resulting entry is suppressed
precisely when all four coordinates vanish; the row, column, and value circuits use the same
validity predicate.
\end{definition}

\begin{definition}[Admissible exact-sparse Hermitian input model]
\label{def:generic-admissible-exact-sparse}
A fixed access scheme of Definition~\ref{def:generic-exact-sparse-access} is admissible if a
positive nondecreasing integer polynomial $P_{\mathcal E}$, fixed independently of the input,
bounds $m$, the sparsity, every circuit size and runtime, all register and clean-work widths, and
all exact-arithmetic widths, with $S\le m\le P_{\mathcal E}(S)$.  Every matrix also satisfies
\[
\lVert A_x\rVert\le1.
\]
Its entries have one positive common denominator $h$ such that
\[
(A_x)_{ij}=\frac1h\sum_{r=0}^3z_{ij,r}\zeta_8^r,
\qquad 1\le h\le P_{\mathcal E}(S),
\qquad |z_{ij,r}|\le P_{\mathcal E}(S).
\]
The input also determines a positive reduced rational $\epsilon_x$ with numerator and denominator
at most $P_{\mathcal E}(S)$ and $\epsilon_x\ge P_{\mathcal E}(S)^{-1}$.  The decoding and
construction algorithms, coefficient field, gate set, and polynomial envelope are fixed
independently of $x$; an instance supplies only the data processed by this model.
\end{definition}

\begin{definition}[Exact Hermitian nullspace problem]
\label{def:generic-exact-zero}
For a fixed admissible model $\mathcal E$, exactly one of
\[
\textsf{YES}:\ker A_x\ne\{0\},
\qquad
\textsf{NO}:A_x^2\succeq\epsilon_x^2\Id
\]
is promised.  The problem asks which alternative holds.
\end{definition}

\begin{theorem}[Perfect-completeness verification of exact sparse Hermitian nullspaces]
\label{thm:direct-hermitian-nullspace}
For every fixed admissible model $\mathcal E$, the problem of
Definition~\ref{def:generic-exact-zero} lies in $\QMAone$.
\end{theorem}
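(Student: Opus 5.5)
The verifier has to accept some witness with probability exactly one when $\ker A_x\ne\{0\}$, and accept every witness with probability at most $1-1/\mathrm{poly}(S)$ when $A_x^2\succeq\epsilon_x^2\Id$. A phase-estimation test cannot give this, because over $\mathcal G$ it does not accept with certainty. Instead I will build a \emph{frustration-free} test: a family of local checks whose rejection probabilities are exact quadratic forms that all vanish on a kernel witness. The witness is $\ket\psi\in\ker A_x$, possibly together with a history that certifies the arithmetic.

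\textbf{Step 1: exact constraint form.} Write $A_x=h^{-1}Z$, where $Z$ has entries $\sum_r z_{ij,r}\zeta_8^r$ with polynomially bounded integer coordinates. Then $\ker A_x=\ker Z$, and the NO promise becomes $Z^2\succeq h^2\epsilon_x^2\Id$, an inverse-polynomial singular gap. Multiplication by each coefficient $\zeta_8^r z_{ij,r}/\mathrm{poly}$ is realized by the exact sparse relation of Lemma~\ref{lem:stable-quadratic-dyadic-multiplier}; its zero-residual history is unique, its endpoint is exact, and its conditioning is polynomial. Concatenating these relations with the row-sum structure of the sparse access circuits gives a single relation $R_x$ on pairs (vector, arithmetic history). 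Its kernel is $\{(v,\mathrm{hist}(v)):Zv=0\}$, which is nonzero exactly in the YES case. On NO inputs $\lVert w\rVert\le P(S)\lVert R_xw\rVert$, because each gadget's history is bounded by its residual and $Z$ itself has a polynomial singular gap. I then apply Lemma~\ref{lem:cokernel-safe-hermitian-completion} to get a normalized Hermitian $K=K_{R_x}/B(S)$ with $\lVert K\rVert\le1/2$, $\ker K=\ker R_x\oplus0$, a polynomial NO singular gap, and identity blocks on invalid labels.

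\textbf{Step 2: exact test.} $K$ is sparse with entries in a fixed finite subset of $K_8$ divided by a fixed power of $2$ and by $\sqrt2$. Using the row, column, and value circuits, I will build an exact block encoding $U$ of $K$ over $\mathcal G$. The entries are then loaded by exact controlled rotations whose amplitudes lie in $\mathbb Z[\zeta_8,1/\sqrt2]$, which is possible by exact synthesis for the Clifford+$T$ ring. The verifier applies $U$ to $\ket\psi\ket{0^a}$ and accepts iff the block-encoding ancilla is \emph{not} in $\ket{0^a}$. The rejection probability is then exactly $\lVert K\psi\rVert^2$. It is $0$ on a kernel vector, giving perfect completeness, and it is at least $\sigma_{\min}(K)^2\ge 1/\mathrm{poly}(S)$ on NO inputs. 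Every clean work register is returned to $\ket0$, so garbage cannot leak acceptance, and all circuit equalities are exact matrix identities including global phase.

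\textbf{Step 3: amplification.} Sequential repetition preserves perfect completeness and brings soundness to $1/2$ if desired. Since the inverse-polynomial gap already meets the $\QMAone$ convention, the problem lies in $\QMAone$.

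\textbf{Main obstacle.} The hard part is the exactness of the block encoding in Step 2. The normalization of the entry magnitudes must be realized by gates over $\mathcal G$ without approximation, and its subnormalization factor must be independent of $x$ so that the rejection probability is an exact quadratic form with a certified lower bound. This is why Step 1 first reduces everything to coefficients in a fixed finite subset of $K_8$ via Lemma~\ref{lem:stable-quadratic-dyadic-multiplier}. My first attempt is to write each fixed coefficient as a $\mathcal G$-synthesizable amplitude over a common dyadic-$\sqrt2$ denominator. I would then check that the PREPARE/SELECT-style sparse access, with a uniform slot count $d$ padded by zero-free identity slots, yields the prescribed matrix entrywise.
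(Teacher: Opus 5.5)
\textbf{Verdict.} Your route differs from the paper's, and it has a genuine gap at Step~2. The paper builds no verifier at this point. It checks that $A_x$ meets the hypotheses of Rudolph's exact sparse Hermitian verifier at cyclotomic level three, and imports a test whose rejection probability is $c(m)\lVert A_x\ket\psi\rVert^2$. Your Step~2 is exactly the content of that imported theorem. You flag it as the main obstacle but do not carry it out, and the mechanism you name does not work as stated.

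\textbf{Why the amplitude-loading step fails as stated.} Loading an entry by an exact controlled rotation $K_{ij}\ket0+\beta\ket1$ (or by $\sqrt{K_{ij}}$ on both sides, as in the usual sparse block encoding) needs complementary amplitudes in $\mathbb Z[1/\sqrt2,i]$. Exact synthesis applies only after such a completion is exhibited. Its existence is a norm equation, not a consequence of $|K_{ij}|\le1$.

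There is also a structural constraint your sketch never identifies. The automorphism $\sigma:\zeta_8\mapsto\zeta_8^3$ (so $\sqrt2\mapsto-\sqrt2$) commutes with complex conjugation, so it maps every unitary over $K_8$ to a unitary. Two consequences follow:
\begin{itemize}
\item every entry $\alpha$ of a circuit over $\mathcal G$ satisfies $|\sigma(\alpha)|\le1$;
\item an exact block encoding of $K/\lambda$ with rational $\lambda$ forces $\lVert\sigma(K)\rVert\le\lambda$, not just $\lVert K\rVert\le\lambda$.
\end{itemize}
Your Step~1 introduces exactly the multiplier coefficient $\rho=\sqrt2-1$, with $|\sigma(\rho)|=1+\sqrt2>1$. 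So $\rho$ is not an entry of any circuit over $\mathcal G$, and normalizing by entry magnitudes alone is the wrong criterion.

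\textbf{How to close the gap.} Avoid amplitude loading altogether.
\begin{itemize}
\item Work directly with $Z=hA_x$, which makes Step~1 unnecessary. Expand each entry into unit phases, $Z_{ij}=\sum_{r=0}^{3}\sum_{u<|z_{ij,r}|}\operatorname{sgn}(z_{ij,r})\,\zeta_8^r$.
\item Let $r(j,k)$ and $c(i,k)$ be the $k$th listed nonzero row of column $j$ and column of row $i$. Pad the slot range and the multiplicity range of $u$ to powers of two $d'$ and $P'$, and put $D=4d'P'$.
\item Use the standard two-sided sparse construction. Call $U_R$ the preparation of $D^{-1/2}\sum_{k,r,u}\omega\ket{r(j,k)}\ket j\ket{r,u}\ket f$ from $\ket j$, and $U_L$ the preparation of $D^{-1/2}\sum_{k,r,u}\ket i\ket{c(i,k)}\ket{r,u}\ket0$ from $\ket i$.
\item Apply the phase $\omega=\pm\zeta_8^r$ with $T$, $T^2$, $T^4$ on the bits of a computed and later uncomputed three-bit exponent. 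Send unused slots and multiplicities to flagged branches orthogonal to the other side.
\end{itemize}
Only Hadamards, these phases, and exact Toffoli arithmetic occur, so $U_L^\dagger U_R$ is an exact block encoding of $Z/D$. Every Galois conjugate of a unit phase is again a unit phase, so the constraint above holds automatically.

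Alternatively, keep Step~1: it leaves only finitely many coefficients, and each can be completed by hand once the dyadic normalization dominates both $|\alpha|$ and $|\sigma(\alpha)|$. For example, $(\rho,3,\sqrt2(1+\zeta_8),0)/4$ is a unit vector over $\mathbb Z[1/\sqrt2,i]$.

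\textbf{Corrected soundness bound.} With the unit-phase construction, your measurement rejects with probability exactly $\lVert Z\psi\rVert^2/D^2$, not $\lVert K\psi\rVert^2$. This vanishes on $\ker A_x$ and is at least $h^2\epsilon_x^2/D^2\ge1/\mathrm{poly}(S)$ on NO inputs. The subnormalization only has to be an explicit power of two bounded by a polynomial in $S$; it need not be independent of $x$. Step~3 is also unnecessary, because the paper's $\QMAone$ convention only requires NO acceptance at most $1-1/\mathrm{poly}(S)$.
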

\begin{proof}
Appendix~\ref{app:exact-hermitian-verification} verifies the hypotheses and parameter transfer for
Rudolph's exact sparse Hermitian verifier~\cite{Rudolph2024gateset}, including exact acceptance on
the kernel and inverse-polynomial rejection on NO inputs.  This gives the geometry-free
containment used below.
\end{proof}

\begin{definition}[Admissible exact common-kernel data]
\label{def:common-kernel-data}
Fix $K_8=\mathbb Q(\zeta_8)$ and the gate set $\mathcal G$.  A fixed uniform model supplies
polynomially many maps $B_j:X_x\to Y_{j,x}$ and their stack
$R_x:X_x\to\bigoplus_jY_{j,x}$.  It has a deterministic decoding algorithm defined on every
input string, canonical fixed-width labels, and clean reversible row, column, exact-value, and
adjoint-value access.  Equal matrix
positions are aggregated exactly in $K_8$, with exact zero suppression.  There is a positive
nondecreasing integer polynomial $P_{\rm ck}$, fixed by the model, that bounds the sparsity, label
widths, runtimes, workspaces, and circuit sizes.  All coefficients of $R_x$ share one positive
common denominator; this denominator and every coordinate height are at most $P_{\rm ck}(S)$.

The input determines reduced positive rationals $\ell$ and $\delta_x$, whose numerators and
denominators are at most $P_{\rm ck}(S)$, such that $\lVert R_x\rVert\le\ell$ and
$\delta_x\ge P_{\rm ck}(S)^{-1}$.  It is promised that either
\[
 \bigcap_j\ker B_j\ne\{0\}
 \quad\text{or}\quad
 R_x^\dagger R_x=\sum_jB_j^\dagger B_j\succeq\delta_x\Id.
\]
Malformed inputs and invalid labels are assigned fixed identity blocks.
\end{definition}

\begin{theorem}[Exact sparse common-kernel verification]
\label{thm:common-kernel-exact-sparse-transfer}
The common-kernel problem for every model in
Definition~\ref{def:common-kernel-data} lies in $\QMAone$.  More precisely, let
$L=\lceil\ell\rceil$, $\beta=2L+2$, and
\[
 K_R=\begin{pmatrix}0&R_x^\dagger\\R_x&\Id\end{pmatrix}.
\]
The valid block $K_R/\beta$ has norm at most $1/2$, kernel
$\ker R_x\oplus\{0\}$, and NO-instance singular gap at least
$\min(\delta_x,1)/\beta^2$.  Direct-summing the identity on invalid labels preserves the kernel
and gap while giving norm at most one.
\end{theorem}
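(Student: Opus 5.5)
The plan is to reduce the common-kernel problem to the exact Hermitian nullspace problem of Definition~\ref{def:generic-exact-zero} and then invoke Theorem~\ref{thm:direct-hermitian-nullspace}. The Hermitian matrix is the completion $K_R/\beta$ of Lemma~\ref{lem:cokernel-safe-hermitian-completion}, direct-summed with the identity on invalid labels. The work splits into three parts: the spectral facts about the valid block, the transfer of the NO gap, and the verification that the resulting access scheme satisfies Definition~\ref{def:generic-admissible-exact-sparse}.

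\emph{Kernel and norm.} For the kernel, suppose $K_R(v,w)=0$. This gives $R_x^\dagger w=0$ and $R_xv+w=0$. Hence $w=-R_xv$ and $R_x^\dagger R_xv=0$, so $R_xv=0$ and therefore $w=0$. Conversely, every $(v,0)$ with $v\in\ker R_x$ lies in the kernel. Since $\ker R_x=\bigcap_j\ker B_j$, YES instances give a nonzero kernel. For the norm, $\lVert K_R\rVert\le\lVert R_x\rVert+1\le\ell+1\le L+1$, and dividing by $\beta=2L+2$ gives at most $1/2$.

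\emph{NO gap.} On a NO input we have $\lVert R_xv\rVert\ge\sqrt{\delta_x}\lVert v\rVert$, which is the hypothesis of Lemma~\ref{lem:cokernel-safe-hermitian-completion} with $P=\delta_x^{-1/2}$. The lemma only yields $\delta_x/\beta^2$, so to obtain the stated $\min(\delta_x,1)/\beta^2$ I would redo the block computation directly. Eliminate $w$ by a Schur complement: the eigenvalue equation $K_R(v,w)=\lambda(v,w)$ gives $w=(\lambda-1)^{-1}R_xv$ and $R_x^\dagger R_xv=\lambda(\lambda-1)v$. Each eigenvalue of $R_x^\dagger R_x$, call it $s\ge\delta_x$, therefore produces the roots $\lambda=\tfrac12(1\pm\sqrt{1+4s})$. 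The small root has modulus $2s/(1+\sqrt{1+4s})$. I will bound this below by $\min(s,1)/(1+\sqrt5)$ after treating the cases $s\le1$ and $s>1$ separately. The remaining eigenvalues come from $\ker R_x^\dagger$ in $Y$ and equal $1$. Dividing by $\beta\ge4$, so that $1/\beta\ge 4/\beta^2\ge(1+\sqrt5)/\beta^2$ after absorbing the constant, gives $\min(\delta_x,1)/\beta^2$. On the invalid labels the block is the identity, which adds no kernel, keeps every singular value at least the gap, and makes the total norm exactly $1$.

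\emph{Admissibility and the main obstacle.} The hard step is showing the resulting scheme is admissible, and I expect that to take most of the effort. The requirements are as follows.
\begin{itemize}
\item Row and column access to $K_R$ must be built from the supplied access to each $B_j$ and its adjoint. I will tag the block with one label bit and add one extra diagonal slot for the $\Id_Y$ entry.
\item Sparsity must remain polynomial, at most $d+1$.
\item Entries must be written over a common denominator $h=\beta\cdot\mathrm{den}$. The factor $\beta$ is an integer computed from $\ell$, and the diagonal $1$ becomes $h/\beta\cdot\mathrm{den}$, so everything stays in $K_8$ with polynomial heights.
\item Contributions landing on the same position must be aggregated exactly, with exact zero suppression consistent across the row, column and value circuits. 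A diagonal position of $Y$ receives only the identity, so collisions can occur only inside $R_x$ or $R_x^\dagger$, and the model already aggregates those.
\item The threshold must be the reduced rational $\epsilon_x=\min(\delta_x,1)/\beta^2$, which is at least $P(S)^{-1}$ for some fixed polynomial.
\end{itemize}
With these checked, the input is an instance of Definition~\ref{def:generic-exact-zero}, and Theorem~\ref{thm:direct-hermitian-nullspace} places the problem in $\QMAone$.
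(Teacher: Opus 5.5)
Your proposal is correct and follows the paper's own proof. Both arguments use the completion $K_R/\beta$ with identity blocks on invalid labels, the same kernel and norm arguments, the same two-dimensional block spectrum for the NO gap, and the same admissibility checks before invoking Theorem~\ref{thm:direct-hermitian-nullspace}. The only variation is minor: you bound $|\lambda_-(s)|$ by a case split using $\beta\ge 1+\sqrt5$, whereas the paper uses $\sigma\le L$ to get $|\lambda_-|/\beta\ge\sigma^2/\beta^2$. Your stated reason for redoing the computation is backwards, though, since $\delta_x/\beta^2$ is never weaker than $\min(\delta_x,1)/\beta^2$. The real issue is that the eigenvalues $1/\beta$ on $\ker R_x^\dagger$ can fall below $\delta_x/\beta^2$ when $\delta_x>\beta$, which is exactly what the $\min(\delta_x,1)$ absorbs.
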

\begin{proof}
Stacking gives $R_x^\dagger R_x=\sum_jB_j^\dagger B_j$ and
$\ker R_x=\bigcap_j\ker B_j$.  The kernel-preserving completion, its exact spectrum, the integerized
normalization, and the clean access construction are proved in
Appendix~\ref{app:common-kernel-exact-sparse-transfer}.  The resulting matrix satisfies
Theorem~\ref{thm:direct-hermitian-nullspace}.
\end{proof}

The theorem gives exact containment for this common-kernel model.  Each
application below constructs its residual, proves the kernel identity and inverse-polynomial
conditioning bound, and establishes locality through the corresponding reduction.

\begin{corollary}[Exact sparse supercharge Laplacians]
\label{cor:nilpotent-exact-sparse-transfer}
Suppose a fixed uniform model specifies $\mathcal Q_x$ over $K_8$ with the clean exact access,
aggregation, numerical-height, and polynomial-resource bounds of
Theorem~\ref{thm:common-kernel-exact-sparse-transfer}.  If
\[
\Delta_x=\mathcal Q_x\mathcal Q_x^\dagger+\mathcal Q_x^\dagger\mathcal Q_x,
\]
and the promise is a nonzero kernel of $\Delta_x$ versus
$\Delta_x\succeq\delta_x\Id$ for an inverse-polynomial reduced rational $\delta_x$ with
numerically polynomial numerator and denominator, then the exact-zero problem lies in $\QMAone$.
\end{corollary}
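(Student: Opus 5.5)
We reduce Corollary~\ref{cor:nilpotent-exact-sparse-transfer} to Theorem~\ref{thm:common-kernel-exact-sparse-transfer}, taking two blocks $B_1=\mathcal Q_x$ and $B_2=\mathcal Q_x^\dagger$, both maps $X_x\to X_x$. Their stack is
\[
R_x=\begin{pmatrix}\mathcal Q_x\\ \mathcal Q_x^\dagger\end{pmatrix}:X_x\to X_x\oplus X_x,
\qquad
R_x^\dagger R_x=\mathcal Q_x^\dagger\mathcal Q_x+\mathcal Q_x\mathcal Q_x^\dagger=\Delta_x .
\]
Since $\Delta_x$ is a sum of two positive semidefinite operators, the Laplacian identity of Section~\ref{sec:susy-types} gives $\ker\Delta_x=\ker\mathcal Q_x\cap\ker\mathcal Q_x^\dagger=\bigcap_j\ker B_j$. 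The YES promise $\ker\Delta_x\ne\{0\}$ is therefore exactly the common-kernel alternative, and the NO promise $\Delta_x\succeq\delta_x\Id$ is literally $R_x^\dagger R_x\succeq\delta_x\Id$. Nilpotency is not needed for this step; it only explains why the zero modes are closed and co-closed.

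The remaining work is to check the data requirements of Definition~\ref{def:common-kernel-data}.

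\textbf{Access.} Row, column, value and adjoint-value access to $R_x$ is assembled from the supplied access to $\mathcal Q_x$. One extra label bit selects the block. Rows of $B_2$ are columns of $\mathcal Q_x$ with conjugated values, and conjugation is exact in $K_8$ via $\zeta_8\mapsto\zeta_8^{-1}=-\zeta_8^3$. No positions are shared between the two blocks, so aggregation and zero suppression are inherited unchanged. Sparsity at most doubles, and labels gain one bit. Invalid labels keep their identity assignment.

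\textbf{Heights and denominators.} The common denominator of $R_x$ is the one supplied for $\mathcal Q_x$. Conjugation permutes the basis coordinates up to sign, so coordinate heights are preserved.

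\textbf{Norm bound.} The bound $\lVert R_x\rVert\le\ell$ needs a reduced rational $\ell$ with polynomial numerator and denominator. The hypotheses bound sparsity $d$ and entry heights $z$ over the common denominator $h$. Each entry has modulus at most $4z/h$, so the Schur test gives $\lVert\mathcal Q_x\rVert\le 4dz/h$. Hence $\lVert R_x\rVert\le\sqrt2\,\lVert\mathcal Q_x\rVert\le 8dz/h$, and we may take $\ell=8P(S)^2$, which is polynomial.

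\textbf{Gap parameter.} The reduced rational $\delta_x$ is given as inverse-polynomial, so it is used directly; enlarging $P_{\rm ck}$ absorbs all these envelopes.

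With these checks, Theorem~\ref{thm:common-kernel-exact-sparse-transfer} applies and places the problem in $\QMAone$. The explicit normalization is $\beta=2\lceil\ell\rceil+2$, with NO gap $\min(\delta_x,1)/\beta^2$.

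The step most likely to need care is the norm bound $\ell$. The hypotheses of the corollary do not explicitly supply $\lVert\mathcal Q_x\rVert$, so the Schur-test argument above must be spelled out with the model's polynomial envelope. A secondary point is ensuring that the adjoint block's access circuits remain clean and phase-exact over $\mathcal G$. This should follow from the adjoint-value circuit already assumed.
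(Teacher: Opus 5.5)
Your proposal is correct and follows the paper's own argument. The paper likewise takes $B_1=\mathcal Q_x$ and $B_2=\mathcal Q_x^\dagger$, obtains $R_x^\dagger R_x=\Delta_x$ and $\ker R_x=\ker\Delta_x$ from positivity of the two summands, assembles both residual orientations from the supplied row, column, value and adjoint-value routines, and then invokes Theorem~\ref{thm:common-kernel-exact-sparse-transfer}; your Schur-test derivation of the norm parameter $\ell$ from the sparsity and height envelopes is sound and makes explicit a step the paper leaves implicit in ``the common-kernel construction supplies the normalization.''
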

\begin{proof}
Appendix~\ref{app:nilpotent-exact-sparse-transfer} realizes $\Delta_x$ as the Gram matrix of the
two residuals $\mathcal Q_x$ and $\mathcal Q_x^\dagger$ and verifies their kernel, access, and
normalization conditions.  The claim follows from
Theorem~\ref{thm:common-kernel-exact-sparse-transfer}.
\end{proof}

\section{Minimal supersymmetry on a chain}
\label{sec:n1}

The reduction closes with an interaction-graph-independent spectral map, which therefore
also applies after the row embedding of Section~\ref{sec:dim}.

\subsection{Spectral mapping from frustration-free Hamiltonians}

\begin{lemma}[Minimal-supersymmetry spectral mapping]
\label{lem:spectral}
Let $h_1,\ldots,h_M\succeq0$, set
\[
 H_0=\sum_jh_j,\qquad
 Q=\sum_j\sqrt{h_j}\otimes\Gamma_j,\qquad
 W=\sum_j\lVert h_j\rVert,
\]
and let $\{\Gamma_j,\Gamma_k\}=2\delta_{jk}\Id$.  If $W>0$, then
\[
 \ker Q=\ker H_0\otimes\Fcal_{anc},\qquad
 E_0(Q^2)\ge E_0(H_0)^2/W.
\]
If $W=0$, then $H_0=Q=0$ and both kernels are the whole space.
\end{lemma}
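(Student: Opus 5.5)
The plan is to avoid expanding $Q^2$, whose cross terms $\sum_{j<k}[\sqrt{h_j},\sqrt{h_k}]\otimes\Gamma_j\Gamma_k$ are indefinite. I would instead extract each $\sqrt{h_j}$ from $Q$ by an anticommutator with the corresponding Majorana, and then run a variational argument on an eigenvector of $Q$. Throughout, $\sqrt{h_j}$ acts on the system factor and $\Gamma_j$ on $\Fcal_{anc}$, so the two commute. In the fermionic setting this uses that $h_j$, and hence $\sqrt{h_j}$ by Lemma~\ref{lem:sqrt}, is parity even.

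\textbf{Step 1: the anticommutator identity.} Using $\{\Gamma_j,\Gamma_k\}=2\delta_{jk}\Id$, I compute
\[
Q(\Id\otimes\Gamma_j)+(\Id\otimes\Gamma_j)Q=\sum_k\sqrt{h_k}\otimes\{\Gamma_k,\Gamma_j\}=2\sqrt{h_j}\otimes\Id .
\]
Since $Q$ is Hermitian and finite-dimensional, take a normalized eigenvector $Q\psi=\lambda\psi$ with $\lambda^2=E_0(Q^2)$. Taking the expectation of the identity in $\psi$ gives
\[
\langle\psi,(\sqrt{h_j}\otimes\Id)\psi\rangle=\lambda\langle\psi,(\Id\otimes\Gamma_j)\psi\rangle .
\]

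\textbf{Step 2: pass from $\sqrt{h_j}$ to $h_j$.} For $h_j\succeq0$ I use the spectral inequality $h_j\preceq\lVert h_j\rVert^{1/2}\sqrt{h_j}$, which holds since $x\le\lVert h_j\rVert^{1/2}\sqrt x$ on $[0,\lVert h_j\rVert]$. Summing over $j$ gives
\[
E_0(H_0)\le\langle\psi,(H_0\otimes\Id)\psi\rangle\le\lambda\Bigl\langle\psi,\Id\otimes\sum_jc_j\Gamma_j\,\psi\Bigr\rangle,\qquad c_j=\lVert h_j\rVert^{1/2}.
\]
The first inequality holds because $E_0(H_0\otimes\Id)=E_0(H_0)$.

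\textbf{Step 3: bound the Majorana combination.} Because the $\Gamma_j$ anticommute, $(\sum_jc_j\Gamma_j)^2=\sum_jc_j^2\Id=W\Id$. Hence $\lVert\sum_jc_j\Gamma_j\rVert=\sqrt W$, and
\[
E_0(H_0)\le|\lambda|\sqrt W .
\]
Squaring gives $E_0(Q^2)=\lambda^2\ge E_0(H_0)^2/W$ when $W>0$.

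\textbf{Step 4: the kernel.} If $Q\psi=0$, Step 2 gives $\langle\psi,(H_0\otimes\Id)\psi\rangle\le0$, so $\psi\in\ker(H_0\otimes\Id)=\ker H_0\otimes\Fcal_{anc}$. Conversely, $\ker H_0=\bigcap_j\ker h_j=\bigcap_j\ker\sqrt{h_j}$ because the $h_j$ are positive semidefinite. So each term $\sqrt{h_j}\otimes\Gamma_j$ annihilates $\ker H_0\otimes\Fcal_{anc}$, and therefore so does $Q$. In the degenerate case $W=0$, every $h_j=0$, so $H_0=0$ and $Q=0$ trivially.

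The main obstacle is Step 2. The naive route through $Q^2$ fails because of the indefinite cross terms, and the anticommutator trick only controls $\sqrt{h_j}$ linearly. The squared loss $E_0(H_0)^2/W$ comes precisely from trading $h_j$ for $\lVert h_j\rVert^{1/2}\sqrt{h_j}$, so I would check carefully that this is the only lossy step. The remaining point to verify is the commutation of $\sqrt{h_j}\otimes\Id$ with $\Id\otimes\Gamma_j$ in the fermionic (graded) tensor product, which Lemma~\ref{lem:sqrt} supplies through parity evenness.
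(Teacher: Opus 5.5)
Your proof is correct and follows the same route as the paper's: the anticommutator identity $\{Q,\Id\otimes\Gamma_j\}=2\sqrt{h_j}\otimes\Id$, the functional-calculus bound $h_j\preceq\lVert h_j\rVert^{1/2}\sqrt{h_j}$, and the Clifford norm $\lVert\sum_jc_j\Gamma_j\rVert=\lVert c\rVert_2$. The only difference is cosmetic: you evaluate on an eigenvector of $Q$ with $c_j=\lVert h_j\rVert^{1/2}$, whereas the paper takes an arbitrary unit vector and uses Cauchy--Schwarz with a supremum over unit $c$ to get $\lVert g\rVert_2\le\lVert Q\psi\rVert$, then applies Cauchy--Schwarz once more.
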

\begin{proof}
The norm-weighted anticommutator argument is given in
Appendix~\ref{app:weighted-spectral-mapping}.
\end{proof}

\begin{corollary}[Weighted spectral mapping]
\label{cor:weighted-spectral}
For positive weights $w_j$, Lemma~\ref{lem:spectral} holds with
\[
 H_w=\sum_jw_jh_j,\qquad
 Q_w=\sum_j\sqrt{w_j}\sqrt{h_j}\otimes\Gamma_j,\qquad
 W_w=\sum_jw_j\lVert h_j\rVert
\]
in place of $H_0,Q,W$.
\end{corollary}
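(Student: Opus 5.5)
The plan is to reduce Corollary~\ref{cor:weighted-spectral} directly to Lemma~\ref{lem:spectral} by absorbing the weights into the positive semidefinite terms.

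\emph{Step 1: rescale the terms.} For each $j$ set $h_j':=w_jh_j$. Since $w_j>0$ and $h_j\succeq0$, each $h_j'$ is positive semidefinite, so Lemma~\ref{lem:spectral} applies to the family $h_1',\ldots,h_M'$ with the same Majoranas $\Gamma_j$.

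\emph{Step 2: identify the objects.} Check termwise that the unweighted objects built from $h_j'$ are exactly the weighted ones:
\[
\sum_jh_j'=H_w,\qquad \lVert h_j'\rVert=w_j\lVert h_j\rVert,\qquad \sqrt{h_j'}=\sqrt{w_j}\sqrt{h_j}.
\]
The first two identities give $H_0'=H_w$ and $W'=W_w$. For the third, note that $\sqrt{w_j}\sqrt{h_j}$ is positive semidefinite and squares to $w_jh_j$. By uniqueness of the positive semidefinite square root (the uniqueness already invoked in Lemma~\ref{lem:sqrt}), it equals $\sqrt{h_j'}$. Hence $Q'=Q_w$.

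\emph{Step 3: transfer the conclusions.} Lemma~\ref{lem:spectral} now gives, when $W_w>0$,
\[
\ker Q_w=\ker H_w\otimes\Fcal_{anc},\qquad E_0(Q_w^2)\ge E_0(H_w)^2/W_w.
\]
In the degenerate case $W_w=0$, every $w_j\lVert h_j\rVert$ vanishes. Because the weights are positive, this forces $h_j=0$ for all $j$, and therefore $H_w=Q_w=0$, which matches the degenerate clause of the lemma.

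\emph{Main obstacle.} There is no real obstacle. The only point needing care is the square-root identity in Step 2, which rests on positivity of $w_j$ and uniqueness of the positive square root.
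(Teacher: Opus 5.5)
Your proposal is correct and follows the paper's proof exactly: the paper also simply applies Lemma~\ref{lem:spectral} to $h_j'=w_jh_j$. Your additional checks supply the details the paper leaves implicit, namely $\sqrt{w_jh_j}=\sqrt{w_j}\sqrt{h_j}$ by uniqueness of the positive square root, $\lVert h_j'\rVert=w_j\lVert h_j\rVert$, and the degenerate case $W_w=0$ forcing every $h_j=0$.
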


\begin{proof}
Apply Lemma~\ref{lem:spectral} to $h'_j=w_jh_j$.
\end{proof}

Exact sparse access to the square roots is an additional input-model condition. The projector
families used below satisfy it directly because $\sqrt{h_j}=h_j$; for arbitrary positive weights,
the square roots must belong to the chosen exact input model.

\begin{proposition}[Sharpness of the weighted spectral bound]
\label{prop:quad-tight}
There is a two-projector family for which
$E_0(Q^2)/E_0(H_0)\to0$, so no universal positive constant gives a linear lower bound
$E_0(Q^2)\ge cE_0(H_0)$. The coefficient one in
$E_0(Q^2)\ge E_0(H_0)^2/W$ is nevertheless optimal: the one-term family
$h_1=\lambda\Id$ attains equality.
\end{proposition}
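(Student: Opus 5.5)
The plan is to produce both parts of Proposition~\ref{prop:quad-tight} by direct computation on the smallest possible examples. For the failure of a linear bound I take two rank-one projectors in $\mathbb C^2$ at a small angle $\theta$, namely $h_1=P_1=\ket0\!\bra0$ and $h_2=P_2=\ket\phi\!\bra\phi$ with $\ket\phi=\cos\theta\ket0+\sin\theta\ket1$. I attach two Majoranas $\Gamma_1,\Gamma_2$, so that $W=2$. For the optimality of the coefficient I use the trivial one-term family.

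\textbf{Step 1: the two-projector family.} Expanding $Q=P_1\otimes\Gamma_1+P_2\otimes\Gamma_2$ and using $\Gamma_j^2=\Id$ gives
\[
Q^2=H_0\otimes\Id+P_1P_2\otimes\Gamma_1\Gamma_2+P_2P_1\otimes\Gamma_2\Gamma_1 .
\]
Since $\Gamma_2\Gamma_1=-\Gamma_1\Gamma_2$, this equals $H_0\otimes\Id+[P_1,P_2]\otimes\Gamma_1\Gamma_2$. Write $\Gamma_1\Gamma_2=i\sigma$, where $\sigma$ is Hermitian with $\sigma^2=\Id$. Then
\[
Q^2=H_0\otimes\Id+\bigl(i[P_1,P_2]\bigr)\otimes\sigma ,
\]
which is block diagonal in the eigenspaces $\sigma=\pm1$.

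\textbf{Step 2: Bloch-vector computation.} The Bloch vectors $\hat z$ and $n$ of $P_1$ and $P_2$ meet at angle $2\theta$. The operator $H_0=\Id+\tfrac12(\hat z+n)\cdot\vec\sigma$ therefore has smallest eigenvalue
\[
E_0(H_0)=1-\cos\theta\sim\theta^2/2 .
\]
The commutator term $i[P_1,P_2]=\tfrac{i}{4}[\hat z\cdot\vec\sigma,n\cdot\vec\sigma]$ is $\mp\tfrac12\sin2\theta$ times a Pauli along $\hat z\times n$. That direction is orthogonal to $\hat z+n$.

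In each block, $Q^2=\Id+v_\pm\cdot\vec\sigma$ with $|v_\pm|^2=\cos^2\theta+\tfrac14\sin^22\theta=\cos^2\theta\,(1+\sin^2\theta)$. Hence
\[
E_0(Q^2)=1-\cos\theta\sqrt{1+\sin^2\theta}=\theta^4/4+O(\theta^6).
\]
This gives $E_0(Q^2)/E_0(H_0)=O(\theta^2)\to0$ as $\theta\to0$, as claimed. As a consistency check, $E_0(H_0)^2/W\sim\theta^4/8\le\theta^4/4$, in agreement with Lemma~\ref{lem:spectral}.

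\textbf{Step 3: the equality case.} For $h_1=\lambda\Id$ with $\lambda>0$ we have $Q=\sqrt\lambda\,\Id\otimes\Gamma_1$. Thus $Q^2=\lambda\Id$, $E_0(H_0)=\lambda$, and $W=\lambda$, so $E_0(Q^2)=\lambda=E_0(H_0)^2/W$. Any coefficient $c>1$ in $E_0(Q^2)\ge cE_0(H_0)^2/W$ would therefore fail on this family, so the coefficient one is optimal.

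The main obstacle is Step 2. One has to get the orthogonality of the commutator direction to $\hat z+n$ right, and keep the expansion to fourth order in $\theta$ exact enough to see that the order-$\theta^2$ terms cancel. I would first verify this by explicitly diagonalizing the $2\times2$ block matrices, as a check on the Bloch-vector argument.
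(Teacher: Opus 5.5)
Your argument is correct and is essentially the paper's: the same pair of rank-one projectors on $\mathbb C^2$ at angle $\theta$, the same chirality splitting of $Q^2$ into the blocks $H_0\pm i[h_1,h_2]$, and the same one-term family $h_1=\lambda\Id$ for the equality case. The paper reads off the block eigenvalues from trace $2$ and determinant $\sin^4\theta$, whereas you use Bloch vectors, and the two agree because $\cos^2\theta(1+\sin^2\theta)=1-\sin^4\theta$. This identity also exposes a slip in your expansion: $E_0(Q^2)=1-\sqrt{1-\sin^4\theta}=\theta^4/2+O(\theta^6)$, not $\theta^4/4$. The slip does not affect $E_0(Q^2)/E_0(H_0)=O(\theta^2)\to0$.
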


\begin{proof}
Both exact witnesses are computed in Appendix~\ref{app:weighted-spectral-mapping}.
\end{proof}

\begin{remark}[Tensor-product interpretation on the chain]
\label{rem:graded}
On the chain, $\sqrt{h_j}\otimes\Gamma_j$ means the Clifford product
$\sqrt{h_j}\Gamma_j$. Lemmas~\ref{lem:jw} and~\ref{lem:sqrt} make $\sqrt{h_j}$ even,
and every ancillary label mode is disjoint from every system support. The even operator therefore
commutes with every label Majorana: each summand is Hermitian, products acquire no Koszul sign,
and $A\otimes\Gamma\mapsto A\Gamma$ on even $A$ is a unital $*$-isomorphism onto its image.
Lemma~\ref{lem:spectral}, including its kernel identity, then applies verbatim. Evenness is
essential here: without it $Q$ need not be Hermitian and $Q^2$ need not be positive semidefinite.
\end{remark}

\begin{remark}[The grading obstruction]
Each $\Gamma_j=c_j+c_j^\dagger$ shifts the ancilla occupation by $\pm1$, so $Q$ is odd under
parity but does not raise an integer degree.  Hence it does not define the cochain complex
required by Hodge theory; the direct quadratic bound of Lemma~\ref{lem:spectral} is used instead.
\end{remark}

\subsection{Exact-zero decision problems}

\begin{definition}[Local Hermitian-square presentation]
\label{def:1glsp-presentation}
A presentation consists of a finite open chain $\Lambda$, a finite mode set $\mathfrak M$, and a
total allocation $\pi:\mathfrak M\to\Lambda$ with at most three modes at each site.  An order on
every fibre $\pi^{-1}(s)$, concatenated in chain order, gives the site-major mode order.  The
presentation contains Hermitian operators $Q_1,\ldots,Q_m$.  Each $Q_a$ satisfies
$\lVert Q_a\rVert\le1$, acts on at most eleven modes, and has site support
$\pi(\operatorname{supp}_{\rm mode}Q_a)$ contained in at most ten consecutive sites.  The
represented Hermitian supercharge is
\[
Q=\sum_{a=1}^m Q_a.
\]
The parity-odd submodel requires every nonzero matrix entry of $Q_a$ to connect
basis states of opposite occupation parity, equivalently $PQ_aP=-Q_a$, for every $a$.
\end{definition}

\begin{definition}[Exact encoding of a local Hermitian-square presentation]
\label{def:1glsp-input}
A well-formed encoding explicitly enumerates every site of $\Lambda$ and every mode of
$\mathfrak M$.  For every site it lists the complete ordered fibre $\pi^{-1}(s)$, including an
empty list when applicable; the fibres must be pairwise disjoint and exhaustive, and their
concatenation is the complete site-major order of the $n_{\mathrm{mode}}$ enumerated modes.  The
map $\pi$ is determined by these lists and is not separately serialized.  For every $Q_a$ the
encoding lists a support consisting only of enumerated mode indices, the declared support and
basis orders, and every nonzero sparse row--column--value entry.  Each coefficient explicitly
records all four integers and the exponent in
$(a+b\sqrt2+ic+id\sqrt2)/2^t$, where $a,b,c,d\in\mathbb Z$ and $t\ge0$; no nonzero entry may be
inferred from symmetry or an omitted block.  Succinct implicit sites, modes, registers, supports,
and gaps between coordinates are not allowed, and every support and chain coordinate refers to an
enumerated index.
\end{definition}

\begin{definition}[Exact-zero Hermitian-square problems]
\label{def:1glsp}
A presentation of Definition~\ref{def:1glsp-presentation}, encoded according to
Definition~\ref{def:1glsp-input}, also supplies a positive unary integer $G_I$.  Let
$S$ be the total bit length of the complete canonical serialization: it includes every enumerated
site and mode, every per-site allocation and the site-major order, every ordered term support and
declared basis, every serialized nonzero sparse entry, all coefficient integers and exponents,
and all $G_I$ unary symbols.  Thus
$n_{\mathrm{site}},n_{\mathrm{mode}},m,G_I\le S$, and every support and chain coordinate must
refer to an enumerated index.  Set $\delta=1/G_I$.  The ungraded problem permits every such input.  The intrinsic
minimal-supersymmetry problem is its parity-odd submodel.  In either case the promise is
\[
\textsf{YES}:\ker Q\ne0 \quad\Longleftrightarrow\quad E_0(Q^2)=0,
\qquad
\textsf{NO}:Q^2\succeq\delta\Id.
\]
The question is whether the instance is a YES instance.
\end{definition}

\begin{remark}[Input data and higher-dimensional variants]
\label{rem:1glsp-scope}
The ungraded input permits either parity.  Physical minimal supersymmetry requires the sum $Q$ to
be odd; termwise oddness is the locally checkable representation used here.  Any
globally odd listed operator admits this form after replacing $Q_a$ by
$(Q_a-PQ_aP)/2$, preserving locality, Hermiticity, the coefficient ring, and the norm bound.
The supplied data are the listed Hermitian terms; a frustration-free or local-positive
decomposition, a square-root list, and a Majorana-label presentation are not input fields.  For
$D\ge1$, both problems are defined on a lattice
$\Lambda\subset\mathbb Z^D$, with supports measured in the lattice metric.
\end{remark}

\begin{theorem}[Hardness of minimal supersymmetry on a chain]
\label{thm:minimal-hardness}
The parity-odd intrinsic minimal problem is $\QMAone$-hard. The
reduction outputs the following explicit subclass:
\[
Q=\sum_{c=1}^{M}\Pi_c\Gamma_c,\qquad M=2nK+2,
\]
where the $\Pi_c$ are the even bondwise-merged projectors of Proposition~\ref{prop:ferm} and the
$\Gamma_c$ are distinct anchored Majoranas.  With $G_{\mathrm{gap}}$ from
Proposition~\ref{prop:ambient-raw-gap}, the reduction outputs
\[
G_I:=64M G_{\mathrm{gap}}^2
\]
in unary and satisfies
\[
\text{YES}\Rightarrow E_0(Q^2)=0,
\]
and
\[
\text{NO}\Rightarrow E_0(Q^2)\ge\frac{\epsilon_{\mathrm{ff}}^2}{M}
\ge\frac{1}{64M G_{\mathrm{gap}}^2}=\frac1{G_I}.
\]
Every supercharge summand has arity at most eleven modes. Its system window has width ten,
and $Q^2$ has span at most nineteen encoded sites; its realized span is fifteen.
\end{theorem}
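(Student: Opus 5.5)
The plan is to compose the earlier reductions and close with the spectral map of Lemma~\ref{lem:spectral}.

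\textbf{Source instance.} Start from an arbitrary language in $\QMAone$.
Normalize its verifier by Lemma~\ref{lem:ambient-normalization} and route it onto Nagaj's $d=11$ line by Lemma~\ref{lem:source-routing}, padding as in Lemma~\ref{lem:source-normalization}. Form the bondwise-merged projector family of Lemma~\ref{lem:source-grouping}, then fermionize it by Proposition~\ref{prop:ferm}. This yields $M=2nK+2$ parity-even projectors $\Pi_c$ on an open chain of encoded modes, each supported on ten consecutive modes. On YES inputs $H_{in}=\sum_c\Pi_c$ is frustration-free. On NO inputs $E_0(H_{in})\ge\epsilon_{\mathrm{ff}}\ge 1/(8G_{\mathrm{gap}})$, using $\epsilon_{\mathrm{src}}=G_{\mathrm{gap}}^{-1}$ from Proposition~\ref{prop:ambient-raw-gap}.

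\textbf{Labels and the supercharge.} For each $c$, adjoin one ancillary label mode and take $\Gamma_c$ to be one of its Majoranas. Allocate the label mode to the site of the anchor $a(c)$. Since $A\le2$ and the encoded site carries one system mode, this respects the bound of at most three modes per site. Set $Q=\sum_c\Pi_c\Gamma_c$. By Remark~\ref{rem:graded}, the evenness of $\Pi_c$ makes each summand Hermitian and odd, and the Clifford product realizes $\Pi_c\otimes\Gamma_c$. Because $\sqrt{\Pi_c}=\Pi_c$, Lemma~\ref{lem:spectral} applies with $W=\sum_c\lVert\Pi_c\rVert\le M$. It gives $\ker Q=\ker H_{in}\otimes\Fcal_{anc}$, which is nonzero on YES inputs. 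On NO inputs it gives
\[
E_0(Q^2)\ge\frac{\epsilon_{\mathrm{ff}}^2}{M}\ge\frac{1}{64MG_{\mathrm{gap}}^2}=\frac1{G_I}.
\]
The reduction outputs $G_I$ in unary. This is polynomial because $G_{\mathrm{gap}}$ is polynomial in $n,K$, hence in $S$, so the NO promise $Q^2\succeq\delta\Id$ holds with $\delta=1/G_I$.

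\textbf{Encoding and locality bookkeeping.} Each $\Pi_c\Gamma_c$ has entries in the required dyadic ring over $\mathbb Z[\sqrt2,i]$, by the exact circuit description of Lemma~\ref{lem:fieldtransfer}. It has norm one and acts on ten system modes plus one label mode, so its arity is at most eleven. Its site support is the ten-site window, provided the anchor is chosen inside $I_c$. For $Q^2$ the diagonal terms are $\Pi_c^2$. The cross terms are $\{\Pi_c\Gamma_c,\Pi_d\Gamma_d\}=[\Pi_c,\Pi_d]\Gamma_c\Gamma_d$, which vanish unless $I_c\cap I_d\ne\varnothing$. The union of two intersecting windows of width ten spans at most nineteen sites. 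The realized value of fifteen comes from the actual offsets between overlapping merged bond windows in Nagaj's layout: adjacent bonds are shifted by one source site, that is five encoded modes, so the span is $10+5=15$.

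\textbf{Main obstacle.} The step I expect to be the main obstacle is this final accounting. It requires checking from the explicit bond layout that only windows shifted by at most one source site overlap. It also requires confirming that every encoded coefficient fits the $(a+b\sqrt2+ic+id\sqrt2)/2^t$ format with polynomially many bits, so that the output is a well-formed instance in the sense of Definition~\ref{def:1glsp-input}.
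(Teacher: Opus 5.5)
Your proposal is correct and follows the paper's proof essentially step for step. You normalize, fermionize via Proposition~\ref{prop:ferm}, form $Q=\sum_c\Pi_c\Gamma_c$ as in Remark~\ref{rem:graded} using $\sqrt{\Pi_c}=\Pi_c$, read off the YES kernel and the NO bound $\epsilon_{\mathrm{ff}}^2/M\ge 1/(64MG_{\mathrm{gap}}^2)$ from Lemma~\ref{lem:spectral}, and take locality from the anchored overlap data; your cross-term formula $[\Pi_c,\Pi_d]\Gamma_c\Gamma_d$ and the $10+5=15$ offset count spell out what the paper only asserts, and the coefficient-format check you flag is done in Appendix~\ref{app:explicit-n2-reduction} (common denominator $8$, height $72$).
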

\begin{proof}
First apply Lemma~\ref{lem:ambient-normalization}, and then apply
Proposition~\ref{prop:ferm} to obtain the one-dimensional geometrically
local frustration-free Hamiltonian $H_{in}=\sum_{c=1}^M\Pi_c$, with
$M=2nK+2$, satisfying $E_0(H_{in})=0$ on positive instances and
$E_0(H_{in})\ge\epsilon_{\mathrm{ff}}\ge1/(8G_{\mathrm{gap}})$ on negative
ones. Build the Hermitian supercharge $Q=\sum_c\Pi_c\Gamma_c$ of
Lemma~\ref{lem:spectral} on the chain as in Remark~\ref{rem:graded}, using that
$\sqrt{\Pi_c}=\Pi_c$ for projectors, and set $H=Q^2$. Each product
$\Pi_c\Gamma_c$ has norm one, so the
 normalization of Definition~\ref{def:1glsp-presentation} holds.

For the spectrum, the kernel identity in Lemma~\ref{lem:spectral} gives $E_0(H)=0$ on positive
instances, while the soundness bound gives
$E_0(H)\ge E_0(H_{in})^2/M\ge\epsilon_{\mathrm{ff}}^2/M
\ge1/(64M G_{\mathrm{gap}}^2)=1/G_I$ on negative ones.  The same kernel identity excludes
additional zero modes.  Both $M$ and $G_{\mathrm{gap}}$ are explicitly computable and
polynomially bounded, so the unary output for $G_I$ has polynomial length.

For parity and geometry, each $Q_c=\Pi_c\Gamma_c$ is odd because $\Pi_c$ is even.  Thus the
output satisfies the parity-odd input promise.  Disjoint summands anticommute and contribute no
cross term to $Q^2$.  The site-load bound $D_{\mathrm{site}}\le3$, anchor-multiplicity bound
$A\le2$, and overlap-degree bound $\Delta_{\mathrm{ov}}\le4$ are those of
Proposition~\ref{prop:ferm}.  Anchoring each $\Gamma_c$ in or adjacent
to its projector window then keeps every surviving commutator within nineteen consecutive
encoded sites; the resulting realized span is fifteen. Thus both the local dimension and the interaction
range remain constant.
\end{proof}

\begin{theorem}[Exact-zero completeness for listed Hermitian supercharges]
\label{thm:n1-tiers}
For the exact sparse-list input model of Definition~\ref{def:1glsp}, both the parity-odd
intrinsic minimal-supersymmetry problem and the ungraded Hermitian-square problem are
$\QMAone$-complete.  Hardness already holds for the
parity-odd chain family of Theorem~\ref{thm:minimal-hardness}: each supercharge summand has arity
at most eleven and a system window of width ten, while $Q^2$ has span at most nineteen encoded
sites and realized span fifteen.
\end{theorem}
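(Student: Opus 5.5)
The proof splits into hardness and containment, and each half only has to be checked against the exact input model of Definition~\ref{def:1glsp}.

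\textbf{Hardness.} Theorem~\ref{thm:minimal-hardness} already reduces every $\QMAone$ language to the parity-odd intrinsic problem. Its output has the form $Q=\sum_c\Pi_c\Gamma_c$ and comes with the unary integer $G_I=64MG_{\mathrm{gap}}^2$. Two things remain to verify. First, this output is a well-formed encoding in the sense of Definition~\ref{def:1glsp-input}:
\begin{itemize}
\item each site carries at most three modes, consistent with the allocation used for the anchored Majoranas;
\item each summand acts on at most eleven modes within ten consecutive sites;
\item every nonzero entry of $\Pi_c\Gamma_c$ lies in the dyadic ring $\mathbb Z[\sqrt2,i]/2^t$.
\end{itemize}
For the last point, the projectors come from exact circuits over $\mathcal G$ by Lemma~\ref{lem:fieldtransfer}, and Majorana matrix entries are $\pm1,\pm i$. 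Second, the serialization has polynomial length $S$, including the unary $G_I$. Since the parity-odd problem is a submodel of the ungraded one, the identity map then transfers hardness to the ungraded Hermitian-square problem.

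\textbf{Containment.} It suffices to place the ungraded problem in $\QMAone$, since the parity-odd problem is its restriction. I would apply Theorem~\ref{thm:direct-hermitian-nullspace} with $A_x=Q/m$ and $\epsilon_x=\sqrt\delta/m$.

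\emph{Normalization.} Since $\lVert Q_a\rVert\le1$, we get $\lVert Q\rVert\le m\le S$, so the admissible norm bound $\lVert A_x\rVert\le1$ holds.

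\emph{Promise.} The YES condition is $\ker Q\ne0$. The NO promise $Q^2\succeq\delta\Id$ gives $A_x^2\succeq(\delta/m^2)\Id$, which is the required $\epsilon_x^2\Id$. However, $\epsilon_x$ must be a reduced rational, and $\sqrt\delta$ need not be rational. I would instead take $\epsilon_x=1/(mG_I)$. This works because $\delta=1/G_I\ge 1/G_I^2$ for $G_I\ge1$, hence $\sqrt\delta\ge 1/G_I$. Its numerator and denominator are polynomial in $S$ because $G_I\le S$.

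\emph{Sparse access.} Each $Q_a$ acts on at most eleven modes, so it has at most $2^{11}$ nonzero entries per row. Under Jordan--Wigner in the site-major order, each row therefore has at most $m\cdot2^{11}$ nonzero entries. The fermionic sign strings contribute only $\pm1$ factors, and these are computed reversibly from occupation parities. Entries at the same matrix position are aggregated exactly in $K_8$.

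\textbf{Main obstacle.} The delicate step is the common-denominator and height condition of Definition~\ref{def:generic-admissible-exact-sparse}. The exponents $t$ are given in binary, so $2^t$ can be exponentially large. That violates $h\le P_{\mathcal E}(S)$, and $\sqrt2$ must also be written in the $\zeta_8$ basis.

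I would handle this with Lemma~\ref{lem:stable-quadratic-dyadic-multiplier}. Each multiplication by a coefficient $\alpha$ is replaced by a sparse relation with bounded coefficients in $K_8$ whose unique zero-residual history ends at $\alpha x$. The eigen-equation $Qv=0$ is recast as the common-kernel system of these relations together with the linear sum constraint, as in Theorem~\ref{thm:common-kernel-exact-sparse-transfer}. For that theorem I must show two things:
\begin{itemize}
\item the kernel is exactly the set of lifted vectors $v\in\ker Q$ with their unique histories;
\item on NO inputs, the conditioning bound $\lVert w\rVert\le P\lVert R w\rVert$ follows from $Q^2\succeq\delta\Id$ combined with the polynomial history conditioning from Lemma~\ref{lem:stable-quadratic-dyadic-multiplier}.
\end{itemize}
Lemma~\ref{lem:cokernel-safe-hermitian-completion} then gives inverse-polynomial singular gap, and Theorem~\ref{thm:common-kernel-exact-sparse-transfer} yields membership in $\QMAone$.
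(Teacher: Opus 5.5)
Your proposal is correct and follows the paper's route. Hardness is inherited directly from Theorem~\ref{thm:minimal-hardness} and passes to the ungraded problem by inclusion. For containment, you were right to abandon the direct choice $A_x=Q/m$, because the numerically polynomial denominator and height bounds of Definition~\ref{def:generic-admissible-exact-sparse} fail for general inputs. Your replacement matches the paper: stable multiplier histories from Lemma~\ref{lem:stable-quadratic-dyadic-multiplier}, exact target-aggregation rows, and Theorem~\ref{thm:common-kernel-exact-sparse-transfer}. The two obligations you list, the kernel identity via unique histories and the NO conditioning from $Q^2\succeq G_I^{-1}\Id$ combined with the multiplier estimates, are exactly what Lemma~\ref{lem:n1-hermitian-sparse-relation} and Proposition~\ref{prop:n1-hermitian-realization} establish.
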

\begin{proof}
Theorem~\ref{thm:minimal-hardness} gives $\QMAone$-hardness for the parity-odd problem
on the stated chain family.  The same reduction proves hardness of the ungraded problem because
every parity-odd input is also an admissible ungraded input.

For containment, let $x$ be a promised input of physical length $S$.  Proposition~
\ref{prop:n1-hermitian-realization} constructs uniformly an admissible Hermitian exact-sparse
matrix $A_x$ with
\[
\ker A_x\cong\ker Q_x
\]
with equal nullity, and gives an inverse-polynomial lower bound on every nonzero singular value of
$A_x$ on a NO input. The common-kernel transfer in
Theorem~\ref{thm:common-kernel-exact-sparse-transfer}, applied by that proposition, supplies a verifier
over $\mathcal G$ with exact acceptance on a kernel witness and inverse-polynomial rejection on
every NO witness.  This proves containment of the ungraded language and of its parity-odd promise
restriction.
\end{proof}

\begin{remark}[Input and verifier parameters]
The occurrence construction in Appendix~\ref{app:n1-hermitian-relation} uses the independently
encoded binary exponent of every nonzero local matrix entry and preserves ordered multiplicity
and global cancellation.  The common-denominator, height, normalization, and clean-access bounds used
by the verifier belong to the constructed matrix $A_x$, not to the original input supercharge
$Q_x$.  No geometric locality is asserted for $A_x$.
\end{remark}

\section{Zero modes of listed nilpotent supercharges}
\label{sec:explicit-n2}

\begin{definition}[CAR occupation-basis convention]
\label{def:target-car-convention}
For $m$ ordered modes, the global occupation basis and CAR action are
\[
\ket x=(c_0^\dagger)^{x_0}\cdots(c_{m-1}^\dagger)^{x_{m-1}}\ket\Omega,
\]
\[
c_j\ket x=(-1)^{\sum_{k<j}x_k}x_j\ket{x-e_j},\qquad
c_j^\dagger\ket x=(-1)^{\sum_{k<j}x_k}(1-x_j)\ket{x+e_j}.
\]
\end{definition}

\begin{definition}[Explicit bounded-arity supercharge tables]
\label{def:explicit-table-input}
Fix term arity $r_*=13$ and coefficient field
$K_8=\mathbb Q(\zeta_8)$.  An instance of encoded length $S$
contains only:
\begin{enumerate}
\item a mode count $m$ and declared site-major occupation order; when chain coordinates are
present, they explicitly enumerate the chain sites and give the complete ordered per-site mode
allocation, including empty fibres, whose disjoint exhaustive union is all $m$ modes and whose
concatenation is the declared order; these lists determine the total map $\pi$ of
Definition~\ref{def:mode-allocation-locality};
\item an ordered list $\mathcal Q=\sum_{a=1}^M\mathcal Q_a$;
\item for each term, a strictly increasing support of size at most $r_*$ and its complete exact
local occupation-basis table in that support order;
\item one positive globally reduced common denominator $h$ and, for every entry, four signed
integers $(z_0,z_1,z_2,z_3)$ representing
$h^{-1}\sum_{j=0}^3z_j\zeta_8^j$;
\item one positive reduced rational $\delta=u/v$; and
\item syntactic information specifying support, local-basis, parity, and order conventions,
together with the convention of Definition~\ref{def:target-car-convention}.
\end{enumerate}
\end{definition}

\begin{definition}[Bounds for explicit supercharge tables]
\label{def:explicit-table-bounds}
The encoding obeys the absolute numerical bounds
\[
m,M\le S,\qquad h,|z_j|,u,v\le S^2,
\qquad \delta\ge S^{-2}.
\]
Every nonzero table entry changes fermion parity, and $\lVert\mathcal Q_a\rVert\le1$.
\end{definition}

\begin{definition}[Exact-zero problem for explicit nilpotent supercharges]
\label{def:explicit-table-exact-zero}
An input of Definitions~\ref{def:explicit-table-input} and~\ref{def:explicit-table-bounds} is promised
to satisfy
$\mathcal Q^2=0$ on the full Fock space.  With
$\Delta=\{\mathcal Q,\mathcal Q^\dagger\}$, exactly one of
\[
\textsf{YES}:E_0(\Delta)=0,
\qquad
\textsf{NO}:\Delta\succeq\delta\Id
\]
is promised.  We call this the explicit exact-zero nilpotent-supercharge problem.
\end{definition}

\begin{definition}[Chain-local Hamiltonian restriction]
\label{def:target-chain-restriction}
For inputs carrying the declared chain coordinates, this restriction fixes $R_*=19$.  It promises
that every nonzero monomial in the canonically normal-ordered, exactly aggregated expansion of
$\Delta$ has site support, computed as the $\pi$-image of its mode support, in an interval of at
most $R_*$ consecutive encoded sites.  The locality restriction applies to the Hamiltonian alone;
$\mathcal Q$ and its individual terms retain the general geometry allowed by the input model.
\end{definition}

\begin{remark}[Uniform input representation]
\label{rem:explicit-table-input-boundary}
The local term tables are the complete operator data and determine the sparse-access circuits.
The source Hamiltonian and the auxiliary verifier matrix belong to the reduction and
verification, respectively, and are not input fields.
\end{remark}

\begin{proposition}[Exact sparse access from bounded-arity supercharge tables]
\label{prop:target-list-admissible}
The representation and promise of Definitions~\ref{def:explicit-table-input},
\ref{def:explicit-table-bounds}, and~\ref{def:explicit-table-exact-zero} satisfy the uniform access and
numerical hypotheses of Corollary~\ref{cor:nilpotent-exact-sparse-transfer}.
\end{proposition}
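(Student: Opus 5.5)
The plan is to check the hypotheses of Corollary~\ref{cor:nilpotent-exact-sparse-transfer} one at a time: clean exact sparse access to $\mathcal Q_x$ and $\mathcal Q_x^\dagger$ over $K_8$, exact aggregation with zero suppression, the common-denominator and height bounds, a norm bound $\ell$, and the inverse-polynomial promise $\delta_x$. The matrix is the global Fock-space operator $\mathcal Q=\sum_a\mathcal Q_a$ on $m\le S$ modes. Each term acts on at most $r_*=13$ modes, so in the global occupation basis of Definition~\ref{def:target-car-convention} each term has at most $2^{13}$ nonzero entries per row or column. The full operator is therefore $M2^{13}$-sparse, which is polynomial in $S$.

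First I would fix the global lift of a local table. For a support $s_1<\dots<s_r$ and a local entry $\langle y|\mathcal Q_a|x\rangle$ between occupation patterns of the support, the global entry between $\ket{X}$ and $\ket{Y}$ is nonzero only when $X$ and $Y$ agree off the support. It then equals the local coefficient times a sign. That sign comes from moving the support creation operators past the off-support occupied modes that precede them. Because every nonzero entry changes parity, the sign is $(-1)^{\sum_i (x_{s_i}+y_{s_i})\cdot\#\{k<s_i,\ k\notin\text{supp},\ X_k=1\}}$, or an equivalent closed form. This is computable by a reversible prefix-parity circuit of size $O(m)$ that uncomputes its work bits. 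Row access then runs as follows: for each term $a$ and each local output pattern, read the local table and write the neighbor index. Column and adjoint access are symmetric, with adjoint values obtained by conjugating in $K_8$ (this sends $\zeta_8^j$ to $\zeta_8^{-j}$ and re-expresses the result in the basis $(1,\zeta_8,\zeta_8^2,\zeta_8^3)$ with integer coordinates).

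Next come aggregation and heights. Different terms, and several local entries, may hit the same global position. The access circuit enumerates candidate neighbors in a canonical order, namely lexicographic on (term index, local pattern). For each candidate it sums the contributions of all terms landing at that position, and it keeps only the first occurrence whose aggregated value is nonzero in some coordinate. This gives each nonzero entry exactly once and uses the same validity predicate in the row, column and value circuits. The common denominator is the supplied $h\le S^2$, and aggregated coordinates are bounded by $M\cdot S^2\le S^3$. I set $\ell=M$, since $\lVert\mathcal Q\rVert\le\sum_a\lVert\mathcal Q_a\rVert\le M\le S$. I take $\delta_x=\delta=u/v\ge S^{-2}$ directly from the input, and the promise $\Delta\succeq\delta\Id$ versus $E_0(\Delta)=0$ is exactly the corollary's promise. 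Malformed inputs (non-increasing supports, arity above $13$, bound violations, or a parity-preserving entry) map to the fixed identity block. All of these checks are syntactic and polynomial-time, except the promises $\mathcal Q^2=0$ and $\lVert\mathcal Q_a\rVert\le1$, which are promises and are never checked.

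I expect the main obstacle to be making the aggregation exactly clean. Duplicate suppression needs the circuit, when it proposes a candidate position, to recompute the full aggregated value and to decide whether an earlier candidate already produced that position, and then to uncompute everything. My first approach would be brute force over all $M2^{13}$ candidates with reversible comparators, costing polynomially many gates and clean work qubits. I would then verify that slot indices are consistent between the row and column orientations, because Definition~\ref{def:generic-exact-sparse-access} requires exactly one listing of each entry. The Jordan--Wigner-type sign convention also needs care, but it is routine once the lift formula above is fixed.
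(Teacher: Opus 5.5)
Your proposal is correct and follows essentially the paper's argument: you enumerate the per-term global contributions with CAR prefix signs, aggregate equal positions exactly with four-coordinate zero suppression, obtain the adjoint orientation by the $K_8$ conjugation $(z_0,z_1,z_2,z_3)\mapsto(z_0,-z_3,-z_2,-z_1)$, and take the common denominator, coordinate heights, term norms, and $\delta$ from Definition~\ref{def:explicit-table-bounds}. The one variation is how you compute the global entries: you lift each local entry directly with the closed-form sign $(-1)^{\sum_i(x_{s_i}+y_{s_i})\,\#\{k<s_i,\,k\notin\mathrm{supp},\,X_k=1\}}$, whereas the paper first converts the table by an integer change of basis to the normal-ordered words $W_J(\alpha,\beta)$ and then evaluates those words right to left. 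Your formula is valid even without the parity hypothesis you invoke, and it avoids the constant-factor height growth that the paper's basis change introduces.
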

\begin{proof}
For a support $J=(j_1<\cdots<j_r)$, convert its complete local table by a fixed exact integer
change of basis to the normal words
\[
W_J(\alpha,\beta)=
\prod_{t=1}^r(c_{j_t}^\dagger)^{\alpha_t}
\prod_{t=r}^1(c_{j_t})^{\beta_t},
\qquad \alpha,\beta\in\{0,1\}^r.
\]
Because $r\le r_*$, the transform uses only a fixed number of additions and sign changes and
introduces no denominator.  On a column basis state, apply every word from right to left,
computing the global CAR prefix parity at each mode.  This enumerates every column contribution
with the signs from modes between support elements included.  Applying the adjoint words and the
exact involution
$(z_0,z_1,z_2,z_3)\mapsto(z_0,-z_3,-z_2,-z_1)$ supplies the opposite orientation.
At most $M\,4^{r_*}$ entry contributions are generated for $\mathcal Q$ in either orientation.
Sort them by output index and aggregate equal indices exactly, suppressing an entry only when all
four field coordinates vanish. The fixed basis change multiplies coordinate heights by only a constant;
aggregating at most $M\,4^{r_*}$ entries leaves the common denominator $h$ unchanged and gives
polynomial numerical coordinate height. Copying the canonical output entries before reversing
the parity, sorting, and arithmetic work gives clean bidirectional access. The gap, parity, and
term-norm bounds are those of
Definition~\ref{def:explicit-table-bounds}; full-space nilpotency and the spectral alternative are
promised in Definition~\ref{def:explicit-table-exact-zero}.
\end{proof}

\begin{theorem}[Nilpotent extension with two auxiliary modes]
\label{thm:two-mode-nilpotentization}
Let $Q=Q^\dagger$ act on a $d$-dimensional system space, and let $a<b$ be two fresh fermionic
modes.  Define $s=ba+a^\dagger b$ and $\mathcal Q=Qs$.  On the full two-mode Fock space,
\[
 s^2=0,\qquad ss^\dagger+s^\dagger s=\Id_{ab},\qquad \lVert s\rVert=1,
\]
and hence
\[
 \mathcal Q^2=0,\qquad
 \{\mathcal Q,\mathcal Q^\dagger\}=Q^2\otimes\Id_{ab}.
\]
If $k=\dim\ker Q$, the Hamiltonian repeats every eigenvalue of $Q^2$ four times, has
ground-space dimension $4k$, and has the same least positive eigenvalue whenever the positive
spectrum is nonempty.  Its kernel and cohomology dimensions are
\[
 \dim\ker\mathcal Q=2d+2k,\qquad \dim H(\mathcal Q)=4k.
\]
\end{theorem}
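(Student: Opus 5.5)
The plan is to reduce everything to a four-dimensional computation on the two auxiliary modes, and then use the fact that $s$ is parity even to treat $\mathcal Q=Qs$ as an honest tensor product $Q\otimes s$.

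\emph{Computing $s$.} First write $s=ba+a^\dagger b$ in the occupation basis of Definition~\ref{def:target-car-convention}, restricted to the modes $a<b$. Order the basis as $\ket{00},\ket{10},\ket{01},\ket{11}$, with the first entry the occupation of $a$.
\begin{itemize}
\item $ba$ is nonzero only on $\ket{11}$, which it sends to $\pm\ket{00}$.
\item $a^\dagger b$ is nonzero only on $\ket{01}$, which it sends to $\pm\ket{10}$.
\end{itemize}
The prefix-parity signs are fixed constants of modulus one. Hence $s$ is a partial isometry with initial space $D=\operatorname{span}\{\ket{01},\ket{11}\}$ and final space $R=\operatorname{span}\{\ket{00},\ket{10}\}$, so $D\oplus R$ is the whole two-mode space. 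Three consequences follow.
\begin{itemize}
\item $s^\dagger s=P_D$ and $ss^\dagger=P_R$, so $ss^\dagger+s^\dagger s=\Id_{ab}$.
\item $\lVert s\rVert=1$.
\item $s^2=0$, because every vector of $R$ has mode $b$ empty and is therefore killed by both $ba$ and $a^\dagger b$.
\end{itemize}
The only care needed is the CAR signs, and they are irrelevant because only moduli and supports enter these identities.

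\emph{Commuting $Q$ past $s$.} Next I would justify the factorization. Both $ba$ and $a^\dagger b$ are products of two fermionic operators, so $s$ is even. It is supported on modes disjoint from the system. An even operator commutes with every operator on disjoint modes, whatever that operator's parity, so $[Q,s]=0$, and likewise with $s^\dagger$. Under the graded isomorphism of the full Fock space with system${}\otimes{}$two-mode space, this identifies $Q$ with $Q\otimes\Id$ and $s$ with $\Id\otimes s$, so $\mathcal Q=Q\otimes s$ with no Koszul sign, exactly as in Remark~\ref{rem:graded}. The algebra is then immediate:
\begin{itemize}
\item $\mathcal Q^2=Q^2\otimes s^2=0$;
\item $\mathcal Q^\dagger=Q\otimes s^\dagger$;
\item $\{\mathcal Q,\mathcal Q^\dagger\}=Q^2\otimes(ss^\dagger+s^\dagger s)=Q^2\otimes\Id_{ab}$.
\end{itemize}
Tensoring with a four-dimensional identity repeats every eigenvalue of $Q^2$ four times. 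This gives ground-space dimension $4\dim\ker Q^2=4k$, since $\ker Q^2=\ker Q$ for Hermitian $Q$, and the same least positive eigenvalue.

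\emph{Kernel and cohomology.} Finally I would count dimensions with the finite-dimensional identity $\ker(A\otimes B)=\ker A\otimes V+U\otimes\ker B$. Take $A=Q$ with $\dim\ker Q=k$, and $B=s$ with $\dim\ker s=2$ on the four-dimensional space $V$. The two summands meet in $\ker A\otimes\ker B$, so inclusion–exclusion gives
\[
\dim\ker\mathcal Q=4k+2d-2k=2d+2k.
\]
The image has dimension $\operatorname{rank}Q\cdot\operatorname{rank}s=2(d-k)$, so
\[
\dim H(\mathcal Q)=2d+2k-2(d-k)=4k.
\]
This agrees with the harmonic count from the Laplacian identity of Section~\ref{sec:susy-types}, which gives an independent check. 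Here $H(\mathcal Q)$ is the ungraded quotient, since no integer grading is asserted.

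I expect the main obstacle to be the sign bookkeeping: justifying that the evenness of $s$ makes the fermionic product agree with the tensor product. The computation itself is routine.
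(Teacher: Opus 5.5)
Your proposal is correct and follows essentially the same route as the paper: an explicit CAR evaluation of $s$ on the four two-mode basis states, evenness of $s$ so that it commutes with $Q$ and $\mathcal Q$ factors as $Q\otimes s$, and then a rank count for the kernel and cohomology. The one cosmetic difference is that you get $\dim\ker\mathcal Q$ from the tensor-kernel identity with inclusion--exclusion, whereas the paper obtains it by rank--nullity from $\operatorname{rank}(Qs)=2(d-k)$, a quantity you compute anyway.
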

\begin{proof}
The auxiliary CAR calculation, the spectral and dimension counts, and the locality boundary are
proved in Appendix~\ref{app:two-mode-nilpotentization}.
\end{proof}

\begin{remark}[Input and locality transfer]
If $Q=\sum_cq_c$, then $\mathcal Q=\sum_cq_cs$ preserves coefficient fields and term norms and
adds at most the two modes $a,b$ to each declared support.  Odd $Q$ gives odd $\mathcal Q$.
Because every term uses the same two auxiliary modes, the supercharge need not be geometrically
local; the Hamiltonian identity transfers locality of $Q^2$.
\end{remark}

\begin{lemma}[Two-mode extension of the chain construction]
\label{lem:nilpotent-two-mode-extension}
For an instance produced by Theorem~\ref{thm:minimal-hardness}, write
\[
Q_{\min}=\sum_{c=1}^{M}q_c,
\qquad q_c=\Pi_c\Gamma_c,
\]
where each $\Pi_c$ is an even bondwise-merged projector and $\Gamma_c$ is its distinct anchored
Majorana.  Thus $q_c$ is odd and Hermitian, $\lVert q_c\rVert\le1$, and $Q_{\min}$ is
Hermitian.  Introduce two fresh ordered modes $a<b$ and set
\[
s=ba+a^\dagger b,\qquad
\mathcal Q_c=q_cs,
\qquad
\mathcal Q=Q_{\min}s=\sum_c\mathcal Q_c.
\]
Theorem~\ref{thm:two-mode-nilpotentization} makes $\mathcal Q$ nilpotent on the full Fock space and
identifies its Hamiltonian as $Q_{\min}^2\otimes\Id_{ab}$.
Every $\mathcal Q_c$ has norm at most one and support on at most the ten system modes of
$\Pi_c$, one label mode, and $a,b$, hence arity at most thirteen.  The terms are generally
non-Hermitian.  Their shared auxiliary modes make neither the terms nor $\mathcal Q$
geometrically local, while the Hamiltonian identity preserves the chain geometry of
$Q_{\min}^2$.
\end{lemma}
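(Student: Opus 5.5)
The plan is to verify each claim of the lemma in turn, relying on Theorem~\ref{thm:minimal-hardness} for the properties of $q_c$ and on Theorem~\ref{thm:two-mode-nilpotentization} for the algebra of $s$.

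\emph{Properties of the summands.} Each $\Pi_c$ is an even projector, so $\lVert\Pi_c\rVert\le1$. Each $\Gamma_c$ is a Majorana on a label mode disjoint from the system support. By Remark~\ref{rem:graded}, $\Pi_c$ and $\Gamma_c$ commute. Hence $q_c^\dagger=\Gamma_c\Pi_c=\Pi_c\Gamma_c=q_c$, and $\lVert q_c\rVert\le\lVert\Pi_c\rVert\lVert\Gamma_c\rVert=1$. Oddness of $q_c$ follows because $\Pi_c$ is even and $\Gamma_c$ is odd. Hermiticity of $Q_{\min}$ then follows by summing.

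\emph{Nilpotency and the Hamiltonian.} Take the system space to be the Fock space of all system and label modes, and take $Q=Q_{\min}$. Theorem~\ref{thm:two-mode-nilpotentization} then gives $\mathcal Q^2=0$ and $\{\mathcal Q,\mathcal Q^\dagger\}=Q_{\min}^2\otimes\Id_{ab}$ on the full Fock space.

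The one point needing care is the meaning of the product $Q_{\min}s$ in the global CAR algebra, since $Q_{\min}$ is odd while $s$ is even. Here $s$ is a sum of quadratic monomials in the modes $a,b$, so it is even. It therefore commutes with every operator on the disjoint modes, including the odd $Q_{\min}$. The map $X\otimes Y\mapsto XY$ from the graded tensor product, restricted to even $Y$, is a $*$-homomorphism. Consequently the tensor identities of that theorem hold verbatim for the Clifford products. Concretely, $\mathcal Q^2=Q_{\min}^2s^2=0$, and $\mathcal Q\mathcal Q^\dagger+\mathcal Q^\dagger\mathcal Q=Q_{\min}^2(ss^\dagger+s^\dagger s)=Q_{\min}^2$. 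Here I use that $s$ commutes with $Q_{\min}$ and that $s^\dagger=a^\dagger b^\dagger+b^\dagger a$ is also even. Linearity gives $\mathcal Q=\sum_c q_cs=\sum_c\mathcal Q_c$.

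\emph{Norms and supports.} Since $q_c$ and $s$ commute and act on disjoint modes, $\lVert\mathcal Q_c\rVert\le\lVert q_c\rVert\lVert s\rVert\le1$, using $\lVert s\rVert=1$ from the theorem. The mode support of $\mathcal Q_c$ is contained in $\operatorname{supp}\Pi_c\cup\{\text{label mode of }\Gamma_c\}\cup\{a,b\}$. By Proposition~\ref{prop:ferm} this has at most $10+1+2=13$ modes, which matches $r_*$.

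\emph{Non-Hermiticity.} Because $s$ is not Hermitian, $\mathcal Q_c^\dagger=q_cs^\dagger\neq q_cs$ whenever $q_c\ne0$. This follows since $q_c(s-s^\dagger)=q_c\otimes(s-s^\dagger)$ is a nonzero tensor product.

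\emph{Locality.} Every $\mathcal Q_c$ contains the same modes $a,b$. Wherever these are allocated, terms $c$ whose windows lie far apart along the chain each have site support spanning from their window to the allocated site of $a,b$. This span grows with the chain, so neither the terms nor $\mathcal Q$ is geometrically local.

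By contrast, the Hamiltonian $Q_{\min}^2\otimes\Id_{ab}$ contains no monomial touching $a,b$. Its normal-ordered aggregated expansion is therefore exactly that of $Q_{\min}^2$, which has span at most nineteen by Theorem~\ref{thm:minimal-hardness}.

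\emph{Main obstacle.} I expect the only real difficulty to be the sign bookkeeping: checking that the graded products cause no Koszul sign in the anticommutator. This is settled by the evenness of $s$ and $s^\dagger$. For this step I would first write out $\mathcal Q^\dagger=s^\dagger Q_{\min}=Q_{\min}s^\dagger$ explicitly.
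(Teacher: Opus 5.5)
Your proposal is correct and takes essentially the same route as the paper. Like the paper, you apply Theorem~\ref{thm:two-mode-nilpotentization} to $Q_{\min}$ using that the even operator $s$ commutes with $Q_{\min}$, bound norm and arity termwise via $\lVert s\rVert=1$ and the ten-mode windows of Proposition~\ref{prop:ferm}, and transfer the chain geometry through $\{\mathcal Q,\mathcal Q^\dagger\}=Q_{\min}^2\otimes\Id_{ab}$. Your explicit checks of non-Hermiticity (via $q_c(s-s^\dagger)\neq0$) and of the loss of geometric locality caused by the shared modes $a,b$ are valid additions; the paper's short proof states these points only informally.
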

\begin{proof}
Apply Theorem~\ref{thm:two-mode-nilpotentization} to $Q_{\min}$. Each $q_c$ is odd and
$s$ is even, so every $q_cs$ is odd. The two nonzero monomials of $s$ have integer
coefficients, hence the $K_8$ field, common denominator, and coordinate-height bounds are
unchanged up to signed duplication. Its norm is one, and its two fresh modes raise the arity
from eleven to thirteen. Finally, the Hamiltonian is $Q_{\min}^2\otimes\Id_{ab}$, so its
chain range and realized span are exactly those already proved for $Q_{\min}^2$.
\end{proof}

\begin{proposition}[Hardness with a chain-local Hamiltonian]
\label{prop:explicit-n2-hardness-transfer}
The fixed $d=11$ one-dimensional quantum-SAT family over $\mathcal G$ of
Theorem~\ref{thm:rgn} and Lemma~\ref{lem:source-normalization} admits a deterministic
polynomial-time reduction to the explicit exact-zero nilpotent-supercharge problem of
Definition~\ref{def:explicit-table-exact-zero}, with the chain-local restriction of
Definition~\ref{def:target-chain-restriction}.
Its outputs have supercharge-term arity at most $13$, globally reduced common denominator at most
$8$, integral-basis coordinate height at most $72$, and Hamiltonian range $R_*=19$ and realized
span at most $15$.  The supercharge terms in this output family are generally
non-Hermitian.
\end{proposition}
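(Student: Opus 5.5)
The reduction composes the chain construction of Theorem~\ref{thm:minimal-hardness} with the two-mode extension of Lemma~\ref{lem:nilpotent-two-mode-extension}. Given a source instance, I first normalize by Lemma~\ref{lem:ambient-normalization}, route by Lemma~\ref{lem:source-routing}, and apply Proposition~\ref{prop:ferm}. This yields the even bondwise-merged projectors $\Pi_c$, $c=1,\dots,M$ with $M=2nK+2$, and the parity-odd Hermitian $Q_{\min}=\sum_c\Pi_c\Gamma_c$. I then append two fresh modes $a<b$ after all system and label modes in the site-major order, allocated to the last chain site or to a fresh site at the end. The output supercharge is $\mathcal Q=\sum_c\Pi_c\Gamma_c s$ with $s=ba+a^\dagger b$. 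Each term $\mathcal Q_c$ is emitted as a complete local occupation table on its sorted support of at most $13$ modes. The output threshold is $\delta=1/G_I$ with $G_I=64MG_{\mathrm{gap}}^2$. All of this is computed in polynomial time, since $M$, $G_{\mathrm{gap}}$, and the tables have polynomial size or constant local dimension.

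Correctness follows from Theorem~\ref{thm:two-mode-nilpotentization}. It gives $\mathcal Q^2=0$ on the full Fock space, so the nilpotency promise holds, and $\Delta=Q_{\min}^2\otimes\Id_{ab}$. The two fresh modes do not change the spectrum of $\Delta$; they only multiply multiplicities by four. Hence on YES instances $\Delta$ has the nonzero kernel $\ker Q_{\min}\otimes\Fcal_{ab}$. On NO instances Theorem~\ref{thm:minimal-hardness} gives $\Delta\succeq G_I^{-1}\Id$. Parity oddness of each entry holds because $q_c$ is odd and $s$ is even, and $\lVert\mathcal Q_c\rVert\le\lVert q_c\rVert\lVert s\rVert\le1$. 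For locality, $\Delta$ is literally $Q_{\min}^2$ tensored with identity on $a,b$. Its canonically normal-ordered aggregated expansion therefore coincides with that of $Q_{\min}^2$ and contains no monomial touching $a$ or $b$. The range $19$ and realized span $15$ then transfer directly from Theorem~\ref{thm:minimal-hardness}, so the restriction of Definition~\ref{def:target-chain-restriction} holds with $R_*=19$.

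The main obstacle is the numerical bookkeeping required by Definition~\ref{def:explicit-table-bounds}: a reduced common denominator $h\le8$, coordinate height at most $72$, and $u,v\le S^2$, $\delta\ge S^{-2}$.
\begin{itemize}
\item[] For the denominator and heights, I would compute the entries of the encoded projectors $\Pi_c$ explicitly. These entries come from bondwise-merged projectors of Nagaj's $d=11$ construction with gates over $\mathcal G$, so they lie in $\tfrac18\mathbb Z[\zeta_8]$. The $\widehat H$ entries contribute $1/\sqrt2=(\zeta_8-\zeta_8^3)/2$, and the merged range projectors are normalized sums. I would bound the integral-basis coordinates by inspecting the finite list of merged projector types, expecting $72$ to arise as $8\cdot9$ or similar.
\item[] Multiplying by $\Gamma_c$ and $s$ introduces only signs $\pm1$ and $\pm i=\pm\zeta_8^2$, and the CAR sign bookkeeping does the same. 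These operations preserve $h$ and the heights.
\item[] For the threshold, $u=1$ and $v=G_I$ is unary-sized, so $S\ge G_I$ already gives $\delta=1/G_I\ge S^{-1}\ge S^{-2}$, and the bounds $m,M\le S$ hold trivially.
\end{itemize}
If some merged projector entry turns out to have a larger denominator, I would fall back on the specific form of the bondwise grouping. Proving that uniform finite table is where I expect most of the effort to go.
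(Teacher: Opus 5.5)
Your structural argument matches the paper's: you compose Theorem~\ref{thm:minimal-hardness} with Lemma~\ref{lem:nilpotent-two-mode-extension}, obtain nilpotency and $\Delta=Q_{\min}^2\otimes\Id_{ab}$ from Theorem~\ref{thm:two-mode-nilpotentization}, and transfer range $19$ and span $15$ because $\Id_{ab}$ normal-orders to the empty word. The gaps are in the numerical bookkeeping, which the proposition explicitly asserts.

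\textbf{The threshold bound fails as stated.} In the nilpotent table model of Definition~\ref{def:explicit-table-input}, $\delta=u/v$ is a reduced rational. It is not a unary integer like $G_I$ in Definition~\ref{def:1glsp}; if it were, the requirement $u,v\le S^2$ in Definition~\ref{def:explicit-table-bounds} would be vacuous. In binary, $v$ contributes only $O(\log N)$ bits. Yet $v=64MG_{\mathrm{gap}}^2\ge2^{24}(N+1)^{13}$, since $G_{\mathrm{gap}}\ge2m_\star^2$. The rest of your output is $M=N+1$ constant-size tables plus $O(N\log N)$ bits of mode, site and support data, so $S=O(N\log N)$. Hence $v\le S^2$, and with it $\delta\ge S^{-2}$, fails for large $N$. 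The paper fixes this by padding. It appends $Z=m+v+128$ one-mode terms with complete zero tables. These change no operator, gap or support, but they force $S\ge M_0+Z\ge v$.

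\textbf{The bounds $h\le8$ and height $\le72$ are only anticipated, not proved.} Describing the merged projectors as \emph{normalized sums} misses why these bounds hold. The projector onto the span of several non-orthogonal ranges is not a sum of them. In general it requires inverting a Gram matrix, which could introduce arbitrary denominators.

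The missing idea is the block classification in Appendix~\ref{app:explicit-n2-reduction}. Distinct transition blocks occupy disjoint label-pair sectors. Exhaustive intersection with the diagonal predicates then makes every merged term an orthogonal direct sum of three kinds of block:
\begin{itemize}
\item diagonal identity blocks;
\item graph projectors $\tfrac12(P_A+P_B-V-V^\dagger)$;
\item the gate-input partial block $\Id_8-\tfrac12\bigl(\begin{smallmatrix}Q&QU^\dagger\\UQ&UQU^\dagger\end{smallmatrix}\bigr)$, with $Q$ integral and diagonal.
\end{itemize}

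Each scheduled gate has denominator $2$ and coordinate height $2$, using $1/\sqrt2=(\zeta_8-\zeta_8^3)/2$. An entry of $UQU^\dagger$ sums at most four intermediate products. Each product is a four-term signed convolution of height-$2$ coordinates, so the entry has height at most $64$ at denominator $4$. Halving and subtracting from the identity gives denominator $8$ and height $64+8=72$; this, not $8\cdot9$, is where $72$ comes from. Multiplication by $\Gamma_c$ and $s$ only permutes entries and changes signs. A final global gcd reduction, which your proposal also omits, then yields the required globally reduced $h\le8$.
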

\begin{proof}
First apply Lemma~\ref{lem:ambient-normalization}.  Then apply Theorem~\ref{thm:rgn},
Lemmas~\ref{lem:source-normalization} and~\ref{lem:source-grouping},
Proposition~\ref{prop:ferm}, and
Theorem~\ref{thm:minimal-hardness} to the fixed source family.  Its bondwise-merged even projectors and
anchored Majoranas give the Hermitian supercharge $Q_{\min}$ produced by the reduction.  Apply
Lemma~\ref{lem:nilpotent-two-mode-extension} to form the nilpotent supercharge.

Appendix~\ref{app:explicit-n2-reduction} constructs the exact CAR tables and verifies the promise
transfer and all numerical and geometric bounds.  The construction is deterministic and
polynomial, and its output satisfies
Definitions~\ref{def:explicit-table-input}, \ref{def:explicit-table-bounds},
\ref{def:explicit-table-exact-zero},
and~\ref{def:target-chain-restriction}.
\end{proof}

\begin{theorem}[Nilpotent-supercharge completeness]
\label{thm:explicit-n2-completeness}
The problem in Definition~\ref{def:explicit-table-exact-zero} is $\QMAone$-complete.  Hardness
already holds on the subfamily of
Definition~\ref{def:target-chain-restriction}, whose Hamiltonian has range nineteen and realized
span at most fifteen on a chain.
\end{theorem}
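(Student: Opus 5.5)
The proof has two directions, containment and hardness, and both follow from results already stated.

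\emph{Containment.} Take an input of Definitions~\ref{def:explicit-table-input}--\ref{def:explicit-table-exact-zero}. Proposition~\ref{prop:target-list-admissible} shows that the complete local CAR tables give clean exact sparse access to $\mathcal Q$ and $\mathcal Q^\dagger$. The coefficients lie in $K_8$ with one common denominator $h\le S^2$ and coordinate heights bounded by a polynomial in $S$. The promised gap $\delta=u/v$ is a reduced rational with $\delta\ge S^{-2}$. For the norm parameter we use the term norms: $\lVert\mathcal Q\rVert\le\sum_a\lVert\mathcal Q_a\rVert\le M\le S$, so the stacked residual $R=(\mathcal Q;\mathcal Q^\dagger)$ has $\lVert R\rVert\le\sqrt2\,S$, and we may take $\ell=2S$.

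Corollary~\ref{cor:nilpotent-exact-sparse-transfer} then applies with $\delta_x=\delta$. The Gram identity $R^\dagger R=\mathcal Q^\dagger\mathcal Q+\mathcal Q\mathcal Q^\dagger=\Delta$ identifies the two cases. In a YES instance, $\ker\Delta=\ker\mathcal Q\cap\ker\mathcal Q^\dagger\ne0$. In a NO instance, $R^\dagger R\succeq\delta\Id$. This gives a $\QMAone$ verifier over $\mathcal G$. The nilpotency promise is not needed for containment, because it only matters for interpreting $\Delta$ as a Laplacian.

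\emph{Hardness.} Every problem in $\QMAone$ reduces, by Lemma~\ref{lem:source-routing} after Lemma~\ref{lem:ambient-normalization}, to the fixed $d=11$ chain source of Theorem~\ref{thm:rgn}. Proposition~\ref{prop:explicit-n2-hardness-transfer} reduces that source deterministically to the explicit nilpotent problem with the chain restriction $R_*=19$. We must check that the promise carries over.

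In a YES instance, Lemma~\ref{lem:spectral} gives $\ker Q_{\min}\ne0$. Theorem~\ref{thm:two-mode-nilpotentization} gives $\Delta=Q_{\min}^2\otimes\Id_{ab}$, hence $E_0(\Delta)=0$.

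In a NO instance, Theorem~\ref{thm:minimal-hardness} gives $Q_{\min}^2\succeq G_I^{-1}\Id$, so $\Delta\succeq G_I^{-1}\Id$. The reduction outputs $\delta=1/G_I$.

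Nilpotency $\mathcal Q^2=0$ holds on the full Fock space, also by Theorem~\ref{thm:two-mode-nilpotentization}. Locality of $\Delta$, with range nineteen and realized span fifteen, is inherited from $Q_{\min}^2$ through Lemma~\ref{lem:nilpotent-two-mode-extension}.

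\emph{Main obstacle: the numerical bounds of Definition~\ref{def:explicit-table-bounds}.} These require $u,v\le S^2$ and $\delta\ge S^{-2}$. We must therefore check that $G_I=64MG_{\mathrm{gap}}^2$ is at most $S^2$, where $S$ is the output length. The bound is not automatic, because $G_{\mathrm{gap}}$ grows like a large polynomial in $nK$ (for example $m_\star^2\sim N^6$), while the table size grows only linearly in $M$.

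The first thing to try is to pad the output with enough explicit data that $S$ becomes at least $\sqrt{G_I}$. One option is trailing empty chain sites, each with an empty fibre. Another is idle modes that appear in no term. Both are polynomial-size padding. Idle modes tensor $\Delta$ with an identity, and Lemma~\ref{lem:cokernel-safe-hermitian-completion}'s remark notes that this preserves the kernel and gap. Empty sites do not affect locality.

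With $S\ge\sqrt{64MG_{\mathrm{gap}}^2}$ ensured, all the bounds hold. The remaining bounds, namely common denominator at most $8$ and coordinate heights at most $72\le S^2$, are already supplied by Proposition~\ref{prop:explicit-n2-hardness-transfer}. Combining containment and hardness proves $\QMAone$-completeness, with hardness holding already on the chain-local subfamily.
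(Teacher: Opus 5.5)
Your proposal is correct and follows the paper's proof: containment comes from Proposition~\ref{prop:target-list-admissible} together with Corollary~\ref{cor:nilpotent-exact-sparse-transfer}, where your explicit norm parameter $\ell=2S$ is a harmless addition, and hardness on the chain-local subfamily comes from Proposition~\ref{prop:explicit-n2-hardness-transfer}. The numerical obstacle you identify is already handled inside that proposition's proof (Appendix~\ref{app:explicit-n2-reduction}), which appends $Z=m+v+128$ one-mode zero terms so that $S\ge M\ge Z$; your padding by trailing empty sites or idle modes would work equally well.
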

\begin{proof}
Proposition~\ref{prop:target-list-admissible} and
Corollary~\ref{cor:nilpotent-exact-sparse-transfer} give $\QMAone$ containment for the general
explicit nilpotent-supercharge problem.  Proposition~\ref{prop:explicit-n2-hardness-transfer}
gives $\QMAone$-hardness on the stated chain-local Hamiltonian subfamily.  That subfamily uses the same
input representation, while the verifier need not use its additional coordinates or
Hamiltonian-range promise.
\end{proof}

\begin{remark}[Cancellation between local supercharge terms]
\label{rem:intrinsic}
Termwise oddness does not turn the global equation $Q\psi=0$ into the conditions
$Q_a\psi=0$.  On two modes let $Z_2=\Id-2n_2$ and take
\[
Q_1=\Gamma_1,\qquad Q_2=-\Gamma_1Z_2,
\qquad Q=Q_1+Q_2=2\Gamma_1n_2.
\]
Both terms are exact, local, Hermitian, parity odd, and norm one.  If $0\ne\psi$ and
$n_2\psi=0$, then
$Q\psi=0$ while $Q_1\psi=-Q_2\psi\ne0$.  A termwise kernel test can therefore reject a zero
mode.  The sparse relation in Appendix~\ref{app:n1-hermitian-relation} therefore aggregates all
occurrences at each global target before testing the equation, preserving collisions and exact
cancellation between different encoded entries and terms.
\end{remark}

\begin{remark}[Verification and geometric locality]
\label{rem:exactnullspace}
Rudolph's exact-nullspace test uses the squared violation norm
$\lVert A\ket\psi\rVert^2$~\cite{Rudolph2024gateset}. The direct verifier of
Theorem~\ref{thm:direct-hermitian-nullspace} is used through the common-kernel transfer of
Theorem~\ref{thm:common-kernel-exact-sparse-transfer}. Proposition~\ref{prop:n1-hermitian-realization} establishes the Hermitian relation and
its conditioning bounds, while
Corollary~\ref{cor:nilpotent-exact-sparse-transfer} gives the two-residual Laplacian application
used for the explicit nilpotent-supercharge model. Geometric locality is established separately by the
corresponding chain reductions.
\end{remark}


\section{Discussion}
\label{sec:dim}

A chain can be embedded as one row of any higher-dimensional lattice without changing the
supercharge, its spectrum, or its interaction range.  The one-dimensional hardness results
therefore extend to higher dimensions with the same local parameters.  The content of the chain
theorems is stronger: the hard instances already admit the most restrictive lattice geometry.

The Hermitian construction has a quadratic soundness bound.  Proposition~\ref{prop:quad-tight}
shows that this behaviour is intrinsic to the single-Majorana spectral map, up to constants.  A
linear bound would require a different local Hermitian construction.  Lowering the interaction
arity while preserving exact zero energy is another open direction.

Requiring the individual nilpotent supercharge terms, as well as their Hamiltonian, to be local on
a chain is a further geometric problem.

\appendix
\section{Verification of sparse linear constraints}
\label{app:exact-sparse-transfer}

This appendix proves stable multiplication, Hermitian completion, exact access, and the
corresponding verification results. All access and verification circuits use $\mathcal G$, and all
matrix coefficients use the rational basis
$(1,\zeta_8,\zeta_8^2,\zeta_8^3)$ of $\mathbb Q(\zeta_8)$.

\subsection{Exact reversible access circuits}
The reversible Boolean subroutines used below are synthesized exactly over $\mathcal G$.  The
basic identities are
\[
X=\widehat HT^4\widehat H,
\qquad T^\dagger=T^7.
\]
For controls $a,b$ and target $t$, use the gate sequence
\[
\begin{split}
&\widehat H_t,\ CX_{b,t},\ T_t^\dagger,\ CX_{a,t},\ T_t,\ CX_{b,t},
T_t^\dagger,\ CX_{a,t},\\
&T_b,\ T_t,\ \widehat H_t,\ CX_{a,b},\ T_a,\ T_b^\dagger,\ CX_{a,b}.
\end{split}
\]
In the ordered basis $\ket{000},\ket{001},\ldots,\ket{111}$, direct multiplication gives
\[
\sum_{x\in\{0,1\}^3\setminus\{110,111\}}\ket x\!\bra x
+\ket{110}\!\bra{111}+\ket{111}\!\bra{110}.
\]
All phases cancel on the first six basis states, while the last two are swapped.  Thus the word
equals the ordinary phase-sensitive Toffoli entry by entry.

Replacing NOT, XOR, copy, and AND by $X$, CNOT, and the displayed Toffoli makes the required
Boolean computations exactly reversible.  Fixed sorting networks and
compute--copy--uncompute give clean outputs with polynomial overhead.

\subsection{Stable multiplication by quadratic-dyadic coefficients}
\label{app:stable-quadratic-dyadic-multiplier}

We prove Lemma~\ref{lem:stable-quadratic-dyadic-multiplier}.  The case $\alpha=0$ is represented
by the absence of a coefficient incidence, so assume $\alpha\ne0$.  Put
\[
\beta=1+\sqrt2,\qquad \rho=\sqrt2-1=\beta^{-1},\qquad k=S+2,
\]
and let $T(u,v)=(u+2v,u+v)$.  Compute
\[
(A,B)=T^k(a,b),\qquad (C,D)=T^k(c,d).
\]
Thus
\[
A+B\sqrt2=\beta^k(a+b\sqrt2),\qquad
C+D\sqrt2=\beta^k(c+d\sqrt2).
\]

For $z=a+b\sqrt2\ne0$, let $z'=a-b\sqrt2$ and
$H=\max\{|a|,|b|,1\}<2^S$.  Since $zz'=a^2-2b^2$ is a nonzero integer,
\[
|z'|<3H,\qquad |z|>\frac1{3H},\qquad \frac{|z'|}{|z|}<9\,4^S.
\]
The conjugate of $\beta$ is $-\rho$, and therefore
\[
\frac{|(-\rho)^kz'|}{|\beta^kz|}
=\beta^{-2k}\frac{|z'|}{|z|}<1.
\]
Writing $w=\beta^kz=A+B\sqrt2$ and $w'=(-\rho)^kz'=A-B\sqrt2$ shows from
\[
A=\frac{w+w'}2,
\qquad
B=\frac{w-w'}{2\sqrt2}
\]
that $A$ and $B$ are nonzero and have the same sign.  The same argument applies to $C,D$;
an originally zero pair remains zero.

Let
\[
m=\max\{\operatorname{bitlen}|A|,\operatorname{bitlen}|B|,
          \operatorname{bitlen}|C|,\operatorname{bitlen}|D|\}.
\]
One application of $T$ increases the largest absolute coordinate by at most a factor three, so
$1\le m\le3S+5$.  For $0\le j<m$, define
\[
d_j=\operatorname{sgn}(A)A_j+
\operatorname{sgn}(B)B_j\sqrt2+
i\bigl(\operatorname{sgn}(C)C_j+
\operatorname{sgn}(D)D_j\sqrt2\bigr),
\]
where $N_j$ is bit $j$ of $|N|$.  Every coordinate of $d_j$ lies in
$\{-1,0,1\}$.  With $s_0=0$ fixed rather than represented by a variable, impose
\[
r^H_j=s_{j+1}-\frac{s_j+d_jx}{2},
\qquad 0\le j<m.
\]
At zero residual,
\[
s_m=h x,
\qquad
h=\frac{A+B\sqrt2+i(C+D\sqrt2)}{2^m}.
\]
Sign alignment prevents cancellation in each real quadratic component.  Hence every exact partial
Horner multiplier has modulus below four, $|h|\ge1/2$, and the inverse of the lower-bidiagonal
residual matrix has norm at most two.  In particular,
\[
\lVert s-s^{(0)}(x)\rVert\le2\lVert r_H\rVert,
\qquad
\lVert s\rVert\le4\sqrt m\,|x|+2\lVert r_H\rVert.
\]

The equality
\[
\alpha=h\gamma,
\qquad
\gamma=2^{m-t}\rho^k
\]
and the bounds $|\alpha|\le1$, $|h|\ge1/2$ give $0<\gamma\le2$.  Put $e=m-t$.
If $e<0$, use $k$ factors $\rho$ followed by $t-m$ factors $1/2$.  If $e\ge0$,
interleave the $e$ factors $2$ with the $k$ factors $\rho$: starting with prefix product $p=1$,
insert a factor $\rho$ before the next doubling whenever $p>1$, append the factor $2$, and place
all unused $\rho$ factors at the end.  Exact comparison with one is performed in
$\mathbb Z[\sqrt2]$.  A requested contraction cannot be unavailable, since otherwise the
remaining doublings would force the final product above two.  During the active part
\[
\rho<p\le2,
\]
so every contiguous factor product is less than $2/\rho<5$.  The explicitly active word has
length at most $4S+7$; a possibly much longer half-contraction tail is addressed by its binary
stage label.

For either factor word $f_1,\ldots,f_L$, set $q_0=s_m$, treating it as an alias rather than a
second variable, and impose
\[
r^F_j=q_j-f_jq_{j-1},
\qquad 1\le j\le L,
\qquad y=q_L.
\]
Zero residual gives $y=\alpha x$ and determines every history variable uniquely.  The geometric
inverse bound in the $e<0$ case and the contiguous-product bound in the $e\ge0$ case
give the safe uniform estimates
\[
|y-\alpha x|\le20(S+1)\lVert(r_H,r_F)\rVert,
\]
\[
\lVert(s,q)\rVert
\le100(S+1)\bigl(|x|+\lVert(r_H,r_F)\rVert\bigr).
\]

All row coefficients belong to a fixed finite subset of $K_8$: signs of
\[
1,\quad2,\quad\rho,\quad\frac12,\quad
\frac{p+q\sqrt2+i(r+s\sqrt2)}2,
\qquad p,q,r,s\in\{-1,0,1\}.
\]
A Horner row has at most three nonzero entries and a factor row at most two.  A source-column
query scans only the $m\le3S+5$ digit positions; a history column has only adjacent rows.  In the
long half-contraction tail, predecessor, successor, and endpoint arithmetic use the binary stage
label and never iterate through the numerical value of $t$.  Exact zero suppression and
compute--copy--uncompute give independent clean row and column access.  This proves
Lemma~\ref{lem:stable-quadratic-dyadic-multiplier}.

\subsection{Kernel-preserving Hermitian completion and normalization}
\label{app:cokernel-safe-completion}

Let $R=R_x$.  From
\[
K_R(v,w)=(R^\dagger w,Rv+w),
\]
the equation $K_R(v,w)=0$ gives $w=-Rv$ and $R^\dagger Rv=0$.  Taking the inner product with
$v$ gives $Rv=0$, hence $w=0$ and
\[
\ker K_R=\ker R\oplus\{0\}
\]
with equal nullity.  For each nonzero singular value $\sigma$ of $R$, the corresponding
two-dimensional block of $K_R$ has eigenvalues
\[
\lambda_\pm(\sigma)=\frac{1\pm\sqrt{1+4\sigma^2}}2.
\]
Every direction in $\ker R^\dagger$ has eigenvalue one.  If $\lVert R\rVert\le L$ and
$B=2L+2$, then $\lVert K_R/B\rVert\le1/2$ and
\[
\left|\frac{\lambda_-(\sigma)}B\right|
=\frac{2\sigma^2}{B(\sqrt{1+4\sigma^2}+1)}
\ge\frac{\sigma^2}{B^2}.
\]
Thus a NO lower bound $\sigma\ge P^{-1}$ transfers to
$\sigma_{\min}(K_R/B)\ge(P^2B^2)^{-1}$.  The $1/B$ eigenvalues from
$\ker R^\dagger$ are no smaller.

The label grammar is application-specific, but the access rule is common.  Every domain variable
and codomain residual has one tag and one fixed-width tuple of zero-padded fields.  Range,
incidence, stage, and redundant-field checks are recomputed from the input.  An alias is owned by
one label: for example, $s_0=0$ is not a variable, $q_0$ uses the label of $s_m$, and the endpoint
$y$ uses the label of $q_L$.  Alternative alias strings are invalid; otherwise equality-only
gauge directions would be introduced.

Row access to the domain half of $K_R$ invokes column access to $R$ and conjugates the returned
values.  Row access to the codomain half invokes row access to $R$ and adds its diagonal identity
entry.  Column access uses the opposite orientations.  Exact conjugation in the chosen basis is
\[
(z_0,z_1,z_2,z_3)\longmapsto(z_0,-z_3,-z_2,-z_1).
\]
Each enumerator generates a polynomial list of entry contributions, aggregates identical
positions, suppresses an aggregate exactly when all four coordinates vanish, and assigns
canonical slots by
reversible prefix counting.  An invalid row or column lists only its diagonal identity, and the
value circuit uses the same predicates and block rules.

If the relation has one numerical-polynomial common denominator $h_R$ and coordinate height
$H_R$, then before reduction the completed and normalized matrix has denominator $Bh_R$.  An
$R/B$ entry retains the numerator coordinates of $R$, a codomain identity entry $1/B$ has
numerator $h_R$, and an invalid identity entry has numerator $Bh_R$.  The construction algorithm
takes the integer gcd of this denominator and the four coordinates of the polynomial list of coefficient
types actually used.  Multiplicity does not change the gcd.  Division gives one globally reduced
denominator and cannot increase height.  This proves
Lemma~\ref{lem:cokernel-safe-hermitian-completion}.

\subsection{Verification of Hermitian nullspaces}
\label{app:exact-hermitian-verification}

Fix an admissible model $\mathcal E$.  The matrix $A_x$ is Hermitian, has norm at most one, clean
exact row, column, value, and adjoint access, polynomial sparsity, coefficients in $K_8$, one
numerical-polynomial common denominator and height, exact singularity in YES, and least singular
value at least $\epsilon_x$ in NO.  These are the hypotheses of Rudolph's exact sparse Hermitian
verification theorem at cyclotomic level three~\cite{Rudolph2024gateset}.  Its rejection
probability is
\[
c(m)\lVert A_x\ket\psi\rVert^2
\]
for a positive inverse-polynomial $c$.  Thus a kernel vector has exactly zero rejection and every
NO witness has inverse-polynomial rejection.  At level three the gate matrices are literally those
of $\mathcal G$; exact amplification preserves the zero-rejection witness.  This proves
Theorem~\ref{thm:direct-hermitian-nullspace} in the geometry-free verifier model.

\subsection{Common-kernel reduction}
\label{app:common-kernel-exact-sparse-transfer}

Let $Y_x=\bigoplus_jY_{j,x}$. Stacking the residuals gives, without approximation,
\[
\lVert R_xv\rVert^2=\sum_j\lVert B_jv\rVert^2,
\qquad
R_x^\dagger R_x=\sum_jB_j^\dagger B_j,
\qquad
\ker R_x=\bigcap_j\ker B_j.
\]
The completion is exactly the matrix of
Lemma~\ref{lem:cokernel-safe-hermitian-completion}. Its proof in
Appendix~\ref{app:cokernel-safe-completion} gives
$\ker K_R=\ker R_x\oplus\{0\}$, removes the cokernel directions through the lower-right identity,
and computes the two-dimensional singular-value blocks. On a NO instance,
$\sigma_{\min}(R_x)^2\ge\delta_x$. Substituting this bound and
$\beta=2L+2$ into the completion estimate shows that every nonzero singular value of
$K_R/\beta$ is at least $\min(\delta_x,1)/\beta^2$, while
$\lVert K_R/\beta\rVert\le1/2$.

A canonical row label of $R_x$ is a pair $(j,y)$, where $j$ is a fixed-width residual index
and $y$ is a canonical row label for $B_j$; its columns use the labels of $X_x$. Row access
decodes $j$ and calls the corresponding residual routine. Column access loops over the
polynomially many residual and slot indices. All residual routines are fixed by the model and
belong to the same uniform access algorithm. The generated entry contributions are sorted by
their canonical labels,
their four $K_8$ coordinates are added exactly, and an aggregate is suppressed precisely when all
four sums vanish. The declared output is copied before the decoding, selection, arithmetic, and
sorting workspace are reversed. The adjoint block uses the supplied opposite-direction and
adjoint-value access.

The block flag selects $R_x$, $R_x^\dagger$, or the $Y_x$ identity. Invalid square labels
select their own identity entry, and malformed inputs select a fixed identity matrix. Because
$\beta=2L+2$ is a numerically polynomial integer, normalizing by $\beta$ multiplies the common
coefficient denominator by only a numerical-polynomial factor. If
$\bar\delta_x=\min(\delta_x,1)$, reducing
$\bar\delta_x/\beta^2$ cannot increase its numerator or denominator beyond the declared
numerical-polynomial bounds. Thus the Hermitian matrix, including its identity blocks on invalid
labels, meets the hypotheses of
Theorem~\ref{thm:direct-hermitian-nullspace}, which proves
Theorem~\ref{thm:common-kernel-exact-sparse-transfer}.

\subsection{Application to supercharge Laplacians}
\label{app:nilpotent-exact-sparse-transfer}

For Corollary~\ref{cor:nilpotent-exact-sparse-transfer}, take
$B_1=\mathcal Q_x$ and $B_2=\mathcal Q_x^\dagger$. Then
\[
R_x^\dagger R_x
=\mathcal Q_x^\dagger\mathcal Q_x+\mathcal Q_x\mathcal Q_x^\dagger
=\Delta_x,
\qquad
\ker R_x=\ker\Delta_x.
\]
The second equality follows from positivity of both summands. The assumed row, column,
exact-value, and adjoint-value routines supply both residual orientations, and the common-kernel
construction above supplies the normalization and verifier. The Laplacian identity suffices for
this application. A cohomological interpretation of $\ker\Delta_x$ also requires
$\mathcal Q_x^2=0$, as in Section~\ref{sec:susy-types}.

\section{Normalization and gap of the one-dimensional history Hamiltonian}
\label{app:raw-source-gap}

This appendix proves the quantitative raw-source bound used, after bondwise merging, in
Theorem~\ref{thm:minimal-hardness}.

\begin{proof}[Proof of Lemma~\ref{lem:ambient-normalization}]
Fix the verifier family for the language. By its soundness promise, there is a computable integer
polynomial $P_{\mathcal L}(s)\ge1$ such that the acceptance effect $A_x$ on a NO input of length
$s$ obeys
\[
\lVert A_x\rVert\le1-\frac1{P_{\mathcal L}(s)}.
\]
Set $t=P_{\mathcal L}(s)$, run $t$ copies on disjoint witness and clean-ancilla blocks, and
compute the coherent AND of their output bits. The acceptance effect on the joint witness is
$A_x^{\otimes t}$. This operator identity also covers witnesses entangled across the blocks, and
\[
\lVert A_x^{\otimes t}\rVert=\lVert A_x\rVert^t
\le(1-1/t)^t\le\frac12.
\]
A tensor product of perfectly accepted witnesses retains acceptance one.  The phase-sensitive
Toffoli decomposition in Appendix~\ref{app:exact-sparse-transfer} implements the coherent AND
exactly with polynomial circuit size over $\mathcal G$.
\end{proof}

\subsection{Instance size and projector counts}
\label{app:source-normalization-count}

\begin{proof}[Proof of Lemma~\ref{lem:source-normalization}]
A complete module consists of a forward sweep, a turn, a return sweep, and the handoff to the
next module.  Put $S_{\mathrm{mod}}=(K-1)n$.  Index the complete modules by $s=rn+j$, where
$0\le r<K-1$ and $0\le j<n$; the final truncated sweep has index
$s=S_{\mathrm{mod}}$.  The active label, one of $A,M,T,B$ below, starts module $s$ at source site
$2s$, and the final sweep ends at site $2nK$.  Thus the schedule traverses $2nK$ physical bonds
and the chain has
\[
N=2nK+1
\]
sites.  The transition states are enumerated in the proof of
Proposition~\ref{prop:ambient-raw-gap} below.  Each complete module contributes $2n$ transitions
in the forward sweep, two at the turn, $2n+1$ in the return sweep, and three in the handoff to the
next module.  The final module stops after its forward sweep.  Therefore
\[
L=(K-1)n\bigl[2n+2+(2n+1)+3\bigr]+2n
=(K-1)n(4n+6)+2n.
\]

It remains to count the static projectors.  For a group $b$, let $m_b$ be its number
of raw projector occurrences.  Each of the $K$ rounds has $n-1$ odd gate bonds and one odd
nongate bond, giving $K(n-1)$ gate groups and $K$ nongate groups.  The counts by support
type are
\[
\begin{array}{l|c|c}
\text{support type}&\text{number of groups}&m_b\\ \hline
\text{left boundary}&1&2\\
\text{even physical bond}&nK&6\\
\text{odd gate bond}&K(n-1)&8\\
\text{odd nongate bond}&K&7\\
\text{right boundary}&1&1
\end{array}
\]
Thus the largest group multiplicity is eight and
\[
M_{\mathrm{ung}}=2+6nK+8K(n-1)+7K+1=(14n-1)K+3.
\]
The number of groups is $1+nK+K(n-1)+K+1=2nK+2=N+1$.  The gap proof below uses the same
schedule and group partition.
\end{proof}

\subsection{Spectral gap}

\begin{proof}[Proof of Proposition~\ref{prop:ambient-raw-gap}]
\par\smallskip\noindent\emph{Safe-label decomposition and transition graph.}
Decode the five-bit label registers.  The source Hilbert space decomposes as
\[
\mathcal H_{\mathrm{src}}=\mathcal H_{\mathrm{inv}}\oplus
\bigoplus_{w\in\mathcal W_{\mathrm{dec}}}\ket w\otimes\mathcal F_w,
\]
where $\mathcal H_{\mathrm{inv}}$ contains configurations with an invalid codeword,
$\mathcal W_{\mathrm{dec}}$ is the set of parity-valid decoded label words, and $\mathcal F_w$
is the tensor product of the internal qubits carried by the $A$, $M$, and $I$ labels.  Let
$\mathcal W_{\mathrm{safe}}\subseteq\mathcal W_{\mathrm{dec}}$ contain the words that pass every
whole-fibre clock and boundary check, where a whole-fibre diagonal term at $w$ acts as
$\ket w\!\bra w\otimes\Id_{\mathcal F_w}$, and set
\[
\mathcal S=\bigoplus_{w\in\mathcal W_{\mathrm{safe}}}\ket w\otimes\mathcal F_w.
\]
We call $\mathcal S^\perp$ the rejected-label sector.  Let $D_{\mathrm{diag}}$ be the diagonal
part of $H_{\mathrm{rec}}$.
The input and output diagonal projectors are not used to define $\mathcal W_{\mathrm{safe}}$
because they test proper subspaces of $\mathcal F_w$ on the canonical label path.  Retaining the
invalid-code projectors and the whole-fibre clock and boundary projectors gives a positive
diagonal suboperator $0\preceq D_{\mathrm{full}}\preceq D_{\mathrm{diag}}$ with
\[
D_{\mathrm{full}}\succeq\Pi_{\mathcal S^\perp}.
\]
For each fixed assignment outside its support, a transition projector decomposes into mutually
orthogonal blocks on $\mathcal F_w\oplus\mathcal F_{w'}$ of the form
\[
\frac12
\begin{pmatrix}
\Id_{\mathcal F_w}&-V^\dagger\\
-V&\Id_{\mathcal F_{w'}}
\end{pmatrix},
\]
where $V:\mathcal F_w\to\mathcal F_{w'}$ is unitary.  Let $\Gamma_{\mathrm{safe}}$ be the graph
whose vertices are $\mathcal W_{\mathrm{safe}}$ and whose edges are the transition blocks with
both endpoints in $\mathcal W_{\mathrm{safe}}$.

\begin{lemma}[Clock-word classification and transition paths]
Every safe source word contains exactly one active label, and the number of safe words is at most
$16N(N+1)^2\le m_\star$. The safe transition graph has one canonical boundary-to-boundary
component, which carries $q=n$ qubit labels and reproduces the verifier history. Every other
component is a path with at most $m_\star$ vertices and has a transition from an endpoint to the
rejected-label sector.
\end{lemma}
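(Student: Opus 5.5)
The plan is to work directly from the label structure of Nagaj's $d=11$ source schedule, as normalized in Lemma~\ref{lem:source-normalization}, and to split the argument into three parts: classify the safe words, describe the transition edges among them, and analyse the resulting components.

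\emph{Classifying safe words.} I would first show that the whole-fibre clock and boundary checks force exactly one active label. The checks include penalties on forbidden adjacent label pairs, such as two active labels in neighbouring positions or an active label missing between a ``done'' and a ``not yet'' region. A left-to-right scan of a safe word then shows that its label pattern is a regular language, of the form (done-region)(one active site)(not-yet region), with finitely many local decorations. Counting the choices gives the bound on safe words:
\begin{itemize}
\item at most $N$ positions for the active label;
\item a constant number, at most $16$, of active types and phases;
\item at most $(N+1)^2$ choices for where the sweep-direction and turn markers sit relative to it.
\end{itemize}
This yields $|\mathcal W_{\mathrm{safe}}|\le16N(N+1)^2\le m_\star$. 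I would make the two boundary markers explicit, since they determine the $(N+1)^2$ factor.

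\emph{Transition edges.} Next I would list, for each safe word, the transition blocks that touch it. The transition rules of the schedule are deterministic both forward and backward: each local rule moves the unique active label one bond, and at most one rule applies in each direction. So every vertex of $\Gamma_{\mathrm{safe}}$ has degree at most two, and, because the clock advances monotonically, the graph contains no cycles. Every component is therefore a path of length at most $|\mathcal W_{\mathrm{safe}}|\le m_\star$.

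\emph{The canonical component.} This is the path starting at the boundary word, with the active label at site $0$ and $q=n$ qubit labels laid out. Following it edge by edge, using the counts from the proof of Lemma~\ref{lem:source-normalization}, reproduces the $L$ transitions of the history. The unitaries $V$ along the path are the scheduled gates or identities, and the path ends at the final-sweep boundary word.

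\emph{Other components.} For any other path I would argue that at least one endpoint is not a legitimate boundary word. Such an endpoint is a word where the deterministic rule would still fire, but the word it leads to fails a whole-fibre check. Examples are a qubit count different from $n$, or a misplaced turn marker that runs into the chain edge. The transition block from that endpoint therefore connects into $\mathcal S^\perp$.

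The main obstacle I expect is this last step: showing that every noncanonical path actually has an edge leaving the safe sector, rather than terminating cleanly at two ends that both look like boundaries. My first attempt would be an invariant argument. The canonical component is characterized as the unique one whose words have exactly $n$ qubit labels and the marker pattern matching a reachable schedule phase. For any other safe word, I would push the active label to the chain boundary in the direction the rules force; a violated invariant is exposed there by the boundary projectors, which lie in $D_{\mathrm{full}}$. If that fails, I would fall back to a case analysis by the active label's type $A,M,T,B$.
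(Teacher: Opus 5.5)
\paragraph{Main gap: the exit from every noncanonical component.} The step you flag as the main obstacle is the substance of the lemma, and your proposal leaves it open. The strategy you sketch for it would also not close it, because the paper does not detect noncanonical components at the chain ends.

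For $q\ge1$ the paper defines an explicit potential $\rho=(4q+6)a+\phi$. Here $a$ counts completed two-site shifts and $\phi$ is a within-shift phase assigned to the stages $\mathrm{PRE},\mathrm{POST},\mathrm{END},\mathrm{TURN},\mathrm{PUSH},\mathrm{BOUNCE},\mathrm{SHIFT}$. It then checks three things: every safe transition changes $\rho$ by one; every stage except the $a=0$ left-boundary word $\mathrm{PRE}_A(0)$ has a unique predecessor; and every stage except $\mathrm{END}$ at $\partial_R$ has a unique selected successor. Components are therefore $\rho$-intervals, and one without a boundary transition must run from the left-boundary $A$-word to the right-boundary $T$-word.

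The condition $q=n$ is then forced in the bulk by the position-dependent clock selectors. For $q<n$, the $q$th $A$-half-step produces $xAO$ at bond $2q-1$, where the sweep may not end. For $q>n$, the $n$th half-step produces $xAI$ at bond $2n-1$, where it must end. Your plan of pushing the active label to the chain boundary and letting the boundary projectors expose the violation misses this, because the endpoint condition $a+q=nK$ is satisfiable with $q\ne n$. Characterizing the canonical component as the one with $n$ qubit labels presupposes exactly what has to be shown.

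You also need the words outside the $\rho$ parametrization. The $q=0$ words form subpaths of $\mathtt{XT}\leftrightarrow\mathtt{To}\leftrightarrow\mathtt{Bo}\leftrightarrow\mathtt{xB}\leftrightarrow\mathtt{xT}\leftrightarrow\mathtt{TO}$, whose outer endpoints exit through forbidden $\mathtt{BO}$ or $\mathtt{XB}$ pairs or through the right-boundary check. The isolated contexts $ITo$ and $xTI$ must be handled as one-vertex paths.

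\paragraph{Further unjustified steps.} Your acyclicity argument says each rule moves the active label one bond and the clock advances monotonically. This is false for the active-label position. The turn $xTO\leftrightarrow xBO$ changes the label in place, and the return sweep moves the active label leftward. You therefore need an abstract potential such as $\rho$.

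Your count matches $16N(N+1)^2$ only numerically. There are just four active labels $A,M,T,B$. The remaining factor $4(N+1)^2$ comes from writing the done prefix as $X^\alpha(xI)^ux^\varepsilon$ and the ready suffix as $o^\eta(Io)^vO^\beta$, each with at most $2(N+1)$ choices at a fixed active position. It does not come from sweep-direction or turn markers.

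Finally, proving exactly one active label needs the boundary admissibility rules as well as the adjacent-class rules: the left boundary admits only $X,x,A,B,T$ and the right only $O,T$. Without these you cannot exclude all-done and all-ready words.
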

\begin{proof}
The decoded alphabet is partitioned by site parity.  The done classes are
$\mathsf D_E=\{X,x\}$ and $\mathsf D_O=\{X,I\}$, the ready classes are
$\mathsf R_E=\{I,O\}$ and $\mathsf R_O=\{O,o\}$, and the active class is
$\mathsf A=\{A,M,T,B\}$.  Here $E$ and $O$ mean even and odd sites, and all letters in these
sets are literal decoded labels.  The whole-fibre clock check permits only the adjacent class
transitions
\[
\mathsf D\to\mathsf D,\qquad
\mathsf D\to\mathsf A,\qquad
\mathsf A\to\mathsf R,\qquad
\mathsf R\to\mathsf R.
\]
Alternating site parity then fixes the admissible transition once the endpoint bit and active
position are chosen.  The left boundary permits $X,x,A,B,T$, and the right boundary permits
$O,T$.  These rules force every safe word to have class pattern
\[
\mathsf D^{p}\mathsf A\mathsf R^{r}
\]
for some $p,r\ge0$, and hence to contain exactly one active label.  For counting purposes its done
prefix lies
in the family $X^\alpha(xI)^u x^\varepsilon$, and its ready suffix has the form
$o^\eta(Io)^vO^\beta$, where $\varepsilon,\eta\in\{0,1\}$ and all exponents are nonnegative.
For a fixed active position, $(\alpha,\varepsilon)$ determines $u$ when compatible, and
$(\beta,\eta)$ determines $v$; each side therefore has at most $2(N+1)$ choices.  Hence
\[
4N\,[2(N+1)]\,[2(N+1)]
=16N(N+1)^2\le m_\star.
\]

\par\smallskip\noindent\emph{Transition components and canonical-path uniqueness.}
The transition blocks preserve the number $q$ of qubit-carrying labels.  The bulk reversible rule
patterns are adapted from the rules displayed in~\cite[Eqs.~(4.149)--(4.162)]{Nagaj2008}; the
decoded-label canonicalization, parity alphabets, boundary predicates, and residue selectors used
here belong to the present reconstruction.  Write $\partial_L,\partial_R$ for the absence of the
left and right boundary neighbor, respectively, and let $\ell\in\{A,M\}$.  The stages
$\mathrm{PRE}$ and $\mathrm{POST}$ are the two half-steps of the forward sweep;
$\mathrm{END}$ marks its endpoint, $\mathrm{TURN}$ reverses direction, $\mathrm{PUSH}$ carries
out the return sweep, and $\mathrm{BOUNCE}$ and $\mathrm{SHIFT}$ complete the handoff.  In the
templates below, juxtaposition gives consecutive decoded labels, braces give alternatives at one
site, and internal qubit values are suppressed.  Upper- and lower-case glyphs are distinct; in
particular, $O$ and $o$ are labels, not the numeral zero.  The $\mathrm{PRE}_\ell(j)$ and
$\mathrm{POST}_\ell(j)$ neighborhoods are $\{\partial_L,X,I\}\ell o$ and
$x\ell\{I,O\}$, with $0\le j<q$; the remaining neighborhoods are
\[
\begin{gathered}
\mathrm{END}:\{\partial_L,X,I\}T\{O,\partial_R\},\quad
\mathrm{TURN}_T:xTO,\quad \mathrm{TURN}_B:xBO,\\
\mathrm{PUSH}_{O,h}:\{\partial_L,X,I\}Bo,\quad
\mathrm{PUSH}_{E,h}:xBI,\quad
\mathrm{BOUNCE}:\{\partial_L,X\}To,\quad
\mathrm{SHIFT}:XT\{I,O\}.
\end{gathered}
\]
Here $j$ counts completed $xI$ pairs on the done side, while $h$ counts $Io$ pairs on the ready
side, with $0\le h\le q$ for $\mathrm{PUSH}_{O,h}$ and $1\le h\le q$ for
$\mathrm{PUSH}_{E,h}$; the subscripts record the active-site parity.  Let $a\ge0$ count completed
two-site clock shifts, so the initial all-$X$ block has length $2a$.  The local pair rules make the
global word unique once $a$, $q$, and the stage parameter are fixed.

For $q\ge1$, define the within-shift phase by
\[
\begin{alignedat}{3}
\phi(\mathrm{PRE}_\ell(j))&=2j, \quad&
\phi(\mathrm{POST}_\ell(j))&=2j+1, \quad&
\phi(\mathrm{END})&=2q,\\
\phi(\mathrm{TURN}_T)&=2q+1, &
\phi(\mathrm{TURN}_B)&=2q+2, &
\phi(\mathrm{BOUNCE})&=4q+4,\\
\phi(\mathrm{PUSH}_{O,h})&=2q+3+2h, &
\phi(\mathrm{PUSH}_{E,h})&=2q+2+2h, &
\phi(\mathrm{SHIFT})&=4q+5.
\end{alignedat}
\]
Set $\rho=(4q+6)a+\phi$, and for a $\mathrm{PRE}/\mathrm{POST}$ stage put $c=a+j$.
Every selected $\mathrm{PRE}_\ell(j)\to\mathrm{POST}_\ell(j)$ step remains in
$\mathcal W_{\mathrm{safe}}$.  The remaining branch selectors and physical-endpoint condition are
\[
\begin{array}{@{}l@{\quad}c@{\quad}c@{}}
\text{selected step or event}&\ell=A&\ell=M\\ \hline
\mathrm{POST}_\ell(j)\to\mathrm{PRE}_\ell(j+1),\ j<q-1
  &c\not\equiv n-1\pmod n&\text{always}\\
\mathrm{POST}_\ell(q-1)\to\mathrm{END}
  &c\equiv n-1\pmod n&c\not\equiv n-1\pmod n\\
\mathrm{SHIFT}\text{ at }a\to\mathrm{PRE}_\ell(0)\text{ at }a+1
  &a+1\equiv0\pmod n&a+1\not\equiv0\pmod n\\
\mathrm{END}\text{ meets }\partial_R
  &\multicolumn{2}{c}{a+q=nK.}
\end{array}
\]
If an indicated $\mathrm{POST}$ congruence fails, the selected transition has its other endpoint in the
rejected-label sector.  At $\mathrm{SHIFT}$ the unselected active label is already outside
$\mathcal W_{\mathrm{safe}}$.  When $\mathrm{END}$ does not meet $\partial_R$, it continues uniquely
to $\mathrm{TURN}_T$; the subsequent turn, push, and bounce steps are residue-independent and unique.

Following the rules backward, every $\mathrm{POST}$, $\mathrm{END}$, $\mathrm{TURN}$,
$\mathrm{PUSH}$, $\mathrm{BOUNCE}$, and $\mathrm{SHIFT}$ stage has a unique predecessor.  The
predecessor of $\mathrm{PRE}_\ell(j)$ for $j>0$ is $\mathrm{POST}_\ell(j-1)$;
$\mathrm{PRE}_\ell(0)$ at $a>0$ has a $\mathrm{SHIFT}$ predecessor, whereas the left boundary at
$a=0$ admits only $\mathrm{PRE}_A(0)$.  Thus every safe transition changes $\rho$ by one.  Call a
selected transition from a safe endpoint to the rejected-label sector a boundary transition.
Each component described above is an integer interval in $\rho$, and the only possible interval
without a boundary transition joins the two physical endpoints.

The exceptional contexts $I\,T\,o$ and $x\,T\,I$ have no neighbors in
$\Gamma_{\mathrm{safe}}$ and have transition blocks to the rejected-label sector, so each is a
one-vertex path.  When $q=0$, after the forced outer tails are suppressed, the adjacent pairs
$\mathtt{XT}$, $\mathtt{To}$, $\mathtt{Bo}$, $\mathtt{xB}$, $\mathtt{xT}$, and $\mathtt{TO}$
represent, respectively, $\mathrm{SHIFT}$, $\mathrm{BOUNCE}$, $\mathrm{PUSH}_{O,0}$,
$\mathrm{TURN}_B$, $\mathrm{TURN}_T$, and $\mathrm{END}$.  Their reversible selected transitions
form subpaths of
\[
\mathtt{XT}\leftrightarrow\mathtt{To}\leftrightarrow\mathtt{Bo}
\leftrightarrow\mathtt{xB}\leftrightarrow\mathtt{xT}\leftrightarrow\mathtt{TO},
\]
where $\leftrightarrow$ denotes the transition block available in both directions.  The two outer
branch endpoints are
\[
\begin{aligned}
w_{\mathrm{even}}(a)&=\mathtt{X}^{2a}\mathtt{T}\mathtt{O}^{\,N-2a-1},
&0\le a\le nK,\\
w_{\mathrm{odd}}(a)&=\mathtt{X}^{2a+1}\mathtt{T}\mathtt{O}^{\,N-2a-2},
&0\le a<nK.
\end{aligned}
\]
For $0<a<nK$, applying $\mathtt{XT}\to\mathtt{XB}$ to the final $\mathtt{XT}$ pair of
$w_{\mathrm{even}}(a)$ creates a forbidden $\mathtt{BO}$ pair.  At $a=0$, the left-boundary
transition $\mathtt{T}\to\mathtt{B}$ creates the same pair; at $a=nK$, the former transition
creates a final $\mathtt B$, rejected by the right boundary.  From $w_{\mathrm{odd}}(a)$, the
transition $\mathtt{TO}\to\mathtt{BO}$ leaves a preceding forbidden $\mathtt{XB}$ pair.  Hence
every $q=0$ component has a selected transition to the rejected-label sector.

An interval without a boundary transition must begin at the admitted left-boundary word with
active label $A$ and end at the right-boundary word with active label $T$.  If $q<n$, the $q$th
$A$-propagation half-step produces $xAO$ at bond $2q-1$, where the local clock forbids the sweep
from ending before bond $2n-1$; if $q>n$, the $n$th half-step produces $xAI$ at bond $2n-1$,
where the local clock requires the sweep to end.  In either case a transition block leaves
$\mathcal W_{\mathrm{safe}}$, so $q=n$.  The activation selector makes the active label at the
start of sweep $a$ equal to $A$ exactly when $a\equiv0\pmod n$, and equal to $M$ otherwise.  Thus
an $A$ sweep has $a=rn$ for a unique $0\le r<K$; its $j$th internal odd-bond transition,
$0\le j<n-1$, lies on bond $2a+2j+1=2rn+2j+1$ and carries the scheduled adjacent operation at
position $(r,j)$ of Lemma~\ref{lem:source-routing}.  Its end transition lies on bond
$2(r+1)n-1$.  The intervening $M$ sweeps carry the identity messenger rule and use the
complementary end selector.  Therefore the unique component without a boundary transition applies
the scheduled operations in lexicographic order from $(0,0)$ through $(K-1,n-2)$ between the two
physical boundary words.  Consequently, $\Gamma_{\mathrm{safe}}$ has one canonical history path
with no boundary transition; every other component is a path with at most $m_\star$ vertices and
has a transition from an endpoint to the rejected-label sector.
\end{proof}

\par\smallskip\noindent\emph{The noncanonical-sector gap.}
Let the noncanonical block be the orthogonal sum of all reducing components other than the
canonical history path.  Let $\psi$ lie in this block and write
$E=\langle\psi,H_{\mathrm{rec}}\psi\rangle$.  Choose one such boundary transition at an endpoint of
each noncanonical path.  The source Hamiltonian contains
\[
T_{\mathrm{tr}}=5nK+6nK+K(n-1)+1=6N-K-5<7N
\]
transition blocks, so a rejected label fibre is incident to at most
$D=6N-K-5$ chosen boundary transitions.  If its amplitude is $y$ and the incident safe endpoint
amplitudes are $x_i$, let $U_i$ be the unitary transport from the rejected fibre to the $i$th
endpoint fibre.  For $r\le D$, the corresponding part of the quadratic form is bounded below by
\[
E_y=\lVert y\rVert^2+\frac12\sum_i\lVert x_i-U_i y\rVert^2.
\]
After conjugating by the endpoint unitaries and completing the square,
\[
E_y\ge\frac1{r+2}\sum_i\lVert x_i\rVert^2.
\]
The loose bound $r+2<14N$ therefore gives
\[
\sum_{P}\lVert x_P\rVert^2\le14N E,
\]
where $x_P$ is the chosen endpoint amplitude of the noncanonical path $P$.

Gauge away the unitary transports on a component of $\Gamma_{\mathrm{safe}}$ with amplitudes
$z_0,\ldots,z_{m-1}$ and differences $d_j=z_{j+1}-z_j$. Telescoping and
Cauchy--Schwarz give
\[
\sum_j\lVert z_j\rVert^2
\le2m\lVert z_0\rVert^2+m^2\sum_j\lVert d_j\rVert^2.
\]
Each propagation block contributes $\frac12\lVert d_j\rVert^2$. Summing over the disjoint
components and adding the $\mathcal S^\perp$ norm controlled by $D_{\mathrm{full}}$ yields
\[
\lVert\psi\rVert^2\le
(1+28Nm_\star+2m_\star^2)
\langle\psi,H_{\mathrm{rec}}\psi\rangle
\]
on the noncanonical reducing block. Its least eigenvalue is at least
$G_{\mathrm{ill}}^{-1}$ for
\[
G_{\mathrm{ill}}=1+28Nm_\star+2m_\star^2.
\]

\par\smallskip\noindent\emph{The canonical history-path gap.}
On the canonical block, conjugate propagation by the ordered verifier circuit encoded by the
canonical path. The result is one half
of the path-graph Laplacian on $L+1$ vertices, whose first nonzero eigenvalue obeys
\[
1-\cos\frac\pi{L+1}\ge\frac2{(L+1)^2}.
\]
The temporally unique input projector terms and the final output projector term have orthogonal
clock supports.  Let $U$ denote the verifier unitary obtained by composing the gates along the
canonical path, and let $P_{\mathrm{reject}}$ denote the rejecting-output projector on the
verifier registers.  After pulling the input and output projectors back to time zero, the input
sum dominates the projector $P_{\mathrm{badAnc}}$ onto improperly initialized ancillas.  With
$P_0=1-P_{\mathrm{badAnc}}$ and $R=U^\dagger P_{\mathrm{reject}}U$, constant soundness gives
\[
P_0RP_0\succeq\frac12P_0.
\]
The two-projector angle formula then gives
$P_{\mathrm{badAnc}}+R\succeq(1-1/\sqrt2)\Id\succeq\frac14\Id$. A normalized vector in
the propagation kernel consequently has diagonal expectation at least $1/[4(L+1)]$. 
Kitaev's geometric lemma~\cite{Kitaev2002} for the propagation and diagonal kernels, together with
$2\sin^2(\theta/2)\ge\sin^2(\theta)/2$, gives
\[
H_{\mathrm{rec}}|_{\mathrm{can}}
\succeq\frac1{4(L+1)^3}\Id
=\frac1{G_{\mathrm{can}}}\Id,\qquad
G_{\mathrm{can}}=4(L+1)^3.
\]

\par\smallskip\noindent\emph{Combination of the reducing sectors.}
The canonical path and its orthogonal complement are reducing subspaces. Hence
\[
H_{\mathrm{rec}}\succeq
\min\{G_{\mathrm{ill}}^{-1},G_{\mathrm{can}}^{-1}\}\Id
\succeq\frac1{G_{\mathrm{ill}}+G_{\mathrm{can}}}\Id,
\]
which is the claimed value of $G_{\mathrm{gap}}$. On a YES instance, the same canonical
history is annihilated by propagation, initialization, and output terms, giving the stated
kernel.
\end{proof}

\section{Sparse realizations of Hermitian and nilpotent supercharges}

\subsection{Listed Hermitian supercharges}
\label{app:n1-hermitian-relation}

\begin{lemma}[Sparse relation for a listed Hermitian supercharge]
\label{lem:n1-hermitian-sparse-relation}
Let $x$ be a well-formed promised input of
Definitions~\ref{def:1glsp-presentation}--\ref{def:1glsp}, of physical length $S$.  There is a
uniformly specified exact sparse relation $R_x$ such that
\[
\ker R_x\cong\ker Q_x
\]
with equal nullity.  On every NO input,
\[
\sigma_{\min}(R_x)\ge\frac1{P_{\rm rel}(S)},
\]
where
\[
P_{\rm rel}(S)=
4000S^5+8000S^4+4220S^3+220S^2+201S+200.
\]
The construction preserves the order and multiplicity of the local terms and permits exact
cancellation between distinct encoded entries and terms.
\end{lemma}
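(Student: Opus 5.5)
The plan is to build $R_x$ as a stacked relation in which the global vector $v$ is the domain variable and each nonzero coefficient occurrence is realized through the multiplier of Lemma~\ref{lem:stable-quadratic-dyadic-multiplier}. The domain of $R_x$ consists of the full Fock-space amplitudes $v\in(\mathbb C^2)^{\otimes n_{\mathrm{mode}}}$ together with the Horner and factor history variables of every multiplier instance. There is one instance for each triple (term $a$, local entry $(\beta,\gamma)$ of $Q_a$, global column basis state $\ket y$ on which that entry acts nontrivially). The source of that instance is the coordinate $v_y$ multiplied by the CAR sign. This sign is computed exactly from the declared support and site-major order, as in the proof of Proposition~\ref{prop:target-list-admissible}, but here for the listed local entries of the Hermitian terms.

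The codomain has two kinds of residual rows. The first kind are the multiplier residuals $r_H,r_F$ of each instance. The second kind is one aggregated equation row for each global basis state $\ket z$. That row states that the sum of the endpoints $y$ of all instances targeting $z$ vanishes. Because summation happens at the target before any test, distinct terms and entries may cancel exactly, as Remark~\ref{rem:intrinsic} requires, and order and multiplicity are preserved because every occurrence keeps its own label. Row and column access are sparse: an occurrence touches $O(S)$ history rows, and each target row collects at most $m\cdot 2^{22}$ endpoints since arity is at most eleven. Here $m$ is the number of terms $Q_a$, as in Definition~\ref{def:1glsp-presentation}.

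For the kernel identity, a zero residual forces every history to be the unique one from Lemma~\ref{lem:stable-quadratic-dyadic-multiplier}, with endpoint exactly $\alpha\cdot\mathrm{sign}\cdot v_y$. The target equations then read $(Qv)_z=0$ for every $z$. Hence the projection $(v,\text{histories})\mapsto v$ is an isomorphism $\ker R_x\to\ker Q_x$, with inverse given by filling in the histories. Nullity is therefore equal.

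For the NO bound, take a unit vector $u=(v,h)$ and let $r=R_xu$. The history estimates of the multiplier lemma give two things. First, each endpoint differs from its exact value by at most $20(S+1)\lVert r_{\mathrm{occ}}\rVert$. Second, $\lVert h\rVert$ is controlled by $100(S+1)(\lVert v\rVert+\lVert r\rVert)$ per instance. Since the number of instances targeting each $z$, or sourced at each $y$, is at most $S\cdot 2^{22}$, a Cauchy--Schwarz over occurrences gives
\[
\lVert Qv\rVert\le\lVert r_{\mathrm{eq}}\rVert+\mathrm{poly}(S)\lVert r\rVert .
\]
The NO promise gives $\lVert Qv\rVert\ge\sqrt\delta\lVert v\rVert\ge S^{-1/2}\lVert v\rVert$. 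Combining these with the history bound gives
\[
1=\lVert u\rVert^2\le \mathrm{poly}(S)\lVert r\rVert^2 ,
\]
and tracking the constants should reproduce the stated $P_{\rm rel}(S)$. I expect the main obstacle to be this bookkeeping. The multiplier estimates must be summed over overlapping occurrences without losing a factor of the Hilbert-space dimension. That requires bounding the per-coordinate occurrence degree by the occurrence count and not by the dimension, and keeping the $\delta^{-1/2}$ dependence, rather than $\delta^{-1}$, so that the degree of the polynomial comes out as five.
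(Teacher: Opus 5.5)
Your architecture matches the paper's proof. It uses one chain from Lemma~\ref{lem:stable-quadratic-dyadic-multiplier} per occurrence (a listed nonzero entry together with a compatible source word) and one signed aggregation row per global target word. The kernel isomorphism is the projection $(\psi,h)\mapsto\psi$, and the NO bound is assembled from the endpoint estimate, the promise $Q_x^2\succeq G_I^{-1}\Id$, and the history estimate. You attach the CAR sign at the multiplier source, while the paper puts $\chi(o)$ in the aggregation row; this is harmless, because it only flips the signs of the Horner coefficients.

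The step that fails as written is the occurrence-degree count, and this count is exactly what fixes $P_{\rm rel}$. You allow up to $S\cdot 2^{22}$ occurrences per source or target word. Then the endpoint-to-target aggregation map can have norm of order $2^{11}\sqrt S$, so the final polynomial carries factors of order $2^{11}$ or $2^{22}$ and cannot equal $4000S^5+\cdots$.

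The missing observation is that occurrences come only from the $N_{\rm rec}$ explicitly listed nonzero entries, and each such entry occupies at least one input bit, so $N_{\rm rec}\le S$. A fixed entry $e$ has at most one occurrence with a given source word, namely the one with $z|_{J}=v_e$. It also has at most one with a given target word, since the target determines the source by resetting the support to $v_e$. Hence every source and target coordinate has degree at most $D=\max(1,N_{\rm rec})\le S$.

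With this count, and with $|\alpha_e|\le1$ from $\lVert Q_a\rVert\le1$ so that the multiplier lemma applies, bound each of $\sqrt D$, $\sqrt{G_I}$ and $\sqrt S$ crudely by $S$. You get $\lVert Q_x\psi\rVert\le C(S)\lVert r\rVert$ with $C(S)=1+20S(S+1)$, then $\lVert\psi\rVert\le SC(S)\lVert r\rVert$, and then $\lVert h\rVert\le 200(S+1)\bigl(S^2C(S)+1\bigr)\lVert r\rVert$. The sum of the last two coefficients is exactly $P_{\rm rel}(S)$.

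This also corrects your diagnosis of the degree. Five is not a sharp target that depends on keeping $\delta^{-1/2}$ rather than $\delta^{-1}$. The paper's bound $\sqrt{G_I}\le S$ is numerically the same as using $\delta^{-1}=G_I\le S$, and the degree five comes precisely from replacing every square root by $S$. Keeping the square roots would give a smaller exponent.
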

\begin{proof}
Enumerate the encoded nonzero local matrix entries as
$e=0,\ldots,N_{\rm rec}-1$.  Since every entry occupies at least one physical input bit,
$N_{\rm rec}\le S$.  An entry $e$ belongs to an ordered term $Q_a$, has declared support
$J_a=(j_1,\ldots,j_r)$, declared local basis order, local row and column words $u_e,v_e$, and
coefficient
\[
\alpha_e=\frac{a_e+b_e\sqrt2+i(c_e+d_e\sqrt2)}{2^{t_e}}.
\]

Let $J_a^\uparrow$ be the increasing rearrangement of the support and apply the same coordinate
permutation to $u_e,v_e$.  The local basis words are restrictions of the fixed site-major
occupation basis, not declared-order wedge vectors, so this normalization introduces no extra
phase in $\alpha_e$.  For a global occupation word $z$, entry $e$ has an occurrence precisely
when $z|_{J_a^\uparrow}=v_e^\uparrow$.  Apply the graded matrix unit
\[
E_{J_a^\uparrow}(u_e^\uparrow,v_e^\uparrow)=
\left(\prod_{q=1}^{r}(c_{j_q^\uparrow}^\dagger)^{(u_e^\uparrow)_q}\right)
\left(\prod_{q=1}^{r}(1-n_{j_q^\uparrow})\right)
\left(\prod_{q=r}^{1}(c_{j_q^\uparrow})^{(v_e^\uparrow)_q}\right).
\]
Sequential CAR action either vanishes or produces a unique target occupation word $\tau(e,z)$
and a sign $\chi(e,z)\in\{+1,-1\}$.  The sign is computed in the global site-major order and
remains separate from the input coefficient.

Write
\[
\mathcal O=\{(e,z):z|_{J_a^\uparrow}=v_e^\uparrow\}
\]
for the occurrence set.  For a fixed source or target word there are at most $N_{\rm rec}$
occurrences; put $D=\max(1,N_{\rm rec})\le S+1$.  Entries belonging to repeated terms or repeated
local tables retain different ordinals.  No multiplicity is removed before target aggregation.

Because $\lVert Q_a\rVert\le1$, every local matrix entry satisfies $|\alpha_e|\le1$.  The signed
coordinates and the binary representation of $t_e$ occupy at most $S$ bits.  Apply
Lemma~\ref{lem:stable-quadratic-dyadic-multiplier} to every occurrence $o=(e,z)$.  It introduces
a unique auxiliary history $h_o$, residual $g_o$, and endpoint $y_o$ such that zero residual gives
$y_o=\alpha_e\psi_z$.  With
\[
P_{\rm end}(S)=20(S+1),
\qquad
P_{\rm hist}(S)=100(S+1),
\]
the shared estimates are
\[
|y_o-\alpha_e\psi_z|
\le P_{\rm end}(S)\lVert g_o\rVert,
\qquad
\lVert h_o\rVert
\le P_{\rm hist}(S)\bigl(|\psi_z|+\lVert g_o\rVert\bigr).
\]
The exponent-addressed history never executes a loop of length $t_e$, and the literal input
coefficient is not installed as a relation entry.

The domain of $R_x$ consists of the core vector $\psi$ and all multiplier histories.  Its codomain
consists of the multiplier residuals and one aggregation row for each global target word $i$.
Define
\[
a_i=\sum_{\substack{o\in\mathcal O\\\tau(o)=i}}\chi(o)y_o,
\qquad
R_x(\psi,h)=(g,a).
\]
If $R_x(\psi,h)=0$, every multiplier history is exact and $a_i=(Q_x\psi)_i$.  Thus
$\psi\in\ker Q_x$.  Uniqueness of the histories makes projection onto $\psi$ injective on
$\ker R_x$.  Conversely, each $\psi\in\ker Q_x$ has one exact history collection and makes every
aggregation row vanish.  This proves the kernel and nullity identity, including cancellation
between different encoded entries and terms.

For the NO bound, write $r=(g,a)=R_x(\psi,h)$.  The endpoint-to-target aggregation map has one
nonzero per occurrence column and at most $D$ per target row, hence norm at most $\sqrt D$.  Thus
\[
\lVert Q_x\psi\rVert
\le\lVert a\rVert+\sqrt D\,P_{\rm end}(S)\lVert g\rVert
\le C(S)\lVert r\rVert,
\qquad
C(S)=1+20S(S+1).
\]
The NO promise gives $Q_x^2\succeq G_I^{-1}\Id$ and $G_I\le S$, so
\[
\lVert\psi\rVert
\le\sqrt{G_I}\,\lVert Q_x\psi\rVert
\le S C(S)\lVert r\rVert.
\]
Each source coordinate feeds at most $S$ histories.  Summing the history estimate in squared norm
gives
\[
\lVert h\rVert
\le2P_{\rm hist}(S)\bigl(S^2C(S)+1\bigr)\lVert r\rVert.
\]
Consequently
\[
\lVert(\psi,h)\rVert
\le\Bigl(SC(S)+2P_{\rm hist}(S)(S^2C(S)+1)\Bigr)\lVert R_x(\psi,h)\rVert,
\]
and the polynomial in parentheses is exactly $P_{\rm rel}(S)$.  If $N_{\rm rec}=0$, then
$Q_x=0$, the input cannot satisfy the NO promise, and the relation has the full core space as its
kernel, as required.
\end{proof}

\begin{proposition}[Admissible Hermitian realization of the sparse relation]
\label{prop:n1-hermitian-realization}
For every promised listed Hermitian input $x$ of length $S$, a deterministic polynomial-time
algorithm specifies a Hermitian exact sparse matrix $A_x$ on
\[
m_A=32(S+1)
\]
qubits such that $\ker A_x\cong\ker Q_x$ with equal nullity.  On every NO input,
\[
\sigma_{\min}(A_x)
\ge\frac1{P_{\rm rel}(S)^2B_{\rm N1}(S)^2},
\qquad
B_{\rm N1}(S)=40(S+1)^2+2.
\]
The matrix has norm at most one, coefficients in $K_8$, polynomial row and column sparsity, and
independent clean exact row, column, value, and adjoint access.  It has one globally reduced common
denominator and coordinate height bounded numerically by a polynomial in $S$.
\end{proposition}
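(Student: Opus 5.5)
The plan is to apply the kernel-preserving Hermitian completion of Lemma~\ref{lem:cokernel-safe-hermitian-completion} to the sparse relation $R_x$ of Lemma~\ref{lem:n1-hermitian-sparse-relation}. We set
\[
A_x=\frac{K_{R_x}}{B_{\rm N1}(S)}\oplus\Id_{\mathrm{invalid}},\qquad
K_{R_x}=\begin{pmatrix}0&R_x^\dagger\\R_x&\Id\end{pmatrix}.
\]
The kernel statement $\ker A_x=\ker R_x\oplus\{0\}\cong\ker Q_x$ with equal nullity then follows by composing the two kernel identities. The invalid-label identity block contributes no kernel. The NO bound follows by substituting $P=P_{\rm rel}(S)$ and $B=B_{\rm N1}(S)$ into $\sigma_{\min}(K_R/B)\ge (P^2B^2)^{-1}$. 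For this substitution to be legitimate, $B_{\rm N1}(S)=40(S+1)^2+2$ must dominate $2L(S)+2$ for some valid bound $L(S)\ge\lVert R_x\rVert$.

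The first real step is therefore the norm bound $\lVert R_x\rVert\le 20(S+1)^2$. I would bound separately the blocks of $R_x$.
\begin{itemize}
\item The multiplier-residual rows come from Lemma~\ref{lem:stable-quadratic-dyadic-multiplier}. They have at most three entries from a fixed finite coefficient set, of modulus at most $2$. Each history column meets only adjacent rows.
\item Each source coordinate $\psi_z$ feeds at most $S$ histories, through at most $m\le 3S+5$ digit rows each.
\item The aggregation block has one $\pm1$ per occurrence column and at most $D\le S+1$ per row.
\end{itemize}
A Schur-test bound, $\lVert R\rVert^2\le(\max\text{row } \ell_1)(\max\text{column } \ell_1)$, then gives a bound of order $S^2$ with modest constants. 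I would only check that the constants fit under $20(S+1)^2$. If they do not, a polynomial of the same shape works, but the stated $B_{\rm N1}$ indicates that they do.

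Next I would fix the qubit count $m_A=32(S+1)$. Labels must encode a block flag, a type tag, an occurrence $(e,z)$ with $z$ a global occupation word on $n_{\rm mode}\le S$ modes, and a history stage index. The stage index is a binary label of width $O(S)$, since $t_e$ and the digit length use at most $S$ bits. Allotting fixed-width zero-padded fields to each component and summing gives at most $32(S+1)$ bits, and unused strings fall into the invalid identity block.

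Access is assembled as in Appendix~\ref{app:cokernel-safe-completion}. Row access to the domain half uses column access to $R_x$ with exact conjugation $(z_0,z_1,z_2,z_3)\mapsto(z_0,-z_3,-z_2,-z_1)$. Row access to the codomain half uses row access to $R_x$ plus the diagonal $1$. The row and column access to $R_x$ itself comes from:
\begin{itemize}
\item the CAR sign and target computation $\tau(e,z)$, $\chi(e,z)$, done by prefix-parity scans;
\item the addressed multiplier rows;
\item sorting and aggregation with exact zero suppression, followed by compute--copy--uncompute.
\end{itemize}

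Coefficients lie in the fixed finite set of the multiplier lemma together with $\pm1$. After dividing by $B_{\rm N1}$, the common denominator is at most $2B_{\rm N1}(S)$ before gcd reduction, and the height is constant times $B_{\rm N1}$. The norm is at most $1/2$ on the valid block and $1$ on the identity block.

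The main obstacle is the uniform clean access for the aggregation rows. A target word $i$ must enumerate all occurrences $(e,z)$ with $\tau(e,z)=i$. To do this, one inverts each graded matrix unit by loop over $e\le S$: check $i|_{J}=u_e$, reconstruct $z$ by replacing with $v_e$, and recompute the sign. Contributions to the same position must be aggregated exactly in $K_8$ while keeping multiplicity and cancellation. This has to be consistent between the row, column, and value circuits, so I would define a single validity predicate shared by all three.
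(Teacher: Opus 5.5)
Your proposal is correct and follows essentially the paper's route. The paper also bounds $\lVert R_x\rVert\le 20(S+1)^2$ from the row and column incidence structure (your Schur test gives only $O(S^{3/2})$, well within this) and then applies the kernel-preserving completion with identity blocks on invalid labels, stated as Theorem~\ref{thm:common-kernel-exact-sparse-transfer} for the single residual $B_1=R_x$, $\ell=20(S+1)^2$, $\delta_x=P_{\rm rel}(S)^{-2}$, so that $\beta=B_{\rm N1}(S)$; its access assembly (scanning the at most $S$ encoded entries) and its denominator and height bound $2B_{\rm N1}(S)$ match yours.
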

\begin{proof}
The deterministic total decoder validates the site-major allocation, ordered term list, support
and basis permutations, strictly ordered nonzero table entries, independent binary exponents,
unary gap word, and exact end of input. It accepts only the unique canonical representation.
A malformed encoding, an invalid logical label, or a failed multiplier guard selects one fixed
identity matrix.  A valid column label carries a type field distinguishing core-vector from
multiplier-history coordinates, and a valid row label carries a type field distinguishing
multiplier residuals from target-aggregation rows.  With $u=S+1$, recomputing all redundant
fields and requiring unused fields to vanish gives a canonical injection into $32u$ bits.

Let $R_x$ be the relation in Lemma~\ref{lem:n1-hermitian-sparse-relation}. Its row degree is at
most $\max(D,3)$; a core column enters at most $D(3S+5)$ multiplier rows, and every other
column has constant degree. Thus both sparsities are at most $10(S+1)^2$, and the incidence
bound gives
\[
\lVert R_x\rVert\le L_R(S):=20(S+1)^2.
\]
For each query, the access circuit scans the at most $N_{\rm rec}\le S$ encoded entries,
reconstructs compatible source or target labels, recomputes the CAR sign, and aggregates equal
outputs exactly. Addressed multiplier stages avoid a loop of numerical length $t_e$.
The output entries are copied before all decoding, selection, arithmetic, and sorting work is
reversed, giving independent clean row, column, value, and adjoint-value access.

Apply Theorem~\ref{thm:common-kernel-exact-sparse-transfer} to the single residual
$B_1=R_x$, with $\ell=L_R(S)$ and
$\delta_x=P_{\rm rel}(S)^{-2}$. Its integer normalization is
\[
\beta=2L_R(S)+2=B_{\rm N1}(S).
\]
The resulting Hermitian matrix, including the identity blocks assigned to invalid labels, is
$A_x$. It has
$\ker A_x\cong\ker R_x\cong\ker Q_x$, preserves nullity, and on a NO input satisfies
\[
\sigma_{\min}(A_x)\ge
\frac1{P_{\rm rel}(S)^2B_{\rm N1}(S)^2}.
\]
Before normalization the finite coefficient alphabet has denominator at most two; afterward a
common denominator and every integral-basis coordinate are bounded by
$2B_{\rm N1}(S)$. These numerical bounds, the $32u$ label width, and the access resources are
all polynomial in $S$, proving the proposition.
\end{proof}

\subsection{One-dimensional source for the nilpotent reduction}
\label{app:explicit-n2-reduction}

Begin with the normalized verifier supplied by Lemma~\ref{lem:ambient-normalization}.
The local source matrices form a finite family with the coefficient bounds stated in
Proposition~\ref{prop:explicit-n2-hardness-transfer}.  For the source parameters $n,K$ of
Lemma~\ref{lem:source-normalization}, Proposition~\ref{prop:ferm} gives
\[
N=2nK+1,\qquad M_0=N+1=2nK+2.
\]
Each bondwise-merged even projector $\Pi_c$ acts on at most ten consecutive encoded modes.  Assign its
anchored label Majorana $\Gamma_c$, put $q_c=\Pi_c\Gamma_c$, and set
$Q_{\min}=\sum_cq_c$.  Let $G_{\mathrm{gap}}$ be the explicit integer from
Proposition~\ref{prop:ambient-raw-gap}.  Lemma~\ref{lem:source-grouping} and
Theorem~\ref{thm:minimal-hardness} give, on NO instances,
\[
Q_{\min}^2\succeq
\frac{1}{64M_0G_{\mathrm{gap}}^2}\Id.
\]

Lemma~\ref{lem:source-routing} gives an exact gate sequence over $\mathcal G$ for each scheduled
two-site operation.  The resulting matrix belongs to the scheduled family
\[
I_4,\quad H_L,\quad H_R,\quad T_L,\quad T_R,
\quad \mathrm{CX}_{L\to R},\quad \mathrm{CX}_{R\to L}.
\]
For a transition with domain $A$, image $B$, and isometry $V:A\to B$, its graph projector is
\[
P_V=\frac12(P_A+P_B-V-V^\dagger).
\]
In particular, in the ordered domain/image basis for a two-qubit gate $U$,
\[
P_U=\frac12
\begin{pmatrix}
\Id_4&-U^\dagger\\
-U&\Id_4
\end{pmatrix},
\qquad
\Id_8-2P_U=
\begin{pmatrix}
0&U^\dagger\\
U&0
\end{pmatrix}.
\]
The negative cross sign follows directly from the vectors
$(\ket x-U\ket x)/\sqrt2$ spanning the graph-projector range.

The source construction places every transition, input, output, clock, boundary, and invalid-code
projector occurrence in its assigned group.  The raw multiplicity $m_b$ is $6$ on even bonds,
$7$ on odd nongate bonds, $8$ on odd gate bonds, and respectively $2$ and $1$ in the left and
right boundary groups.  The fixed group order starts with the left boundary group, continues through
the bond groups $0,1,\ldots,N-2$, and ends with the right boundary group, so there are $M_0=N+1$
bondwise-merged projectors.  Distinct transition blocks occupy disjoint ordered
label-pair sectors.  Within a block, exhaustive intersection with the diagonal predicates yields
only graph projectors, complete endpoint identity blocks, channelwise partial identity blocks,
residual diagonal identities, or the gate-input partial block
\[
\Id_8-\frac12
\begin{pmatrix}
Q&QU^\dagger\\
UQ&UQU^\dagger
\end{pmatrix},
\]
where $Q$ is integral and diagonal.  Orthogonal direct sums of these cases cover every local term
arising from the scheduled family.

\subsection{Exact reflections for the history-state projectors}
\label{app:source-reflections}

\begin{proof}[Proof of Lemma~\ref{lem:fieldtransfer}]
For a raw diagonal projector, phase each listed computational-basis atom.  An ordinary graph
projector is a direct sum of endpoint pairs, so its reflection is the corresponding product of
basis transpositions.  For a gate block, let $e$ distinguish the four-dimensional domain from
the four-dimensional image and put $D=C_e(U)=\operatorname{diag}(I_4,U)$.  In the packed basis,
\[
I-2P_U=D X_e D^\dagger.
\]
The exact scheduled word for $U$ supplies $D$ and $D^\dagger$.  Boundary projectors are one-pair
graph blocks, while input, output, clock, and invalid-code projectors are diagonal cases.

For a bondwise-merged group, first partition the ten-qubit bond basis into the final mutually orthogonal
support blocks: the right-invalid sector, valid diagonal atoms outside the selected transition
blocks, and the selected endpoint blocks.  Fix a basis packing of every endpoint block into three
one-qubit coordinates $(e,p,q)$: $e$ selects the two endpoints, while $(p,q)$ index the four
internal states at an endpoint.  Each endpoint block has one of the following forms.
If neither endpoint is diagonal, use its raw graph reflection.  If a complete domain, complete
image, or both are diagonal, phase the entire endpoint block by $-1$.  At the initial partial
block, transpose the $q=0$ endpoints and phase both $q=1$ endpoints; at the final partial block,
transpose the $q=1$ endpoints and phase both $q=0$ endpoints.  Residual invalid or forbidden
atoms are diagonal phases.

The remaining first-round gate/input partial block has a rank-six support.  In these packed
coordinates, set
\[
Q_{\mathrm{good}}=I_p\otimes\ket0\!\bra0_q,
\qquad D=C_e(U),
\]
and let
\[
P_{\ker}=D\bigl(\ket+\!\bra+_e\otimes Q_{\mathrm{good}}\bigr)D^\dagger,
\qquad \Pi_{\mathrm{part}}=I-P_{\ker}.
\]
Then its reflection factors exactly as
\[
I-2\Pi_{\mathrm{part}}
=D \widehat H_e\bigl(Z_eZ_q\,\mathrm{CZ}_{e,q}\bigr)\widehat H_eD^\dagger.
\]
Here $\mathrm{CZ}_{e,q}=\operatorname{diag}(1,1,1,-1)$ in the computational basis of the ordered
pair $(e,q)$.
A fixed basis permutation packs the physical endpoints into $(e,p,q)$; the other seven bond bits
enter only through an equality selector.  This also proves directly the gate-input matrix stated
above, without assuming that $U$ preserves a selected channel.

Let $P_\alpha$ denote the range projector in final support block $\alpha$.  These are the
orthogonal projectors supplied by the partition, not the generally nonorthogonal raw projectors.
Consequently
\[
\Pi:=\sum_\alpha P_\alpha,
\qquad
\prod_\alpha(I-2P_\alpha)=I-2\Pi,
\]
which is the reflection of the bondwise-merged projector.  A controlled reflection uses the same factors with one common
outer control.  In conjugated factors, the packing and $D,D^\dagger$ operations remain
uncontrolled and cancel in the control-zero sector; only the middle phase or transposition is
controlled.  Thus the resulting matrix is literally $\operatorname{diag}(I,I-2\Pi)$, with no
unrecorded global phase.

All phases, transpositions, and gatewise controls have fixed phase-sensitive words over
$\mathcal G$; in particular $Z=T^4$ and the displayed controlled phases reduce to the standard
exact Clifford--$T$ primitives.  The largest derived operation is an outer-controlled equality
phase on the ten data bits; its normalized multi-control construction uses eight clean
conjunction bits.  Every selector is computed, used only as a control, and reversed
before its input changes.  The same compute--use--uncompute order is retained under the outer
control, so all eight work qubits are initialized and returned exactly to zero in both control
sectors. The construction uses only these eight work qubits.
\end{proof}

Encode each $d=11$ source site in five even-weight modes.  Anchor one label mode for each
group at the leftmost coordinate of its projector window, with labels ordered after the
system mode at a shared coordinate.  Add two fresh terminal modes $a<b$.  The mode count is
\[
m=5N+M_0+2=6M_0-3.
\]
For each $q_c=\Pi_c\Gamma_c$, form $\mathcal Q_c=q_cs$ using
Lemma~\ref{lem:nilpotent-two-mode-extension}.  Its increasing support contains at most ten
system modes, one
label mode, and $a,b$.  Evaluate the complete local table in the declared occupation order by
applying operators right to left with the CAR prefix signs, and aggregate every entry exactly in
$\mathbb Q(\zeta_8)$.

Every matrix in the scheduled family has a representation with common denominator $2$ and
integral-basis coordinate height at most $2$, using
\[
\widehat H=\frac1{\sqrt2}
\begin{pmatrix}1&1\\1&-1\end{pmatrix},
\qquad
\frac1{\sqrt2}=\frac{\zeta_8-\zeta_8^3}{2}.
\]
The merge cases above use only diagonal identities, graph blocks, and the displayed gate-input
block.  For an entry of $UQU^\dagger$, at most four intermediate indices contribute, while
multiplication in $\mathbb Z[\zeta_8]/(\zeta_8^4+1)$ is a four-term signed convolution.
Bounding these contributions and the identity entry at common denominator $8$ gives
\[
\frac18\sum_{j=0}^3z_j\zeta_8^j,
\qquad |z_j|\le72.
\]
Multiplication by $\Gamma_c$ and $s$ only permutes entries and changes signs.  Taking the gcd of
$8$ and all table coordinates and dividing globally yields a positive reduced denominator
$h\le8$ without increasing the height.

With
\[
v=64M_0G_{\mathrm{gap}}^2,
\qquad u=1,
\qquad Z=m+v+128,
\]
append exactly $Z$ ordinary one-mode zero terms, each given by its complete local zero table.
They change no operator, gap, or support.  The final list has $M=M_0+Z$ terms.  Each encoded term
contributes at least one bit to the list, so $S\ge M\ge Z$ and
\[
m,M\le S,
\quad h\le8\le S,
\quad |z_j|\le72\le S,
\quad u,v\le S,
\quad \delta=1/v\ge S^{-1}\ge S^{-2}.
\]
The construction is polynomial because the table arity is fixed and $M_0$, $m$, $v$, and $Z$
are polynomial in the source length.  Lemma~\ref{lem:nilpotent-two-mode-extension} transfers
the zero mode and the NO lower bound.  The canonical normal-ordered form of
$\Id_{ab}$ is the empty word, so the Hamiltonian keeps range $19$ and realized span at most
$15$.

\section{Weighted spectral mapping and sharpness of the quadratic bound}
\label{app:weighted-spectral-mapping}

Write $\Lambda_j=\lVert h_j\rVert$. Functional calculus on
$0\le t\le\Lambda_j$ gives
$h_j\preceq\sqrt{\Lambda_j}\sqrt{h_j}$. For a unit vector $\psi$ on system and
ancilla, set
\[
g_j=\bra\psi\sqrt{h_j}\otimes\Id\ket\psi\ge0,
\qquad
\Gamma_c=\sum_jc_j\Gamma_j
\]
for $c\in\mathbb R^M$. The Clifford relations imply
\[
\{Q,\Id\otimes\Gamma_c\}=2\sum_jc_j\sqrt{h_j}\otimes\Id.
\]
When $\lVert c\rVert_2=1$, $\Gamma_c$ is a Hermitian unitary. Cauchy--Schwarz,
followed by the supremum over such $c$, therefore yields
$\lVert g\rVert_2\le\lVert Q\psi\rVert$. Hence
\[
\bra\psi H_0\otimes\Id\ket\psi
\le \sum_j\sqrt{\Lambda_j}g_j
\le \sqrt W\,\lVert g\rVert_2
\le \sqrt W\,\lVert Q\psi\rVert.
\]
Applying this inequality to a ground state of $Q^2$ proves
$E_0(Q^2)\ge E_0(H_0)^2/W$ when $W>0$.

If $H_0\phi=0$, positivity forces $\sqrt{h_j}\phi=0$ for every $j$, and thus
$Q(\phi\otimes\chi)=0$ for every ancilla state $\chi$. Conversely, $Q\psi=0$
forces every $g_j$ to vanish. Positivity then gives
$(\sqrt{h_j}\otimes\Id)\psi=0$ for all $j$, proving
$\ker Q=\ker H_0\otimes\Fcal_{anc}$. When $W=0$, every $h_j$ vanishes and the
separate zero case follows directly. Finally, applying the same argument to
$\widetilde h_j=w_jh_j$ proves the positive-weight statement in
Lemma~\ref{lem:spectral}.

For the first assertion of Proposition~\ref{prop:quad-tight}, take rank-one projectors
$h_1=\ket a\bra a$ and $h_2=\ket b\bra b$ on $\mathbb C^2$, with
$\braket ab=\cos t$ and $t\in(0,\pi/2]$. Since $\sqrt{h_i}=h_i$, a chirality basis
for the two ancillary Majoranas gives
\[
E_0(Q^2)=\min_{s=\pm1}\lambda_{\min}\!\bigl(H_0+s\,i[h_1,h_2]\bigr).
\]
Taking $\ket a=(1,0)$ and $\ket b=(\cos t,\sin t)$, both blocks have trace $2$
and determinant $\sin^4t$. Consequently,
\[
E_0(H_0)=1-\cos t=:\epsilon,
\qquad
E_0(Q^2)=1-\sqrt{1-\sin^4t}=2\epsilon^2+O(\epsilon^3).
\]
Thus $E_0(Q^2)/E_0(H_0)\to0$. This family rules out a universal linear bound;
it does not saturate the quadratic coefficient. For that second assertion, take the
one-term instance $h_1=\lambda\Id$. Then $W=\lambda$ and
$Q^2=h_1$, so $E_0(Q^2)=E_0(H_0)^2/W=\lambda$ exactly.

\section{Two-mode nilpotent extension}
\label{app:two-mode-nilpotentization}

Use the ordered occupation basis
$(\ket{00},\ket{01},\ket{10},\ket{11})$, where
$\ket{n_an_b}=(a^\dagger)^{n_a}(b^\dagger)^{n_b}\ket{00}$. Direct CAR evaluation gives
\[
s\ket{11}=\ket{00},
\qquad
s\ket{01}=\ket{10},
\qquad
s\ket{00}=s\ket{10}=0.
\]
It follows entry by entry that
$s^2=0$,
$ss^\dagger=\operatorname{diag}(1,0,1,0)$, and
$s^\dagger s=\operatorname{diag}(0,1,0,1)$. Thus the last two operators are
complementary projections, $\{s,s^\dagger\}=\Id_{ab}$, and $\lVert s\rVert=1$.
Both monomials in $s$ have even fermion parity.

The even operator $s$ commutes with every operator on the disjoint system. For Hermitian $Q$,
\[
(Qs)^2=Q^2s^2=0,
\qquad
\{Qs,(Qs)^\dagger\}
=Q^2(ss^\dagger+s^\dagger s)
=Q^2\otimes\Id_{ab}.
\]
This proves the full-space identities and the fourfold spectral multiplicity in
Theorem~\ref{thm:two-mode-nilpotentization}. If $k=\dim\ker Q$, then
$\operatorname{rank}(Qs)=2(d-k)$ because $\operatorname{rank}s=2$. Hence
\[
\dim\ker(Qs)=4d-2(d-k)=2d+2k,
\qquad
\dim H(Qs)=\dim\ker(Qs)-\operatorname{rank}(Qs)=4k.
\]
The coefficient, norm, and support statements follow term by term from
$\lVert q_cs\rVert=\lVert q_c\rVert\lVert s\rVert$ and from the two integer-coefficient
monomials of $s$. If $Q$ is odd, multiplication by the even $s$ preserves odd parity.
Sharing $a,b$ can destroy geometric locality of the terms, whereas the Hamiltonian identity
transfers any independently established locality of $Q^2$ without loss.

\paragraph{Nonnormal operators.}
For a square operator $A$ that need not be Hermitian, put $D=As$. The same multiplication gives
\[
DD^\dagger+D^\dagger D
=AA^\dagger\otimes ss^\dagger
 +A^\dagger A\otimes s^\dagger s.
\]
Because the two auxiliary projections are nonzero and complementary, this operator equals
$A^\dagger A\otimes\Id_{ab}$ exactly when $AA^\dagger=A^\dagger A$, that is, when
$A$ is normal. The Hermitian theorem above is the normal slice used in this paper; for a nonnormal
$A$, the displayed two-projection formula is the applicable Hamiltonian identity.


\printbibliography

\end{document}